%% file: main.tex
\documentclass[11pt]{article}

\input{preamble/math_commands}

\input{preamble/algorithm}
\input{preamble/preamble}

\input{preamble/acronyms}

\usepackage{setspace}

\usepackage[affil-it]{authblk}

\title{Aggregated Posterior Predictive Checks\\ for Generative Modeling}
\date{\today}          

\author[1]{Shweta Dutta}
\author[1]{Gemma Moran}
\affil[1]{Department of Statistics, Rutgers University}

\begin{document}

\maketitle

\begin{abstract}
\input{abstract}

\end{abstract}

\setstretch{1.2}

\input{1-intro}

\input{2-methods-alt}

\input{3-theory}
\input{4-experiments}

\input{5-discussion}

\bibliographystyle{apalike}
\bibliography{bib/references}

\appendix
\input{appendix}

\end{document}

%% file: preamble/math_commands.tex
\usepackage{amsmath,amsfonts,bm}
\usepackage{dsfont}

\def\1{\bm{1}}

\def\mX{{\bm{X}}}

\DeclareMathAlphabet{\mathsfit}{\encodingdefault}{\sfdefault}{m}{sl}
\SetMathAlphabet{\mathsfit}{bold}{\encodingdefault}{\sfdefault}{bx}{n}

\newcommand{\E}{\mathbb{E}}

\newcommand{\R}{\mathbb{R}}

\newcommand{\Var}{\mathrm{Var}}

\newcommand{\train}{\mathrm{train}}

\newcommand{\new}{\mathrm{new}}
\newcommand{\rep}{\mathrm{rep}}
\newcommand{\agg}{\mathrm{agg}}

\newcommand{\obs}{\mathrm{obs}}
\newcommand{\col}{\mathrm{col}}

%% file: preamble/algorithm.tex
\usepackage[algoruled,ruled,vlined]{algorithm2e}
\usepackage{listings}
\usepackage{fancyvrb}
\fvset{fontsize=\normalsize}

\makeatletter
\renewcommand{\SetKwInOut}[2]{%
  \sbox\algocf@inoutbox{\KwSty{#2}\algocf@typo:}%
  \expandafter\ifx\csname InOutSizeDefined\endcsname\relax
    \newcommand\InOutSizeDefined{}\setlength{\inoutsize}{\wd\algocf@inoutbox}%
    \sbox\algocf@inoutbox{\parbox[t]{\inoutsize}{\KwSty{#2}\algocf@typo:\hfill}~}\setlength{\inoutindent}{\wd\algocf@inoutbox}%
  \else
    \ifdim\wd\algocf@inoutbox>\inoutsize%
    \setlength{\inoutsize}{\wd\algocf@inoutbox}%
    \sbox\algocf@inoutbox{\parbox[t]{\inoutsize}{\KwSty{#2}\algocf@typo:\hfill}~}\setlength{\inoutindent}{\wd\algocf@inoutbox}%
    \fi%
  \fi
  \algocf@newcommand{#1}[1]{%
    \ifthenelse{\boolean{algocf@inoutnumbered}}{\relax}{\everypar={\relax}}%
    {\let\\\algocf@newinout\hangindent=\inoutindent\hangafter=1\parbox[t]{\inoutsize}{\KwSty{#2}\algocf@typo:\hfill}~##1\par}%
    \algocf@linesnumbered
  }}%
\makeatother

\SetKwInOut{KwInput}{input}
\SetKwInOut{KwOutput}{output}

%% file: preamble/preamble.tex
\usepackage[T1]{fontenc}
\usepackage{amsmath}
\usepackage{amsfonts}
\usepackage{amscd}
\usepackage{amssymb}
\usepackage{amsthm}
\usepackage{bm}
\usepackage{thm-restate}
\usepackage{float}
\usepackage{graphicx}
\usepackage{placeins}

\usepackage{enumerate}
\usepackage{enumitem}
\usepackage{tabularx}
\usepackage[usenames,dvipsnames]{xcolor}

\usepackage{url}

\usepackage{array}
\usepackage{booktabs,multirow,multicol}       

\usepackage[]{natbib}

\usepackage[colorlinks,linktoc=all]{hyperref}
\hypersetup{citecolor=blue}
\hypersetup{urlcolor=MidnightBlue}
\hypersetup{linkcolor=Blue}

\usepackage[nameinlink]{cleveref}
\creflabelformat{equation}{#1#2#3}
\crefname{equation}{eq.}{eqs.}
\Crefname{equation}{Eq.}{Eqs.}
\Crefname{section}{\S}{\S}
\Crefname{assumption}{Assumption}{Assumptions}

\usepackage[parfill]{parskip}

\DeclareRobustCommand{\parhead}[1]{\textbf{#1}~}

\usepackage[
  paper  = letterpaper,
  left   = 1.25in,
  right  = 1.25in,
  top    = 1.0in,
  bottom = 1.0in,
  ]{geometry}

\numberwithin{equation}{section}


\usepackage{caption}
\usepackage{subcaption}
\newtheorem{theorem}{Theorem}
\newtheorem{corollary}[theorem]{Corollary}
\newtheorem{definition}{Definition}

\newtheorem{result}{Result}
\newtheorem{assumption}{Assumption}

\newtheorem{proposition}{Proposition}

\newtheorem{remark}{Remark}

\newcommand{\defeq}{\mathrel{\mathop:}=}

\usepackage[most]{tcolorbox}
\newtcolorbox{longexample}[1][]{
  enhanced,
  breakable,
  colback=gray!5,
  colframe=black!60,
  boxrule=0.6pt,
  arc=2pt,
  left=8pt,
  right=8pt,
  top=8pt,
  bottom=8pt,
  before skip=1em,
  after skip=1em,
  fonttitle=\bfseries,
  title=Example,
  #1
}

\usepackage
[acronym,nowarn,section,nogroupskip,nonumberlist]{glossaries}
\setacronymstyle{long-sc-short}
\glsdisablehyper{}
\setglossarystyle{long3col}
\makeglossaries

%% file: preamble/acronyms.tex
\newacronym{HPC}{hpc}{holdout predictive check}
\newacronym{PPC}{ppc}{posterior predictive check}
\newacronym{HAPC}{hapc}{holdout aggregated posterior check}
\newacronym{APPC}{appc}{aggregated posterior predictive check}

%% file: abstract.tex

\glsresetall

Latent variable generative models are commonly fit using simple priors over latent variables, but draws from these priors often fail to produce realistic data. This failure is due to a mismatch between the prior and the \emph{aggregated posterior}, the distribution of latent variables induced by the fitted model and the data.  This mismatch is often viewed as evidence that the prior is misspecified and should be replaced. Alternatively, in modern generative models, a two-stage strategy is increasingly used where first, the model is fit, and second, the aggregated posterior is estimated \citep{van2017neural,rombach2022high}.  Synthetic data are then obtained by sampling from this aggregated posterior instead of the prior. To check such procedures, we introduce the \gls{APPC}. Theoretically, we establish sufficient conditions under which the \gls{APPC} is asymptotically calibrated. For probabilistic principal component analysis,  we show that the \gls{APPC} can remain calibrated under a misspecified latent prior when pervasive factors permit recovery of the signal space.  Experiments with variational autoencoders show that aggregated posterior sampling improves generation for heavy-tailed and clustered data relative to Gaussian prior sampling while performing comparably to models with more flexible latent priors.

%% file: 1-intro.tex
\glsresetall

\section{Introduction}

Latent variable generative models (LVGMs), both classical and deep variants, are widely used for capturing hidden or low-dimensional structure in observed data. LVGMs approximate an unknown data distribution $P_0$, given observed samples $\{\bm{x}_i\}_{i=1}^n$, $\bm{x}_i\in\mathcal{X}$. A general LVGM proceeds by introducing a latent variable $\bm{z}_i$ with prior $p(\bm{z})$, and specifying a conditional likelihood $p_{\theta}(\bm{x}_i|\bm{z}_i)$, yielding the marginal likelihood $p_{\theta}(\bm{x}) = \int p_{\theta}(\bm{x}|\bm{z})\,p(\bm{z})d\bm{z}$. The goal is to estimate $\theta$ such that  $P_{\theta}$ approximates the unknown data distribution $P_0$. Classical models such as probabilistic principal component analysis and linear factor analysis fit into this paradigm, as well as modern deep generative models including variational autoencoders (VAEs) \citep{kingma2013auto, rezende2014stochastic}, normalizing flows \citep{rezende2015variational}, and diffusion models \citep{sohl2015deep,  ho2020denoising}. 

After fitting an LVGM and estimating $\widehat{\theta}$, a conventional strategy for generating synthetic data is to sample from the prior $\bm{z}_i \sim p(\bm{z})$ and then sample $\bm{x} \sim p_{\widehat{\theta}}(\bm{x}|\bm{z}_i)$.  However, in many generative models, this sampling process does not yield realistic data samples \citep{rezende2018taming,li2021low}. One reason for this is the mismatch between the prior and the \emph{aggregated posterior}. The aggregated posterior is the distribution of latent variables induced by a model: $q_{\mathrm{agg},\theta}(\bm{z}) = \int p_{{\theta}}(\bm{z}|\bm{x})\,P_0(d\bm{x}).$
A mismatch between $q_{\mathrm{agg},\theta}$ and the prior $p(\bm{z})$ is often interpreted as evidence that the prior is misspecified and should be replaced \citep{seth2019model}. 

Instead of replacing the prior, an increasingly common alternative is a two-stage sampling strategy. In the first stage,  a latent variable model is fit with a simple prior (e.g. standard Gaussian) in order to learn a useful representation and likelihood parameters. In the second stage, 
the aggregated posterior is estimated and then used for latent sampling. For example, latent diffusion models estimate a  sampling distribution over learned latent variables \citep{rombach2022high}. That is, these approaches replace prior sampling by sampling from an estimated aggregated posterior.  

Aggregated posterior sampling is appealing as it separates two competing goals in LVGMs: representation learning, and sampling synthetic data. For example, diffusion models are state-of-the-art for sampling, but do not learn a unique latent representation $\bm{z}_i$ for each data point \citep{mittal2023diffusion, fuest2026diffusion}. Meanwhile, in VAEs, the aggregated posterior $q_{\mathrm{agg}}$ often has meaningful structure, but it does not equal the prior \citep{kingma2013auto, locatello2020weakly}. Aggregated posterior sampling also provides an alternative to Gaussian
prior sampling, which yields light-tailed outputs under Lipschitz decoders \citep{tam2025statistical}.

Although aggregated posterior sampling strategies are becoming more common, there are limited tools for checking their adequacy. In this paper, we introduce the \gls{APPC}, a predictive check for evaluating the aggregated posterior predictive distribution. The \gls{APPC} extends holdout predictive checks \citep{moran2024a,li2026calibrated} to the setting where synthetic observations are generated by first sampling from an estimated aggregated posterior and then passing these latent variables through a fitted likelihood. Instead of asking whether the aggregated posterior equals the prior, the \gls{APPC} asks whether the resulting two-stage generative procedure produces observations that are consistent with held-out data.

We make three contributions. First, we formulate the \gls{APPC} and distinguish it from alternative predictive checks. Second, we derive sufficient conditions for calibration of \gls{APPC} $p$-values, distinguishing the roles of model misspecification, parameter-estimation, variational, and aggregation error rates. As a case study, we show that the \gls{APPC} for probabilistic principal component analysis (PPCA) can remain calibrated under a misspecified latent prior when pervasive factors allow recovery of the signal space. Third, experiments with VAEs show that aggregated posterior sampling can improve generation of heavy-tailed and clustered data relative to Gaussian prior sampling and can perform comparably to methods with more flexible latent priors.

\textbf{Notation.} Bold lowercase symbols denote individual random vectors, while bold uppercase symbols denote collections of such vectors arranged as matrices. For example, \(\bm z_i\) and \(\bm x_i\) denote individual draws, whereas \(\bm Z=[\bm z_1,\ldots,\bm z_N]^\top\) and \(\bm X=[\bm x_1,\ldots,\bm x_N]^\top\) denote the corresponding datasets.

\subsection{Related work}
\label{sec:Related Work}

\parhead{Two-stage sampling strategies.} 
There have been a number of papers which use a two-stage approach to latent generation. First, a latent variable model is fit with a simple working prior on the latent variables. Second, a different model is fit to the encoded latent representations from the first stage. Examples include two-stage VAEs \citep{dai2019diagnosing}, latent diffusion \citep{rombach2022high}, and VQ-VAE pipelines \citep{van2017neural,ramesh2021zero}.

\textbf{Empirical Bayes.} In empirical Bayes $g$-modeling, the goal is to estimate the prior density \citep{efron2019bayes}  by maximizing the marginal likelihood.
In the context of VAEs, various approaches constrain priors to a class of parameterized densities; the likelihood and prior parameters are jointly estimated. Examples include: a Gaussian prior with learnable mean and variance; mixture priors \citep{falck2021multi}; and the VampPrior \citep{tomczak2018vamp}. 

The setting we consider may be viewed as ``two-stage Empirical Bayes'': in the first stage, the likelihood parameters are learned, and in the second, the latent prior is learned. A related work is \citet{shen2026simulation} who iteratively refine an empirical Bayes prior in the context of simulation-based inference.

\parhead{Bayesian model checking.} Predictive checks assess whether a fitted model can generate data that ``looks like'' the observed data. The \gls{PPC} \citep{Rubin:1984, gelman1996posterior} evaluates models by comparing observed data to samples from the posterior predictive distribution. While \gls{PPC}s have been widely adopted, they suffer from calibration issues due to the reuse of data for both fitting and evaluation \citep{bayarri2000p, robins2000asymptotic}. Alternative approaches include calibrated predictive checks that compare posterior predictive samples to holdout data \citep{moran2024a, li2026calibrated}. 

\cite{seth2019model} check aggregated posterior samples against the prior to detect prior misspecification. The \gls{APPC} focuses on a different question, asking whether the aggregated posterior yields data that is consistent with a holdout dataset.

Deep generative models are typically evaluated through heuristic metrics such as the Fr\'{e}chet Inception Distance \citep[FID,][]{heusel2017gans} and the Inception Score \citep[IS,][]{salimans2016improved}. These metrics can be useful, but do not directly test whether the generative distribution is statistically consistent with observed data \citep{Theis2016a}. The \gls{APPC} provides a calibrated predictive check for the observational distribution induced by aggregated posterior sampling.

%% file: 2-methods-alt.tex

\glsresetall

\section{The aggregated posterior predictive check}\label{sec:methods}

\subsection{Bayesian predictive checks}

Consider an observed dataset $\bm{X}^{\obs}=[\bm{x}^\obs_1, \dots, \bm{x}_N^\obs]^\top$, $\bm{x}_i^{\obs}\in\mathcal{X}$, and a model with global parameters $\theta \in\Theta$ with  prior $p(\theta)$:
\begin{align*}
p(\bm{X}^\obs, \theta) = p(\theta) \prod_{i=1}^N p(\bm{x}_i^\obs\mid\theta).
\end{align*}
Bayesian predictive checks ask whether a model generates data which resembles a real dataset, as summarized by a diagnostic function. Predictive checks have three ingredients:
(i) a reference distribution
$Q_N\in\mathcal{P}(\mathcal{X}^N)$, possibly constructed from data, for generating synthetic data replicates;
(ii) a diagnostic function $d_N:\mathcal{X}^N\to\mathbb{R}$; and
(iii) a real dataset used for checking, $\bm{X}^{\mathrm{check}}\in\mathcal{X}^N$.

The one-sided predictive check $p$-value is
\begin{align}
\mathbb{P}\bigl\{d_N(\bm{X}^{\rep})\geq d_N(\bm{X}^{\mathrm{check}})
\bigm|\mathcal{F}_Q,\,\bm{X}^{\mathrm{check}}\bigr\},
\qquad \bm{X}^{\rep}\sim Q_N(\cdot|\mathcal{F}_Q),
\label{eq:general-predictive-check}
\end{align}
where $\mathcal{F}_Q$ denotes the sigma-field generated by the data used to construct $Q_N$.

The \gls{PPC} \citep{Rubin:1984,gelman1996posterior} takes the reference $Q_N(\cdot|\mathcal{F}_Q)$ to be the posterior predictive $p(\bm{X}^\rep|\bm{X}^{\obs})$ where $\bm{X}^\rep \in \mathcal{X}^N$ denotes replicated data and $\mathcal{F}_Q=\sigma(\bm{X}^\obs)$, with $\sigma(\cdot)$ denoting the sigma-field.  For the \gls{PPC}, the checking dataset is also $\bm{X}^\obs$.  Consequently, the \gls{PPC} uses the data twice, resulting in uncalibrated $p$-values \citep{robins2000asymptotic}.

The \gls{HPC} keeps $Q_N(\cdot|\mathcal{F}_Q)$ as the posterior predictive, but checks against an independent
holdout dataset, $\bm{X}^{\new}$; under standard regularity conditions, the \gls{HPC}
 is calibrated \citep{moran2024a,li2026calibrated}.

\subsubsection{Local latent variable models} \label{sec:local-latent-references}

Now consider a model which additionally has local latent variables $\bm{z}_i \in \mathcal{Z}$: 
\begin{align*}
    \bm{z}_i \sim p(\bm{z}_i), \qquad
\bm{x}_i \mid \bm{z}_i
\sim p_{{\theta}}(\bm{x}_i \mid \bm{z}_i).
\end{align*}
There are distinct ways to generate replicates, corresponding to different reference distributions. For simplicity, we now treat the global parameter $\theta$ as fixed. 

\textbf{Local prior replicates.}
Under the original latent variable model,  a new observation is generated by first sampling from the prior, and then sampling the observation conditional on this latent:
\begin{align}
\bm{z}_i^\rep \sim p(\bm{z}), \qquad \bm{x}_i^\rep \mid \bm{z}_i^\rep
\sim p_{{\theta}}(\bm{x}_i^\rep \mid \bm{z}_i^\rep).
\label{eq:prior-generation}
\end{align}
A holdout predictive check based on \Cref{eq:prior-generation} evaluates whether this prior generative process produces new independent observations consistent with a holdout dataset. 

\textbf{Conditional per-sample replicates.} An alternative is to sample $\bm{z}_i^\rep$ from the latent posterior, conditional on an observed $\bm{x}_i^\obs$:
\begin{align}
\bm{z}_i^\rep \mid \bm{x}_i^\obs
\sim p_{{\theta}}(\bm{z}_i^\rep \mid \bm{x}_i^\obs), \qquad
\bm{x}_i^\rep \mid \bm{z}_i^\rep
\sim p_{{\theta}}(\bm{x}_i^\rep \mid \bm{z}_i^\rep).
\label{eq:conditional-replication}
\end{align}
This procedure can be useful for cases like topic modeling, where we can split \emph{within} a sample (i.e. document) \citep{moran2024a}. In general, however, \Cref{eq:conditional-replication} does not yield calibrated predictive checks, as we demonstrate in \Cref{subsec:ppca}.

\paragraph{Aggregated posterior replicates.}
Many modern pipelines deploy a different generative procedure. After fitting a latent variable model, they estimate a distribution over inferred latent representations, $\widehat{q}_{\agg}$, and use this learned latent distribution for generation (e.g. \citet{dai2019diagnosing}). This procedure is
\begin{align}
\bm{z}_i^\rep \sim \widehat{q}_{\agg}(\bm{z}^\rep), \qquad
\bm{x}_i^\rep \mid \bm{z}_i^\rep
\sim p_{{\theta}}(\bm{x}_i^\rep \mid \bm{z}_i^\rep).
\label{eq:aggregated-generation}
\end{align}
Our proposed \gls{APPC} is a predictive check where the reference distribution is \Cref{eq:aggregated-generation}, checked against an independent holdout dataset.

\subsection{Aggregated posterior predictive checks}\label{subsec:appc-methods}

We now describe the aggregated posterior predictive check in more detail. We denote the marginal distribution of $\bm{x}_i$ as $P_{\theta}$; this is not necessarily equal to the true data generating distribution $P_0$.

\subsubsection{The population reference distribution}
\label{subsec:oracle-APPC}

We first consider the population reference distribution of the \gls{APPC}. This reference distribution uses oracular knowledge of both $P_0$ and the posterior of the latent variables.  We denote $\theta^\star \in \Theta$ as an optimal parameter (defined precisely in \Cref{sec:theory}). 

The population aggregated posterior induced by  $\theta^\star$, the posterior, and $P_{0}$ is
\begin{align}
q_{\theta^\star,\agg}(\bm{z})
= \int
p_{\theta^\star}(\bm{z} \mid \bm{x})
\, P_{0}(d\bm{x}).
\label{eq:oracle-aggregated-posterior}
\end{align}
The corresponding aggregated posterior predictive is
\begin{align}
p_{\theta^\star,\agg}(\bm{x}^\rep)
= \int
p_{\theta^\star}(\bm{x}^\rep \mid \bm{z})
\, q_{\theta^\star,\agg}(\bm{z})\,d\bm{z}.
\label{eq:aggregated-posterior-predictive}
\end{align}
The oracle \gls{APPC} is the predictive check which uses as a reference distribution this population aggregated posterior predictive (\Cref{eq:aggregated-posterior-predictive}). The checking dataset is an independent holdout dataset $\bm{X}^\new$.

\vspace{1em}

\begin{definition}[Oracle aggregated posterior predictive check]
\label{def:oracle-APPC}
Let $d_{N}:\mathcal{X}^{N}\to\mathbb{R}$ be a diagnostic statistic. The oracle \gls{APPC} $p$-value is
\begin{align}
p_{\mathrm{APPC}}^{\mathrm{oracle}}
\defeq
\mathbb{P}
\left\{
d_N(\bm{X}^\rep)
\geq
d_N(\bm{X}^\new) \;\middle|\;
\bm{X}^\new
\right\},
\label{eq:oracle-APPC-pvalue}
\end{align}
where $\bm{X}^\rep=\{\bm{x}_i^\rep\}_{i=1}^{N_\new}$ is generated independently according to
\begin{align}
\bm{z}_i^\rep
\sim
q_{\theta^\star,\agg}(\bm{z}^\rep_i), \qquad
\bm{x}_i^\rep \mid \bm{z}_i^\rep
\sim
p_{\theta^\star}(\bm{x}_i^\rep \mid \bm{z}_i^\rep),
\qquad i=1,\dots,N_\new.
\label{eq:oracle-APPC-generator}
\end{align}
\end{definition}
For a well-specified model $P_0 = P_{\theta^\star}$, the population aggregated posterior predictive equals the true data distribution because $q_{\theta^\star,\agg}(\bm{z})=p(\bm{z})$. In contrast, under model misspecification, $q_{\theta^\star,\agg}(\bm{z})$ need not equal $p(\bm{z})$.  However, $P_{\theta^\star,\agg}$ may still produce samples which resemble samples from $P_{0}$; this is what the \gls{APPC} assesses.

\subsection{APPC with an estimated aggregated posterior}\label{sec:APPC-estimated}

In practice, $q_{\theta,\agg}$ is unknown because it depends on the unknown data distribution $P_0$. We now consider a practical version of the \gls{APPC} where we approximate $P_0$.

Specifically, we partition the available data into training, aggregation, and holdout sets indexed by $I_\train$, $I_\agg$, and $I_\new$, which form a partition of $\{1,\dots, N\}$:
\begin{align*}
\bm{X} = (\bm{X}^\train, \bm{X}^\agg, \bm{X}^\new),
\qquad
\bm{X}^s = \{\bm{x}_i : i \in I_s\}, \qquad s \in \{\train,\agg,\new\}.
\end{align*}
Here, $\bm{X}^\train$ is used to estimate the first-stage model parameters $\widehat{\theta}$, and variational parameters $\widehat{\phi}$ where applicable; $\bm{X}^\agg$ is used for an empirical approximation of $P_0$; and $\bm{X}^\new$ is used for checking in the \gls{APPC}.

\textbf{Empirical aggregated posterior.} A sample from the empirical aggregated posterior is obtained by sampling an index $l$ uniformly from $\{1,\dots, N_{\agg}\}$, then sampling $\bm{z}\sim p_{\widehat{\theta}}(\bm{z}\mid\bm{x}_l^\agg)$. That is, the empirical aggregated posterior is:
\begin{align}
\widehat{q}_{\widehat{\theta},\mathrm{agg-emp}}(\bm{z}) \defeq \frac{1}{N_{\agg}}\sum_{i=1}^{N_{\agg}} p_{\widehat\theta}(\bm{z}\mid \bm{x}_i^\agg).
\label{eq:encoder-empirical-aggregation-notvi}
\end{align}
For models such as VAEs, the exact posterior is generally unavailable. In this case, a fitted encoder $q_{\widehat{\phi}}(\bm{z}\mid\bm{x})$ replaces $p_{\widehat\theta}(\bm{z}\mid \bm{x})$ in \Cref{eq:encoder-empirical-aggregation-notvi}. 

An alternative is to directly fit a density to samples $\bm{z}_i\sim q_{\widehat\phi}(\bm{z}_i|\bm{x}_i^\agg)$, $i=1,\dots, N_\agg$. Examples of density estimators include a Gaussian mixture model, normalizing flow, or diffusion model \citep{rombach2022high}. 

\vspace{1em}

\begin{definition}[Aggregated posterior predictive check]
\label{def:APPC} 
    Consider datasets $\bm{X}^\train$, $\bm{X}^\agg$, $\bm{X}^\new$  drawn independently from $P_0$. For a diagnostic  $d_N:\mathcal{X}^N\to\mathbb{R}$, the \textit{aggregated posterior predictive check} $p$-value is:
    \begin{align}
        p_{\mathrm{APPC}}
:= \mathbb{P} \left\{d_N(\bm{X}^{\rep})\ \ge d_N(\bm{X}^{\new})\;\middle|\;  \bm{X}^{\train}, \bm{X}^{\agg}, \bm{X}^{\new}\right\},
    \label{eq:APPC-pval}
    \end{align}
        where 
$\bm{z}_i^{\rep}\sim \widehat{q}_{\widehat{\phi},\agg}(\bm{z}_i^\rep)$, and $\bm{x}_i^\rep \sim p_{\widehat{\theta}}(\bm{x}_i^\rep\mid\bm{z}_i^\rep)$ are draws from the aggregated posterior predictive distribution. Here, $\widehat{\theta}$ is estimated using $\bm{X}^\train$.
\end{definition}

To calculate the \gls{APPC} $p$-value, we use Monte Carlo sampling: see \Cref{alg:APPC-1}.

\vspace{1em}

\begin{remark}
In \Cref{eq:APPC-pval}, we treat the estimated model parameters $\widehat{\theta}$ as fixed. The \gls{APPC} also supports sampling from the posterior distribution of $\theta$ (see \Cref{app:bayesian-appc}).
\end{remark}

\begin{algorithm} 
\caption{Aggregated posterior predictive check}
\label{alg:APPC-1}
\KwIn{data $\bm{X} = (\bm{X}^\train, \bm{X}^\agg, \bm{X}^\new)$, model family $p_{\theta}(\bm{x}, \bm{z})$, (if applicable: variational family with parameters $\phi$), diagnostic $d$, $\#$ of replicates $R$}
\KwOut{\gls{APPC} $p$-value}
\vspace{0.5em}
\begin{enumerate}[leftmargin=*]
    \item Fit model parameters $\theta$ (and variational parameters $\phi$ if applicable) using $\bm{X}^\train$\;
    \item Estimate the aggregated posterior $\widehat{q}_{\widehat{\phi},\agg}(\bm{z})$ using $\bm{X}^\agg$;
\end{enumerate}
\vspace{0.5em}
\For{$r = 1$ ,\dots, $R$}{
\For{$i=1,\dots, N_\new$}{
    draw samples from the (estimated) aggregated posterior $\bm{z}_i^{(r)} \sim \widehat{q}_{\widehat{\phi},\agg}(\bm{z})$\;
    generate aggregated posterior predictive samples  $\bm{x}^{(r),\rep}_i \sim p_{\widehat{\theta}}(\bm{x}\mid\bm{z}_i^{(r)})$\;}
}
\vspace{0.5em}
Compute the empirical \gls{APPC} $p$-value:
\begin{align*}
\widehat{p}_{\mathrm{APPC}} = \frac{1}{R}\sum_{r=1}^R \mathbf{1}\left [ d(\bm{X}^{(r),\rep}) \geq d(\bm{X}^\new) \right];
\end{align*}

\Return $\widehat{p}_{\mathrm{APPC}}$

\end{algorithm}

\subsection{Example: Probabilistic PCA}
\label{subsec:ppca}

We now illustrate the \gls{APPC} by considering a classic latent variable model: probabilistic principal component analysis \citep[PPCA,][]{tipping1999probabilistic}.  PPCA assumes a latent Gaussian prior, but is commonly used to summarize high-dimensional data with underlying structure (e.g. clusters). That is, in PPCA, we often expect the prior to be incorrect; can the aggregated posterior still generate realistic samples?  

We consider two cases: (i) the well-specified case where the latent prior is correct, and (ii) the misspecified prior case where the latent prior is incorrect. We show:
\begin{itemize}
    \item In the well-specified case, generating latent variables from both the prior and the empirical aggregated posterior yield data consistent with holdout data;
    \item In the misspecified case, samples from the prior do not yield data consistent with holdout data, while samples from the empirical aggregated posterior do.
\end{itemize}
In section \Cref{subsec:ppca-theory}, we provide theoretical corroboration of these empirical results.

\textbf{Model.}
The observed data is $\bm{X} = \{\bm{x}_i\}_{i=1}^N$ where $\bm{x}_i \in \mathbb{R}^{D}$ with associated  latent  $\bm{z}_i \in \mathbb{R}^K,\, K < D$.  The PPCA model is:
\begin{align}
\bm{z}_i \sim \mathcal{N}\left(0,\, I_K\right), \qquad \bm{x}_i \mid \bm{z}_i \sim \mathcal{N}\left(\bm{W} \bm{z}_i,\, \sigma^2 I_D \right),\qquad \bm{W} \in \mathbb{R}^{D \times K}.
\label{eq:ppca-ws-model}
\end{align}

\begin{figure}[t]
\centering
\begin{subfigure}{0.49\textwidth}
    \centering
    \includegraphics[width=\linewidth]{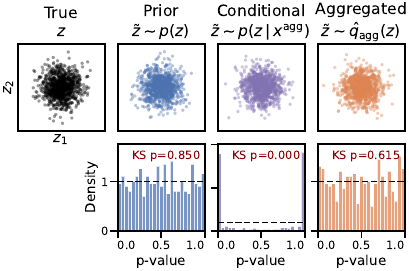}
     \caption{Well-specified case.}
    \label{fig:ppca_summary_ws}
\end{subfigure}
\begin{subfigure}{0.49\textwidth}
    \centering
    \includegraphics[width=\linewidth]{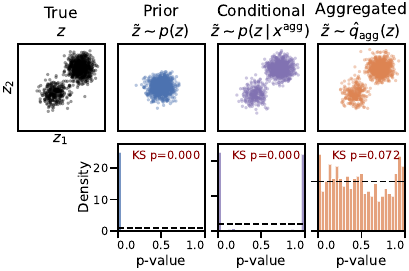}
     \caption{Misspecified case.}
    \label{fig:ppca_summary_ms}
\end{subfigure}
\caption{\textbf{PPCA example.} The empirical aggregated posterior predictive generates realistic samples in both well-specified and misspecified cases. The \gls{APPC} $p$-values are approximately uniform in both cases, unlike prior or conditional-replicate sampling. Top row: example draws of latent variables $\bm{z}$ (up to an invertible linear transformation). Bottom row: $p$-values from predictive checks over 500 trials. Kolmogorov-Smirnov tests of uniformity are in red.}
\end{figure}

{\bf Comparison of predictive distributions.} We split our data into $\bm{X}^\train$, $\bm{X}^\agg$, and $\bm{X}^\new$.  We obtain the MLE of $\theta = (\bm{W}, \sigma^2)$ using $\bm{X}^\train$.

We contrast the different distributions for $\bm{X}^\rep$ from \Cref{sec:local-latent-references}: prior sampling, conditional sampling, and aggregated posterior sampling (see \Cref{app:ppca-predictive-check-pval} for details).

\textbf{Predictive check.} For each distribution for $\bm{X}^\rep$, we conduct a predictive check:
\begin{align}
\mathbb{P}\left\{d(\bm{X}^\rep) \geq d(\bm{X}^\new) \;\middle|\; \bm{X}^\train, \bm{X}^\agg, \bm{X}^\new\right\},
\label{eq:predictive-check-pval}
\end{align}
where we take the diagnostic to be the grand mean $d(\bm{X}) = D^{-1}\bm{1}_D^\top\overline{\bm{x}}$. For each predictive distribution, their corresponding $p$-value formulas are available in \Cref{app:ppca-predictive-check-pval}.

\textbf{Well-specified prior.}  We first consider the well-specified case where the true data generating prior is equal to the PPCA prior. The observed data is generated as in \Cref{eq:ppca-ws-model} with $D=10$, $K=2$, $\bm{W}^\top\bm{W}=9I_K$ where $\bm{W}$ has orthogonal columns, and $\sigma^2=0.1$.  For each reference distribution, we compute the predictive check $p$-values for 500 simulated datasets, each with $N_\train=N_\agg=5000$, $N_\new=1000$. 

For both the latent prior predictive check and the empirical \gls{APPC}, their $p$-value distributions are approximately uniform. Meanwhile, the conditional-replicate predictive check $p$-values are not uniform (\Cref{fig:ppca_summary_ws}). This is because the conditional-replicate predictive does not account for between-sample variance (see \Cref{app:ppca-predictive-check-pval-calibration}).

\textbf{Misspecified prior.} Now consider a misspecified case where the true latent distribution is not the PPCA prior; instead, we take it to be a Gaussian mixture:
\begin{align*}
    \bm{z}_i &\sim \sum_{k=1}^{M}\, \pi_k \, \mathcal{N}\left(\bm{m}_k,\, \bm{S}_k\right), \qquad \sum_{k=1}^{M}\, \pi_k = 1.
\end{align*}
We set $m_1=(3,3)$, $m_2 = (-1,-1)$ and $(\pi_1,\pi_2)=(0.7, 0.3)$. The mixture variances are identity matrices. The likelihood model and parameters remain the same.

We fit the original PPCA model (\Cref{eq:ppca-ws-model}) to this data. The empirical \gls{APPC} $p$-values are approximately uniform (\Cref{fig:ppca_summary_ms}). The aggregated posterior also yields samples that are visually consistent with samples from the true data distribution. In contrast, the other distributions almost always fail their predictive check (\Cref{fig:ppca_summary_ms}).

%% file: 3-theory.tex

\glsresetall

\section{Theory} 
\label{sec:theory}

In this section, we first provide a general characterization of the \gls{APPC} $p$-value. We then provide sufficient conditions for which the \gls{APPC} is calibrated by controlling errors from model misspecification, parameter estimation and empirical approximation of the true data distribution. Finally, we specialize these results to PPCA, proving calibration can hold even when the latent prior is misspecified.

\subsection{Setup and assumptions}

\textbf{Data splitting.} Throughout, $m=1,2,\ldots$ indexes a sequence of experiments. At stage $m$, the mutually independent samples $\bm X_m^s=\{\bm x_{m,i}^s\}_{i=1}^{N_{s,m}}$, $s\in\{\train,\agg,\new\}$, are drawn from $P_{0,m}$, with
$N_{\train,m},N_{\agg,m},N_{\new,m}\to\infty$. The split sizes may grow at different rates, and sample sizes change with $m$. In the PPCA example, the data dimension $D_m$ also grows with $m$. When there is no ambiguity, we suppress $m$ for readability.

\textbf{Model.} We consider the latent variable model:
\begin{align*}
\bm z_i \sim p_Z(\bm z_i), \quad \bm x_i \mid \bm{z}_i \sim p_\theta(\bm x_i \mid \bm z_i).
\end{align*}
The marginal distribution under this model is $P_\theta$; there may be model misspecification in that $P_0\notin\{P_{\theta}: \theta \in \Theta\}$.

\textbf{Reference distributions.} Consider a replicate generated by first sampling from the posterior given an observation $\bm{x}_i$, and then sampling from the likelihood:
\begin{align*}
\bm{z}_i^\rep |\bm{x}_i \sim p_{\theta}(\bm{z}_i^\rep|\bm{x}_i), \quad \bm{x}_i^\rep|\bm{z}_i^\rep \sim p_{\theta}(\bm{x}_i^\rep|\bm{z}_i^\rep).
\end{align*}
We write $K_{\theta,\mathrm{post}}(\bm{x},\cdot)$ for this distribution of $\bm{x}^\rep$ conditional on $\bm{x}$, where ``post'' indicates the posterior $p_{\theta}(\bm{z}|\bm{x})$ is available. That is, for every measurable $A\subseteq \mathcal{X}$, 
\begin{align*}
K_{\theta,\mathrm{post}}(\bm{x}, A) = P_{\theta}(\bm{x}^\rep \in A|\bm{x}) = \int_{\mathcal{Z}} P_{\theta}(\bm{x}^\rep\in A|\bm{z})\,p_{\theta}(d\bm{z}|\bm{x}).
\end{align*}
The oracle aggregated posterior predictive distribution is obtained by
first drawing an observation from the true data distribution, sampling from the posterior, and then sampling from the likelihood. That is, the oracle aggregated posterior predictive is
\begin{align}
\bigl(P_0K_{\theta^\star,\mathrm{post}}\bigr)(A)
=
\int_{\mathcal X}
K_{\theta^\star,\mathrm{post}}(\bm x,A)\,
P_0(d\bm x).
\label{eq:pop-optimal-ap-kernel}
\end{align}
When \(p_\theta(\bm z\mid\bm x)\) is intractable, we allow for an approximate posterior
\(q_\phi(\bm z\mid\bm x)\), where \(\phi\in\mathcal E\). Let
\(\eta=(\theta,\phi)\in\Theta\times\mathcal E\). The corresponding probability kernel is
\begin{align}
K_\eta(\bm x,A)
=
\int_{\mathcal Z}
P_\theta(\bm x^\rep\in A\mid\bm z)\,
q_\phi(d\bm z\mid\bm x).
\label{eq:variational-kernel}
\end{align}
In practice, we use
$K_{\widehat\eta_m}$, where
$\widehat\eta_m=(\widehat\theta_m,\widehat\phi_m)$ is a function of $\bm X_m^\train$.  We approximate
\(P_0\) by the empirical distribution of $\bm{X}_m^\agg$,
\begin{align}
\widehat P_{\agg,m}
=
\frac{1}{N_{\agg,m}}
\sum_{i=1}^{N_{\agg,m}}\delta_{\bm x_{m,i}^\agg},\label{eq:definition-empirical-distribution}
\end{align}
where $\delta_{\bm x}$ denotes the Dirac measure at $\bm x$.
The resulting empirical aggregated posterior predictive distribution is
\(\widehat P_{\agg,m} K_{\widehat\eta_m}\). 

We use $Q_m$ to denote a generic reference distribution (e.g. $Q_m$ can be $P_0K_{\theta^\star,\mathrm{post}}$ or $\widehat{P}_{\agg,m} K_{\widehat{\eta}_m}$); this notation is helpful to state our assumptions more generally. When $Q_m$ depends on data, we use $\mathcal{F}_m$ to denote the sigma-field generated by the data used to construct $Q_m$ (e.g. for $Q_m=\widehat{P}_{\agg,m}K_{\widehat{\eta}_m}$, we have $\mathcal{F}_m=\sigma(\{\bm{X}_m^\train, \bm{X}_m^\agg\})$).

\textbf{Optimal parameter set.} To allow for nonidentified models, such as factor analysis, we define optimizer sets for both the exact posterior and variational cases:
\begin{align*}
\mathcal{S}_{\mathrm{post}}^\star \defeq \operatorname*{arg\,max}_{\theta\in\Theta}
\E_{P_0}
\left[
\ell_{\theta}(\bm x)
\right], \qquad
\mathcal{S}_{\mathrm{VI}}^\star \defeq \operatorname*{arg\,max}_{\eta\in \Theta \times \mathcal{E}}
\E_{P_0}
\left[
\ell_{\mathrm{VI},\eta}(\bm x)
\right],
\end{align*}
where  $\ell_{\theta}$ may be the log-likelihood  and $\ell_{\mathrm{VI},\eta}$ may be the evidence lower-bound (ELBO). 

\vspace{1em}

\begin{assumption}[Predictive equivalence]
\label{ass:optimizer-predictive-equivalence}
All elements of $\mathcal{S}^\star_{\mathrm{post}}$ induce the same $P_0K_{\theta,\mathrm{post}}$, and all elements of $\mathcal{S}^\star_{\mathrm{VI}}$ induce the same $P_0K_{\eta}$.
\end{assumption}

In factor analysis, for example, rotational symmetries produce multiple exact posterior optimizers that induce the same aggregated posterior predictive distribution. 

\textbf{Diagnostics.} We consider additive diagnostic functions $d_N:\mathcal{X}^N\to \R$ where
\begin{align*}
d_N(\bm{X}) = \frac{1}{N}\sum_{i=1}^N h(\bm{x}_i).
\end{align*}

\textbf{Central Limit Theorems (CLTs).} We assume a CLT for the diagnostic, and a conditional CLT for a general reference distribution $Q_m$. Let
\begin{align*}
\nu_0 &= \E_{P_0}[h(\bm{x})],\qquad \tau_0^2=\mathrm{Var}_{P_0}\{h(\bm{x})\} \\
\nu_m &= \E_{Q_m}[h(\bm{x})], \qquad \tau_m^2 = \mathrm{Var}_{Q_m}\{h(\bm{x})\}.
\end{align*}

\begin{assumption}[Holdout diagnostic CLT]\label{ass:diag-clt}
The holdout diagnostic has positive, finite variance and satisfies
\begin{align*}
Z_m \defeq \frac{\sqrt{N_{\new,m}}\{d_{N_{\new,m}}(\bm X_m^\new) - \nu_0\}}{\tau_0} \stackrel{d}{\longrightarrow} \mathcal{N}(0,1).
\end{align*}
\end{assumption}

\begin{assumption}[Reference distribution CLT]
\label{ass:conditional-diagnostic-clt}
Conditional on $\mathcal{F}_m$, the replicate observations $\{\bm{x}_{m,i}^\rep\}_{i=1}^{N_{\new,m}}$ are independent draws from $Q_m$ by construction. We assume
\begin{align*}
\sup_{t\in\mathbb R}
\left|
\Pr
\left(
\frac{
\sqrt{N_{\new,m}}
\{d_{N_{\new,m}}(\bm X_m^\rep)-\nu_m\}
}
{\tau_m}
\leq t
\mid
\mathcal F_m
\right)
-\Phi(t)
\right|
\stackrel{P}{\longrightarrow}0,
\end{align*}
where $\Phi$ denotes the Gaussian CDF. 
\end{assumption}

\subsection{A generic calibration theorem}

For a generic reference distribution $Q_m$, the one-sided predictive $p$-value is
\begin{align*}
p_m(Q_m)
=
\mathbb P\left\{
 d_{N_{\new,m}}(\mX_m^\rep)
 \ge
 d_{N_{\new,m}}(\mX_m^\new)
 \;\middle|\;
 \mathcal F_m,\mX_m^\new
\right\}.
\end{align*}

\begin{theorem}[Generic APPC $p$-value]
\label{prop:generic-appc-calibration}
Under \Cref{ass:diag-clt,ass:conditional-diagnostic-clt},
\begin{align*}
p_m(Q_m) = 1 - \Phi\left( s_m^{-1} Z_m + b_m\right) +o_P(1), 
\end{align*}
where $b_m= \sqrt{N_{\new,m}}(\nu_0 - \nu_m)/\tau_m$ and $s_m = \tau_m/\tau_0$
are centering and scaling terms.
Consequently, as $b_m\stackrel{P}{\longrightarrow} b$ and $s_m\stackrel{P}{\longrightarrow} s>0$,
\begin{align*}
p_m(Q_m) \stackrel{d}{\longrightarrow} 1 - \Phi\left(s^{-1}Z + b\right), \quad Z\sim \mathcal{N}(0,1).
\end{align*}
\end{theorem}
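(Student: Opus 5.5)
The plan is to rewrite the $p$-value event in the standardized scale of the replicate diagnostic, so that it becomes an evaluation of the conditional CDF from \Cref{ass:conditional-diagnostic-clt} at a random threshold. Write $N=N_{\new,m}$ and
\begin{align*}
T_m^\rep \defeq \frac{\sqrt{N}\{d_N(\bm X_m^\rep)-\nu_m\}}{\tau_m}.
\end{align*}
Subtracting $\nu_m$ and multiplying by $\sqrt{N}/\tau_m>0$ shows that $d_N(\bm X_m^\rep)\ge d_N(\bm X_m^\new)$ holds exactly when $T_m^\rep\ge t_m$. Here
\begin{align*}
t_m \defeq \frac{\sqrt{N}\{d_N(\bm X_m^\new)-\nu_m\}}{\tau_m}
=\frac{\tau_0}{\tau_m}\cdot\frac{\sqrt{N}\{d_N(\bm X_m^\new)-\nu_0\}}{\tau_0}+\frac{\sqrt{N}(\nu_0-\nu_m)}{\tau_m}
= s_m^{-1}Z_m+b_m,
\end{align*}
using the definitions of $Z_m$, $s_m$ and $b_m$. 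Note that $t_m$ is measurable with respect to $\sigma(\mathcal F_m,\bm X_m^\new)$. Given $\mathcal F_m$, the replicates are drawn from $Q_m$ independently of the holdout set. Hence conditioning additionally on $\bm X_m^\new$ leaves the law of $T_m^\rep$ unchanged, and
\begin{align*}
p_m(Q_m)=1-F_m(t_m^-),\qquad F_m(t)\defeq\Pr(T_m^\rep\le t\mid\mathcal F_m),
\end{align*}
where $F_m(t^-)$ denotes the left limit, because the event is $T_m^\rep\ge t_m$ rather than $T_m^\rep> t_m$.

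Next I would replace $F_m$ by $\Phi$ uniformly. Let $\Delta_m=\sup_t|F_m(t)-\Phi(t)|$, which is $o_P(1)$ by \Cref{ass:conditional-diagnostic-clt}. Since $\Phi$ is continuous, taking left limits gives $\sup_t|F_m(t^-)-\Phi(t)|\le\Delta_m$ as well. Because the bound is uniform in $t$, it applies at the random point $t_m$ with no need to control $t_m$ itself:
\begin{align*}
|p_m(Q_m)-\{1-\Phi(s_m^{-1}Z_m+b_m)\}|\le\Delta_m=o_P(1).
\end{align*}
This is the first claim. The main point needing care is this step: uniformity of the conditional CLT is exactly what absorbs the randomness of $t_m$, which depends on $\bm X_m^\new$ and, through $\nu_m$ and $\tau_m$, on $\mathcal F_m$. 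The rest is bookkeeping.

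For the limit, \Cref{ass:diag-clt} gives $Z_m\stackrel{d}{\to}Z\sim\mathcal N(0,1)$. If $b_m\stackrel{P}{\to}b$ and $s_m\stackrel{P}{\to}s>0$, with deterministic limits, then Slutsky's lemma gives $s_m^{-1}Z_m+b_m\stackrel{d}{\to}s^{-1}Z+b$; no joint-dependence issue arises because the limits are constants. The continuous mapping theorem applied to $x\mapsto1-\Phi(x)$ then yields convergence in distribution of the leading term. Adding the $o_P(1)$ remainder and applying Slutsky once more gives $p_m(Q_m)\stackrel{d}{\to}1-\Phi(s^{-1}Z+b)$.
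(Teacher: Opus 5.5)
Your proposal is correct and follows the paper's proof. Both use the same decomposition of the standardized threshold into $s_m^{-1}Z_m+b_m$, followed by Slutsky's lemma and the continuous mapping theorem. You are more careful than the paper's one-line appeal to the conditional CLT: you make explicit that the sup-norm form of \Cref{ass:conditional-diagnostic-clt}, together with conditional independence of replicates and holdout given $\mathcal F_m$, handles both the random threshold and the $\ge$ (left-limit) issue.
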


\begin{corollary}[Calibration]\label{cor:generic}
Under \Cref{ass:diag-clt,ass:conditional-diagnostic-clt}, if $b_m\stackrel{P}{\to} 0$ and $s_m\stackrel{P}{\to}1$, then $p_m(Q_m)\stackrel{d}{\longrightarrow}U[0,1].$
\end{corollary}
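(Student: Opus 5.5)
The corollary follows from \Cref{prop:generic-appc-calibration} by identifying the limit law. With $b=0$ and $s=1$, the theorem gives
\begin{align*}
p_m(Q_m) \stackrel{d}{\longrightarrow} 1-\Phi(Z), \qquad Z\sim\mathcal N(0,1).
\end{align*}
It then remains to show that $1-\Phi(Z)$ is uniform on $[0,1]$.

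\textbf{Step 1 (the theorem's limit).} The expansion $p_m(Q_m)=1-\Phi(s_m^{-1}Z_m+b_m)+o_P(1)$ in \Cref{prop:generic-appc-calibration} holds under \Cref{ass:diag-clt,ass:conditional-diagnostic-clt} alone. If one prefers not to rely on its ``consequently'' clause, the limit can be obtained directly. Since $Z_m\stackrel{d}{\to}Z$ by \Cref{ass:diag-clt}, and $(s_m,b_m)\stackrel{P}{\to}(1,0)$ with the limit deterministic, Slutsky's lemma gives $(Z_m,s_m,b_m)\stackrel{d}{\to}(Z,1,0)$ jointly. No independence between $Z_m$ and $\mathcal F_m$ is needed, precisely because the limit $(1,0)$ is a constant.

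\textbf{Step 2 (continuous mapping).} The map $(z,s,b)\mapsto 1-\Phi(s^{-1}z+b)$ is continuous on $\{s>0\}$, and the limit point satisfies $s=1>0$. The continuous mapping theorem therefore gives
\begin{align*}
1-\Phi(s_m^{-1}Z_m+b_m)\stackrel{d}{\to}1-\Phi(Z).
\end{align*}
On the event $s_m\le 0$, which has probability tending to $0$, the map may be defined arbitrarily. A second application of Slutsky's lemma absorbs the $o_P(1)$ remainder.

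\textbf{Step 3 (probability integral transform).} Since $\Phi$ is continuous and strictly increasing, for $u\in[0,1]$ we have
\begin{align*}
\Pr\{1-\Phi(Z)\le u\}=\Pr\{\Phi(Z)\ge 1-u\}=1-(1-u)=u.
\end{align*}
Hence $1-\Phi(Z)\sim U[0,1]$, which proves the corollary.

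No step is a real obstacle. The only point needing care is Step 1: the joint convergence of $(Z_m,s_m,b_m)$ must come from the deterministic limits of $s_m$ and $b_m$, not from any independence assumption.
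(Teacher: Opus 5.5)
Your proposal is correct and takes the paper's route: the corollary is the special case $b=0$, $s=1$ of \Cref{prop:generic-appc-calibration}, and your Steps 1--2 repeat the Slutsky and continuous-mapping argument from that theorem's proof. Step 3 adds the probability integral transform showing that $1-\Phi(Z)$ is uniform, a step the paper leaves implicit.
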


Thus, a convenient sufficient condition for calibration is that the scaled reference distribution matches the diagnostic mean to $o(N_{\new,m}^{-1/2})$ and the diagnostic standard deviation to relative $o(1)$ error. This condition does not require the latent prior to be well-specified.

\vspace{1em}

\begin{remark}
\label{bayesian-remark}
A fully Bayesian approach where $\theta$ is random results in additional terms corresponding to posterior variability. See \Cref{prop:bayesian-appc-calibration} in \Cref{app:bayesian-appc} for details. 
\end{remark}

\subsection{Successive approximations to the oracle reference}

\subsubsection{Oracle APPC}\label{sec:pop-optimal-theory}
We first consider the oracle \gls{APPC}. In this case, the replicates $\bm{X}_m^\rep$ have the reference distribution $Q_m=P_0 K_{\theta^\star,\mathrm{post}}$ where $\theta^\star \in \mathcal{S}^\star_{\mathrm{post}}$.

For the oracle \gls{APPC}, the centering and scaling terms in \Cref{prop:generic-appc-calibration} are
\begin{align*}
b(P_0 K_{\theta^\star,\mathrm{post}})=\frac{\sqrt{N_{\new,m}} ( \nu_0-\nu^\star_{\mathrm{post}})}{\tau^\star_{\mathrm{post}}},
\qquad 
s(P_0 K_{\theta^\star,\mathrm{post}})= \frac{\tau^\star_{\mathrm{post}}}{\tau_0},
\end{align*}
where $\nu_{\mathrm{post}}^\star = \E_{\bm{x}\sim P_0 K_{\theta^\star,\mathrm{post}}}[h(\bm{x})]$, $(\tau_{\mathrm{post}}^\star)^2 = \mathrm{Var}_{\bm{x}\sim P_0 K_{\theta^\star,\mathrm{post}}}\{h(\bm{x})\}$. 

The centering term captures model misspecification under this diagnostic. 

\vspace{1em}

\begin{corollary}
\label{cor:oracle-appc-calibration}
Suppose \Cref{ass:diag-clt,ass:conditional-diagnostic-clt} hold with $Q_m=P_0K_{\theta^\star,\mathrm{post}}$. 
If $b(P_0 K_{\theta^\star,\mathrm{post}})\stackrel{P}{\to} 0$ and $s(P_0 K_{\theta^\star,\mathrm{post}}) \stackrel{P}{\to} 1$, then the oracle \gls{APPC} is calibrated.
\end{corollary}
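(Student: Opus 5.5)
This corollary is a direct specialization of \Cref{cor:generic}, which in turn rests on \Cref{prop:generic-appc-calibration}. The plan is to set $Q_m = P_0K_{\theta^\star,\mathrm{post}}$ and identify the quantities $\nu_m,\tau_m$ of the generic theorem with $\nu^\star_{\mathrm{post}},\tau^\star_{\mathrm{post}}$. With this choice, $b_m$ and $s_m$ coincide with $b(P_0K_{\theta^\star,\mathrm{post}})$ and $s(P_0K_{\theta^\star,\mathrm{post}})$.

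Two bookkeeping points need to be checked first. \textbf{(i)} The reference distribution does not depend on which optimizer $\theta^\star\in\mathcal S^\star_{\mathrm{post}}$ is chosen. This follows from \Cref{ass:optimizer-predictive-equivalence}, so $\nu^\star_{\mathrm{post}}$ and $\tau^\star_{\mathrm{post}}$ are well defined. \textbf{(ii)} Because $Q_m$ is deterministic (an oracle), $\mathcal F_m$ is trivial. The conditional CLT in \Cref{ass:conditional-diagnostic-clt} then reduces to an ordinary Kolmogorov-distance CLT, and convergence in probability of $b_m,s_m$ reduces to deterministic convergence. Either form is allowed by the hypotheses.

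Next, invoke \Cref{prop:generic-appc-calibration} with $b=0$ and $s=1$. This gives $p_m \stackrel{d}{\to} 1-\Phi(Z)$ with $Z\sim\mathcal N(0,1)$. It remains to note that $\Phi(Z)\sim U[0,1]$ by the probability integral transform, since $\Phi$ is continuous and strictly increasing. Hence $1-\Phi(Z)\sim U[0,1]$, which is exactly the conclusion of \Cref{cor:generic}.

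There is no substantive obstacle. The only step needing care is confirming that $s_m>0$, so that $s_m^{-1}$ is defined. This holds eventually because $s_m\to 1$ and $\tau_0>0$ by \Cref{ass:diag-clt}. Combining the $o_P(1)$ remainder with the weak convergence of $Z_m$ is then handled by Slutsky's lemma and the continuity of $\Phi$, as already done in the generic theorem.
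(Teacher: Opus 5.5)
Your proposal is correct and matches the paper's argument: the corollary is just \Cref{cor:generic} applied with $Q_m=P_0K_{\theta^\star,\mathrm{post}}$, which is exactly the template the paper writes out for \Cref{cor:oracle-ws-calibration}. Your appeal to \Cref{ass:optimizer-predictive-equivalence} is harmless but unnecessary, because the statement already fixes a particular $\theta^\star\in\mathcal S^\star_{\mathrm{post}}$.
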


\vspace{1em}

\begin{corollary}[Well-specified model]\label{cor:oracle-ws-calibration}
Suppose \Cref{ass:diag-clt,ass:conditional-diagnostic-clt} hold with $Q_m=P_0K_{\theta^\star,\mathrm{post}}$.    If $P_{\theta^\star} = P_0$ then $P_0K_{\theta^\star,\mathrm{post}}=P_0$ and the oracle \gls{APPC} is calibrated.
\end{corollary}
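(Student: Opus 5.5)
The plan is to prove the two claims in order. First we show $P_0K_{\theta^\star,\mathrm{post}}=P_0$. Then we deduce calibration from \Cref{cor:oracle-appc-calibration}, or directly from \Cref{cor:generic}.

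For the identity, fix a measurable $A\subseteq\mathcal X$. By \Cref{eq:pop-optimal-ap-kernel} and the definition of $K_{\theta,\mathrm{post}}$,
\begin{align*}
\bigl(P_0K_{\theta^\star,\mathrm{post}}\bigr)(A)
=\int_{\mathcal X}\int_{\mathcal Z}P_{\theta^\star}(\bm x^\rep\in A\mid\bm z)\,p_{\theta^\star}(d\bm z\mid\bm x)\,P_0(d\bm x).
\end{align*}
We substitute $P_0=P_{\theta^\star}$ and apply Tonelli's theorem, which is allowed because the integrand is nonnegative. The joint law $p_{\theta^\star}(d\bm z\mid\bm x)P_{\theta^\star}(d\bm x)$ is the model joint $p_Z(d\bm z)p_{\theta^\star}(d\bm x\mid\bm z)$. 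Integrating out $\bm x$ therefore gives the aggregated posterior $q_{\theta^\star,\agg}=p_Z$. The remaining expression is
\begin{align*}
\int_{\mathcal Z}P_{\theta^\star}(\bm x^\rep\in A\mid\bm z)\,p_Z(d\bm z)=P_{\theta^\star}(A)=P_0(A).
\end{align*}
This step uses only the disintegration of the joint model law: the regular conditional $p_{\theta^\star}(\cdot\mid\bm x)$ is defined $P_{\theta^\star}$-a.s., hence $P_0$-a.s.

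For calibration, take $Q_m=P_0K_{\theta^\star,\mathrm{post}}=P_0$. Then $\nu_m=\nu_{\mathrm{post}}^\star=\nu_0$ and $\tau_m=\tau^\star_{\mathrm{post}}=\tau_0$ exactly, for every $m$. Consequently $b(P_0K_{\theta^\star,\mathrm{post}})=0$ and $s(P_0K_{\theta^\star,\mathrm{post}})=1$ deterministically, so both convergence conditions hold trivially. \Cref{ass:diag-clt} gives $\tau_0\in(0,\infty)$, so $b_m$ and $s_m$ are well defined. \Cref{cor:oracle-appc-calibration}, or equivalently \Cref{cor:generic}, then yields $p_m(Q_m)\stackrel{d}{\to}U[0,1]$.

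The only delicate point is measure-theoretic: justifying the interchange of integrals and the a.s. well-definedness of the posterior kernel. We handle this by Tonelli's theorem and standard disintegration on the standard Borel spaces $\mathcal X$ and $\mathcal Z$. If the sequence $P_{0,m}$ varies with $m$, the argument applies stagewise with $P_{\theta^\star_m}=P_{0,m}$ and is otherwise unchanged.
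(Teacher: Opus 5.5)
Your proposal is correct and follows the paper's proof essentially step for step. You show $P_{\theta^\star}K_{\theta^\star,\mathrm{post}}=P_{\theta^\star}$ by integrating the posterior against the marginal to recover the prior $p_Z$, then note that $b=0$ and $s=1$ exactly and invoke \Cref{cor:generic}; your remarks on Tonelli and on the a.s.\ well-definedness of the posterior kernel add rigor that the paper leaves implicit.
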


Calibration can also hold under model misspecification. For example, consider a noiseless latent variable model $\bm{x} = f_{\theta}(\bm{z})$. Even when the assumed latent prior differs from the true latent distribution, if the encoder is the exact inverse, $e_{\phi}=f_{\theta}^{-1}$, then $f_{\theta}\left\{e_{\phi}(\bm{x})\right\} = \bm{x}$ almost surely and $P_0K_{\eta}=P_0$.  As another example, in \Cref{subsec:ppca-theory}, we prove that PPCA with a misspecified latent prior can have a calibrated \gls{APPC}.

In both examples, a sufficient condition is recovery of the latent signal: in the autoencoder case via the true encoder and in the PPCA case via pervasiveness \citep{bai2003inferential}. This gives insight into when the aggregated posterior predictive is a good choice: when the signal-to-noise ratio is high enough to recover the latent signal.

\subsubsection{APPC with estimated parameters}
\label{subsec:oracle-appc-learned}

We now consider an estimator $\widehat\theta_m$ based on $\mX_m^\train$. For the moment we assume that both $P_0$ and $p_{\widehat{\theta}_m}(\bm{z}_i|\bm{x}_i)$ are available. The reference distribution is $Q_m=P_0 K_{\widehat\theta_m,\mathrm{post}}$.

The centering and scaling terms from \Cref{prop:generic-appc-calibration} are:
\begin{align*}
b(P_0 K_{\widehat\theta_m,\mathrm{post}}) = \frac{\sqrt{N_{\new,m}} (\nu_0 - \nu_{\widehat{\theta}_m})}{\tau_{\widehat{\theta}_m}}, \qquad
s(P_0 K_{\widehat{\theta}_m,\mathrm{post}}) = \frac{\tau_{\widehat{\theta}_m}}{\tau_0},
\end{align*}
where $\nu_{\widehat{\theta}_m} = \E_{\bm{x}\sim P_0 K_{\widehat{\theta}_m,\mathrm{post}}}[h(\bm{x})]$, $\tau_{\widehat{\theta}_m}^2 =  \mathrm{Var}_{\bm{x}\sim P_0 K_{\widehat{\theta}_m,\mathrm{post}}}\{h(\bm{x})\}$.

We can decompose the centering term into different sources of error:
\begin{align}
(\nu_0- \nu_{\widehat{\theta}_m}) 
&= 
\underbrace{ (\nu_0 - \nu_{\mathrm{post}}^\star )}_{\text{misspecification error}}
+
\underbrace{ (  \nu_{\mathrm{post}}^\star
- \nu_{\widehat{\theta}_m})}_{\text{estimation error}}
\label{eq:error-decomp-est-post}
\end{align}
The first term is the same misspecification error from \Cref{cor:oracle-appc-calibration}. The second term measures estimation error relative to the population optimizer set.

To characterize the estimation error when the optimal parameter is not unique, we define a distance to the optimizer set: for $\theta \in \Theta$, 
\begin{align*}
\operatorname{dist}(\theta,\mathcal {S}^\star_{\mathrm{post}}) = \inf_{\bar\theta\in\mathcal {S}^\star_{\mathrm{post}}}
\lVert\theta-\bar\theta\rVert_2.    
\end{align*}
We also write for $j\in\{1,2\}$, $
m_{j,\theta} = \mathbb{E}_{\bm{x}\sim P_0 K_{\theta,\mathrm{post}}}[h^j(\bm{x})]$ and $m_{j,\mathrm{post}}^\star = \mathbb{E}_{\bm{x}\sim P_0 K_{\theta^\star,\mathrm{post}}}[h^j(\bm{x})]$.

\vspace{1em}

\begin{proposition}
\label{prop:learned-appc-calibration}
Suppose \Cref{ass:diag-clt,ass:conditional-diagnostic-clt}  hold for $Q_m=P_0 K_{\widehat\theta_m,\mathrm{post}}$. Assume further that the conditions of \Cref{cor:oracle-appc-calibration} hold. Suppose that, for some
sequence $r_{\train,m}\to0$, $\operatorname{dist}
(\widehat\theta_m,\mathcal S_{\mathrm{post}}^\star)
=
O_P(r_{\train,m})$.
Assume that there exist constants $C<\infty$, $\epsilon>0$, and
$\alpha>0$ such that, for $j\in\{1,2\}$ and every $\theta$ satisfying
$\operatorname{dist}(\theta,\mathcal S_{\mathrm{post}}^\star)<\epsilon$,
\begin{align}
\left|
   m_{j,\theta}-m_{j,\mathrm{post}}^\star
\right|
\le
C\operatorname{dist}
(\theta,\mathcal S_{\mathrm{post}}^\star)^\alpha,
\label{eq:moment-holder-to-learned-set}
\end{align}
where $\lvert\nu^\star_{\mathrm{post}}\rvert=O(1)$ and
$\tau_{\mathrm{post}}^\star$ is  bounded away from zero. If
\begin{align}
\sqrt{N_{\new,m}}\,r_{\train,m}^\alpha
\longrightarrow0,
\label{eq:training-reference-rate-condition}
\end{align}
then the \gls{APPC} with estimated parameters is calibrated.
\end{proposition}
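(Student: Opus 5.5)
The plan is to reduce everything to \Cref{cor:generic} applied with $Q_m=P_0K_{\widehat\theta_m,\mathrm{post}}$. Both CLT assumptions are given for this reference, so it suffices to show $b(P_0K_{\widehat\theta_m,\mathrm{post}})\stackrel{P}{\to}0$ and $s(P_0K_{\widehat\theta_m,\mathrm{post}})\stackrel{P}{\to}1$. By \Cref{ass:optimizer-predictive-equivalence}, the quantities $\nu^\star_{\mathrm{post}}$, $\tau^\star_{\mathrm{post}}$ and $m^\star_{j,\mathrm{post}}$ do not depend on which element of $\mathcal S^\star_{\mathrm{post}}$ is used. This is what makes the set-distance formulation in \eqref{eq:moment-holder-to-learned-set} meaningful.

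\textbf{Localizing the moment bound.} Let $E_m=\{\operatorname{dist}(\widehat\theta_m,\mathcal S^\star_{\mathrm{post}})<\epsilon\}$. Since $\operatorname{dist}(\widehat\theta_m,\mathcal S^\star_{\mathrm{post}})=O_P(r_{\train,m})$ and $r_{\train,m}\to0$, we have $\Pr(E_m)\to1$. On $E_m$, \eqref{eq:moment-holder-to-learned-set} gives
\[
|m_{j,\widehat\theta_m}-m^\star_{j,\mathrm{post}}|\le C\operatorname{dist}(\widehat\theta_m,\mathcal S^\star_{\mathrm{post}})^\alpha=O_P(r_{\train,m}^\alpha),\qquad j=1,2.
\]
For $j=1$ this says $\nu_{\widehat\theta_m}-\nu^\star_{\mathrm{post}}=O_P(r^\alpha_{\train,m})$. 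For the variance we write
\[
\tau^2_{\widehat\theta_m}-(\tau^\star_{\mathrm{post}})^2=(m_{2,\widehat\theta_m}-m^\star_{2,\mathrm{post}})-(\nu_{\widehat\theta_m}-\nu^\star_{\mathrm{post}})(\nu_{\widehat\theta_m}+\nu^\star_{\mathrm{post}}).
\]
Because $|\nu^\star_{\mathrm{post}}|=O(1)$, the sum $\nu_{\widehat\theta_m}+\nu^\star_{\mathrm{post}}$ is $O_P(1)$, so the whole difference is $O_P(r^\alpha_{\train,m})=o_P(1)$. Since $\tau^\star_{\mathrm{post}}$ is bounded away from zero, the continuous mapping $x\mapsto\sqrt x$ (Lipschitz away from zero) gives $\tau_{\widehat\theta_m}/\tau^\star_{\mathrm{post}}\stackrel{P}{\to}1$. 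In particular, $\tau_{\widehat\theta_m}$ is bounded away from zero with probability tending to one.

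\textbf{Scaling and centering.} For the scaling term, write $s(P_0K_{\widehat\theta_m,\mathrm{post}})=(\tau_{\widehat\theta_m}/\tau^\star_{\mathrm{post}})\cdot(\tau^\star_{\mathrm{post}}/\tau_0)$. The first factor tends to $1$ in probability by the previous step, and the second tends to $1$ by the hypotheses of \Cref{cor:oracle-appc-calibration}. For the centering term we use the decomposition \eqref{eq:error-decomp-est-post}:
\[
b(P_0K_{\widehat\theta_m,\mathrm{post}})=\frac{\tau^\star_{\mathrm{post}}}{\tau_{\widehat\theta_m}}\, b(P_0K_{\theta^\star,\mathrm{post}})+\frac{\sqrt{N_{\new,m}}(\nu^\star_{\mathrm{post}}-\nu_{\widehat\theta_m})}{\tau_{\widehat\theta_m}}.
\]
The first term is $O_P(1)\cdot o_P(1)$ by the oracle condition. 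The second term is $O_P(\sqrt{N_{\new,m}}\,r^\alpha_{\train,m})$, which is $o_P(1)$ by \eqref{eq:training-reference-rate-condition}. \Cref{cor:generic} then yields uniformity.

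\textbf{Main obstacle.} The delicate step is converting the $O_P$ bound on the set distance, together with a bound that holds only locally, into an $O_P$ bound on the moment errors. This is handled by working on $E_m$, whose complement has vanishing probability. Independence of $\mX^\train_m$ from $\mX^\new_m$ means the random $\widehat\theta_m$ enters only through $\mathcal F_m$, so no further coupling argument is needed.
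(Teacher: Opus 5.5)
Your proposal is correct and follows the paper's proof essentially step for step. It uses the same H\"older-to-moment bounds, the same identity $\tau^2_{\widehat\theta_m}-(\tau^\star_{\mathrm{post}})^2=(m_{2,\widehat\theta_m}-m^\star_{2,\mathrm{post}})-(\nu_{\widehat\theta_m}-\nu^\star_{\mathrm{post}})(\nu_{\widehat\theta_m}+\nu^\star_{\mathrm{post}})$, the same factorization of $s$ through $\tau^\star_{\mathrm{post}}$, the same two-term split of $b$ from \eqref{eq:error-decomp-est-post}, and then \Cref{cor:generic}. Your explicit localization on $E_m=\{\operatorname{dist}(\widehat\theta_m,\mathcal S^\star_{\mathrm{post}})<\epsilon\}$ makes rigorous a step the paper leaves implicit; note also that \Cref{ass:optimizer-predictive-equivalence} is not actually needed here, since the H\"older bound at distance zero already forces $m_{1,\theta}$ and $m_{2,\theta}$ to be constant on $\mathcal S^\star_{\mathrm{post}}$.
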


\vspace{1em}

\begin{remark}
If
$\operatorname{dist}(\widehat\theta_m,\mathcal S_{\mathrm{post}}^\star)
=O_P(N_{\train,m}^{-1/2})$
and the moment functionals are Lipschitz to the optimizer set
($\alpha=1$), then \eqref{eq:training-reference-rate-condition} reduces to $N_{\new,m}/N_{\train,m}\longrightarrow0$.
\end{remark}

\vspace{1em}

\begin{remark}
When $\theta$ is random, $N_\train$ typically can be of the same order as $N_\new$. This is because the Bayesian treatment accounts for uncertainty in $\theta$ (see \Cref{app:bayesian-appc}).
\end{remark}

\subsubsection{APPC with variational parameters}
\label{subsec:appc-vi}

Suppose now the latent posterior is approximated with a variational density, $q_{\phi}(\bm{z}_i|\bm{x}_i)$. We still assume known $P_0$; this is relaxed in the next section. Here, $\widehat{\eta}_m=(\widehat{\theta}_m,\widehat{\phi}_m)$ is a function of the training data $\bm X_m^\train$. The reference distribution is $Q_m = P_0 K_{\widehat\eta_m}$ and we denote the optimal variational parameters as $\eta^\star \in \mathcal{S}^\star_{\mathrm{VI}}$.

We can decompose the centering term into different sources of error:
\begin{align}
(\nu_0-\nu_{\widehat{\eta}_m}) 
&= \underbrace{ (\nu_{0} - \nu_{\mathrm{post}}^\star )}_{\text{misspecification error}}
+\underbrace{ (  \nu_{\mathrm{post}}^\star- \nu_{\mathrm{VI}}^\star)}_{\text{variational error}}
+
\underbrace{ (  \nu_{\mathrm{VI}}^\star-\nu_{\widehat{\eta}_m})}_{\text{estimation error}}
,\label{eq:error-decomp-variational}
\end{align}
where $\nu_{\mathrm{VI}}^\star = \E_{\bm{x}\sim P_0K_{\eta^\star}}[h(\bm{x})] $ and $\nu_{\widehat{\eta}_m} = \E_{\bm{x}\sim P_0 K_{\widehat{\eta}_m}}[h(\bm{x})]$.

The variational error combines the posterior-approximation error with the population bias in estimating $\theta$ via optimizing  the ELBO instead of the marginal likelihood. 

The \gls{APPC} with variational parameters is calibrated under similar conditions as the previous section, provided the variational error is controlled (\Cref{prop:vi-appc-calibration} in \Cref{app:appc-vi}). 

\subsubsection{Empirical APPC}
\label{subsec:empirical-appc-general}

We now consider the empirical \gls{APPC}, where $P_0$ is not available and is instead approximated by the empirical distribution of $\bm X_m^\agg$.  The reference distribution is
$Q_m=\widehat{P}_{\agg,m} K_{\widehat{\eta}_m}$. The mean and variance under  $\widehat{P}_{\agg,m} K_{\widehat{\eta}_m}$ are:
\begin{align*}
\nu_{\agg,\widehat{\eta}_m}
=
\mathbb E_{\bm{x}\sim \widehat{P}_{\agg,m} K_{\widehat{\eta}_m}}\left[h(\bm{x})\right], \quad\tau_{\agg,\widehat{\eta}_m}^2
= \Var_{\bm{x}\sim \widehat{P}_{\agg,m} K_{\widehat{\eta}_m}}\{h(\bm{x})\}.
\end{align*}
The centering and scaling terms from \Cref{prop:generic-appc-calibration} are:
\begin{align*}
b(\widehat{P}_{\agg,m} K_{\widehat{\eta}_m}) &= \frac{\sqrt{N_{\new,m}} ( \nu_0 - \nu_{\agg,\widehat{\eta}_m} )}{\tau_{\agg,\widehat{\eta}_m}}, \qquad
s(\widehat{P}_{\agg,m} K_{\widehat{\eta}_m}) = \frac{\tau_{\agg,\widehat{\eta}_m}}{\tau_0}.
\end{align*}
The centering term decomposes as
\begin{align}
(\nu_0 - \nu_{\agg,\widehat\eta_m})
&=
\underbrace{
(\nu_0-\nu_{\mathrm{post}}^\star)
}_{\text{misspecification error}} + 
\underbrace{
(\nu_{\mathrm{post}}^\star-\nu_{\mathrm{VI}}^\star)
}_{\text{variational error}}
+
\underbrace{
(\nu_{\mathrm{VI}}^\star - \nu_{\widehat\eta_m})
}_{\text{estimation error}}
+
\underbrace{
(\nu_{\widehat\eta_m}-\nu_{\agg,\widehat\eta_m})
}_{\text{aggregation error}}.
\label{eq:empirical-centering-decomposition}
\end{align}
Thus, we need to control the aggregation error due to approximating $P_0$ with $\widehat{P}_\agg$, in addition to the training, variational and model misspecification errors.

\vspace{1em}

\begin{proposition}\label{prop:emp-appc-calibration}
Assume the conditions of \Cref{prop:vi-appc-calibration} hold. Assume further that
\Cref{ass:conditional-diagnostic-clt} holds with
$Q_m
=
\widehat P_{\agg,m}K_{\widehat\eta_m}$ and $\mathcal F_m
=
\sigma\left(
\bm X_m^{\train},
\bm X_m^{\agg}
\right)$.
Moreover, assume that $N_{\new,m}/N_{\agg,m}\to 0$ and 
\begin{align}
\E_{\bm{x}^\rep\sim P_0 K_{\widehat{\eta}_m}}[h^4(\bm{x}^\rep)]&=O_P(1) \label{eq:h4-emp-agg}.
\end{align}
Then, the empirical \gls{APPC} is calibrated. 
\end{proposition}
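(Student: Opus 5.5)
The plan is to apply \Cref{cor:generic} with $Q_m=\widehat P_{\agg,m}K_{\widehat\eta_m}$. \Cref{ass:diag-clt} and \Cref{ass:conditional-diagnostic-clt} hold by hypothesis, so it suffices to show $b(\widehat P_{\agg,m}K_{\widehat\eta_m})\stackrel{P}{\to}0$ and $s(\widehat P_{\agg,m}K_{\widehat\eta_m})\stackrel{P}{\to}1$. I will compare the empirical reference to $P_0K_{\widehat\eta_m}$, for which \Cref{prop:vi-appc-calibration} already gives the result. Its proof (assumed) yields
\begin{align*}
\sqrt{N_{\new,m}}\,(\nu_0-\nu_{\widehat\eta_m})\stackrel{P}{\to}0,\qquad \tau_{\widehat\eta_m}\stackrel{P}{\to}\tau_0>0 .
\end{align*}
These control the misspecification, variational and estimation terms in \eqref{eq:empirical-centering-decomposition}. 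What remains is the aggregation error $\nu_{\widehat\eta_m}-\nu_{\agg,\widehat\eta_m}$ and the corresponding variance discrepancy.

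\textbf{Aggregation error.} Define $g_m(\bm x)=\int h(\bm x^\rep)\,K_{\widehat\eta_m}(\bm x,d\bm x^\rep)$ and $g_{2,m}(\bm x)=\int h^2(\bm x^\rep)\,K_{\widehat\eta_m}(\bm x,d\bm x^\rep)$. Then
\begin{align*}
\nu_{\agg,\widehat\eta_m}=\frac{1}{N_{\agg,m}}\sum_{i} g_m(\bm x_{m,i}^\agg),\qquad \nu_{\widehat\eta_m}=\E_{P_0}[g_m(\bm x)].
\end{align*}
The key point is that $\widehat\eta_m$ depends only on $\bm X_m^\train$, which is independent of $\bm X_m^\agg$. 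Conditional on $\bm X_m^\train$, the terms $g_m(\bm x_{m,i}^\agg)$ are therefore i.i.d.\ with mean $\nu_{\widehat\eta_m}$. By Jensen, their conditional variance is at most
\begin{align*}
\E_{P_0}[g_m^2]\le \E_{P_0K_{\widehat\eta_m}}[h^2]\le \bigl(\E_{P_0K_{\widehat\eta_m}}[h^4]\bigr)^{1/2}=O_P(1)
\end{align*}
by \eqref{eq:h4-emp-agg}. A conditional Chebyshev bound then gives $\nu_{\agg,\widehat\eta_m}-\nu_{\widehat\eta_m}=O_P(N_{\agg,m}^{-1/2})$. Multiplying by $\sqrt{N_{\new,m}}$ gives $O_P\bigl((N_{\new,m}/N_{\agg,m})^{1/2}\bigr)=o_P(1)$. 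To pass from the conditional bound to an unconditional $O_P$ statement, I will truncate on the event that the conditional variance is at most a large constant $M$, which has probability near one.

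\textbf{Scaling.} I apply the same argument to second moments. The conditional variance of $g_{2,m}(\bm x_{m,i}^\agg)$ is bounded by $\E_{P_0K_{\widehat\eta_m}}[h^4]=O_P(1)$; this is exactly where the fourth-moment condition is needed. Hence $\E_{\widehat P_{\agg,m}K_{\widehat\eta_m}}[h^2]-\E_{P_0K_{\widehat\eta_m}}[h^2]=o_P(1)$. Combined with the mean result, this gives $\tau^2_{\agg,\widehat\eta_m}-\tau^2_{\widehat\eta_m}=o_P(1)$, and so $\tau_{\agg,\widehat\eta_m}\to\tau_0$ in probability. It follows that $s\stackrel{P}{\to}1$ and that $\tau_{\agg,\widehat\eta_m}$ is bounded away from zero with probability tending to one.

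\textbf{Conclusion.} Then $b=\sqrt{N_{\new,m}}\,(\nu_0-\nu_{\agg,\widehat\eta_m})/\tau_{\agg,\widehat\eta_m}$. Its numerator is the sum of the $o_P(1)$ term from \Cref{prop:vi-appc-calibration} and the $o_P(1)$ aggregation term, and its denominator is bounded away from zero, so $b\stackrel{P}{\to}0$. \Cref{cor:generic} then gives calibration. The main obstacle is the conditioning step: checking that the proof of \Cref{prop:vi-appc-calibration} really delivers the mean and variance limits for $P_0K_{\widehat\eta_m}$ and not only the final calibration conclusion. If it does not, I will re-derive those limits from its H\"older-type moment conditions, exactly as in \Cref{prop:learned-appc-calibration}.
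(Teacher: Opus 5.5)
Your proposal is correct and follows essentially the same route as the paper. It uses the same kernel averages $g$ and $g^{(2)}$, and applies conditional Jensen plus Chebyshev over the aggregation sample (using its independence from $\bm X_m^{\train}$) to both the first and second moments. It then reuses the intermediate limits from the proof of \Cref{prop:vi-appc-calibration} to handle the misspecification, variational and estimation terms.

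Your truncation argument for passing from conditional to unconditional $O_P$ bounds fills in a step the paper leaves implicit. Two small fixes. First, state the scale limit as a ratio, $\tau_{\agg,\widehat\eta_m}/\tau_0\stackrel{P}{\to}1$, since $\tau_0$ may depend on $m$. Second, note that your unnormalized numerator bound needs $\tau_{\widehat\eta_m}=O_P(1)$, which follows from \eqref{eq:h4-emp-agg}.
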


Intuitively, the empirical \gls{APPC} is calibrated when  $\widehat{P}_{\agg,m}$ converges to $P_0$ at a rate much faster than the rate at which the holdout diagnostic $d_{N_{\new,m}}(\bm X_m^\new)$ converges to its mean. This is because any discrepancy between $\widehat{P}_{\agg,m}$ and $P_0$ can be detected with a large enough sample size. 

\subsection{Case Study: PPCA}\label{subsec:ppca-theory}

We now specialize the preceding results to the PPCA model:
\begin{align*}
\bm z_i \sim \mathcal{N}\left(0,\,I_K\right),
\qquad
\bm x_i\mid\bm z_i \sim \mathcal{N}\left(\bm W\bm z_i,\,\sigma^2I_D\right),\qquad \bm{W}\in \R^{D\times K}.
\end{align*}
\textbf{Notation.} For a fixed parameter $\theta=(\bm W,\sigma^2)$, define
\begin{align}
\bm B
\defeq
\bm W \left[\bm W^\top\bm W+\sigma^2I_K\right]^{-1}\bm W^\top,
\qquad
\bm R
\defeq
\sigma^2\left(I_D+\bm B\right).
\label{eq:ppca-B-R}
\end{align}
Let $\theta^\star\in\mathcal S_{\mathrm{post}}^\star$, and let $\bm B^\star$ and $\bm R^\star$ denote the corresponding
quantities in \Cref{eq:ppca-B-R}. The matrices $\bm B^\star$ and $\bm R^\star$ are invariant to rotations of the loadings matrix. Let
$\widehat\theta_m=(\widehat{\bm W}_m,\widehat\sigma_m^2)$ be the maximum likelihood estimator based on
$\bm X_m^\train$, and define $\widehat{\bm{B}}_m$ and $\widehat{\bm{R}}_m$ analogously to \Cref{eq:ppca-B-R}. We fit zero-mean PPCA without centering. The aggregation  sample mean and empirical covariance are  $\overline{\bm x}_m^\agg$ and $\widehat{\bm\Sigma}_{\agg,m}$.
Unless stated otherwise, all distributions, parameters, moments, and matrices may depend on $m$; we suppress this subscript when it does not aid readability.

\textbf{Data generating process.} We consider a misspecified prior case. The true data generating process is
\begin{align}
\bm{z}_i
\sim
G,
\qquad
\bm{x}_i\mid\bm{z}_i
\sim
\mathcal N(\bm{W}_0\bm{z}_i,\,\sigma_0^2I_D),
\label{eq:ppca-ms-data-theory}
\end{align}
where $\bm{W}_0$ and $\sigma_0^2$ are the true parameters.
The prior $G$ need not be standard Gaussian; we denote $\bm m_Z
\defeq
\E_G[\bm z]$ and $\bm\Lambda
\defeq \Var_G(\bm z)$. At the data level, the population mean and variance are $\bm\mu_0\defeq\E_{P_0}[\bm x]$ and $\bm\Sigma_0\defeq\Var_{P_0}(\bm x)$. 

\textbf{Diagnostic.} For a diagnostic function, we consider the weighted mean:
\begin{align*}
d_{N_{\new,m}}(\bm X_m)
=
\bm a_m^\top\overline{\bm x}_m, \quad a_{m,j}\geq 0,\quad \sum_{j=1}^{D_m} a_{m,j} = 1,\quad D_m\lVert \bm{a}_m\rVert_2^2 = O(1).
\end{align*}
Given this diagnostic, we have successive reference distributions in \Cref{tab:ppca-reference-distributions}.

\begin{table}
\centering
\small
\renewcommand{\arraystretch}{1.45}
\begin{tabular}{lll}
\toprule
\textbf{Reference distribution $Q_m$}
&
\textbf{Reference mean $\bm\mu_m^\rep$}
&
\textbf{Reference covariance $\bm V_m^\rep$}
\\
\midrule

$P_0 K_{\theta^\star,\mathrm{post}}$
&
$\bm B^\star\bm\mu_0$
&
$\bm V_m^{\mathrm{oracle}}
 =
 \bm R^\star+\bm B^\star\bm\Sigma_0\bm B^\star$
\\[1mm]

$P_0 K_{\widehat\theta_m,\mathrm{post}}$
&
$\widehat{\bm B}_m\bm\mu_0$
&
$\bm V_m^{\mathrm{learn}}
 =
 \widehat{\bm R}_m
 +
 \widehat{\bm B}_m\bm\Sigma_0\widehat{\bm B}_m$
\\[1mm]
$\widehat P_{\agg,m}K_{\widehat\theta_m,\mathrm{post}}$
&
$\widehat{\bm B}_m\overline{\bm x}_m^\agg$
&
$\bm V_m^{\mathrm{emp}}
 =
 \widehat{\bm R}_m
 +
 \widehat{\bm B}_m
 \widehat{\bm\Sigma}_{\agg,m}
 \widehat{\bm B}_m$
\\
\bottomrule
\end{tabular}
\caption{PPCA reference distributions.}\label{tab:ppca-reference-distributions}
\end{table}

\textbf{Asymptotic $p$-value.} Under a conditional central limit theorem, all three one-sided predictive $p$-values have the common form
\begin{align}
p_m(Q_m)
&=
1-\Phi\left(
\frac{
\sqrt{N_{\new,m}}\,
\bm a^\top
\left(
\overline{\bm x}_m^\new-\bm\mu_m^\rep
\right)
}{
\sqrt{\bm a^\top\bm V_m^\rep\bm a}
}
\right)
+o_P(1).
\label{eq:ppca-common-appc-pvalue}
\end{align}
The PPCA centering and scale quantities for
\Cref{prop:generic-appc-calibration} are
\begin{align}
b_m =\frac{\sqrt{N_{\new,m}}\,
\bm a^\top\left(\bm\mu_0-\bm\mu_m^\rep
\right)}{\sqrt{\bm a^\top\bm V_m^\rep\bm a}},
\qquad 
s_m = \left\{\frac{\bm a^\top\bm V_m^\rep\bm a
}{\bm a^\top\bm\Sigma_0\bm a}\right\}^{1/2}.
\label{eq:ppca-common-b-s}
\end{align}
The three centering discrepancies are nested:
\begin{align}
\bm\mu_0-\bm B^\star\bm\mu_0
&=
(I_D-\bm B^\star)\bm\mu_0,
\label{eq:ppca-fixed-oracle-centering}
\\
\bm\mu_0-\widehat{\bm B}_m\bm\mu_0
&=
\underbrace{(I_D-\bm B^\star)\bm\mu_0}_{\text{misspecification error}}
+
\underbrace{(\bm B^\star-\widehat{\bm B}_m)\bm\mu_0}_{\text{estimation error}},
\label{eq:ppca-learned-centering}
\\
\bm\mu_0-\widehat{\bm B}_m\overline{\bm x}_m^\agg
&=
\underbrace{(I_D-\bm B^\star)\bm\mu_0}_{\text{misspecification error}}
+
\underbrace{(\bm B^\star-\widehat{\bm B}_m)\bm\mu_0}_{\text{estimation error}}
+
\underbrace{
\widehat{\bm B}_m
\left(
\bm\mu_0-\overline{\bm x}_m^\agg
\right)
}_{\text{aggregation error}}.
\label{eq:ppca-empirical-centering}
\end{align}

\begin{assumption}[Regularity conditions]
\label{ass:ppca-regularity}
We impose the following regularity conditions on the PPCA data generating process.
\begin{itemize}
    \item The error variance satisfies $0<c_\sigma\le\sigma_0^2\le C_\sigma<\infty.$
    \item The latent factor mean satisfies $\lVert \bm{m}_{Z}\rVert = O(1)$. 
    \item The latent factor variance satisfies $0 < c_{\bm \Lambda} \leq \lambda_{\mathrm{min}}(\bm \Lambda_m) \leq \lambda_{\mathrm{max}}(\bm \Lambda_m) \leq C_{\bm \Lambda} < \infty.$
\item The latent factor distribution has bounded fourth moments $\sup_m \E_{G_m}\lVert\bm z\rVert_2^4 < \infty.$
\end{itemize}
\end{assumption}

In this misspecified case, we assume the latent factors are pervasive \citep{bai2003inferential,fan2013large}. Pervasiveness enables recovery of the signal subspace as $N_{\train,m}$ and $D_m$ grow. 

\begin{assumption}[Pervasiveness]
\label{ass:ppca-pervasive}
The number of factors $K$ is fixed. For some positive-definite
$K\times K$ matrix $\bm Q_W$, we have
$\frac{1}{D_m}\bm W_0^\top\bm W_0
\longrightarrow
\bm Q_W.$
\end{assumption}

We now specialize \Cref{prop:generic-appc-calibration} for PPCA.

\begin{proposition}
\label{prop:ppca-three-reference-calibration} Consider the model in \Cref{eq:ppca-ms-data-theory}. Assume that
$\bm a^\top\bm\Sigma_0\bm a$ and
$\bm a^\top\bm V_m^{\mathrm{oracle}}\bm a$ are bounded away from zero, and that the
relevant conditional central limit theorem holds for each reference
distribution in \Cref{tab:ppca-reference-distributions}.

\begin{enumerate}
\item[(i)] Suppose \Cref{ass:ppca-regularity,ass:ppca-pervasive} hold. If $D_m\to \infty$ and $\sqrt{N_{\new,m}}/D_m\to 0$, then
\begin{align}
\frac{
\sqrt{N_{\new,m}}\,
\bm a^\top(I_{D_m}-\bm B^\star)\bm\mu_0
}{
\sqrt{\bm a^\top\bm V_m^{\mathrm{oracle}}\bm a}
}
&\stackrel{P}{\longrightarrow} 0,
\label{eq:ppca-oracle-centering-condition}
\\
\frac{
\bm a^\top\bm V_m^{\mathrm{oracle}}\bm a
}{
\bm a^\top\bm\Sigma_0\bm a
}
&\stackrel{P}{\longrightarrow} 1,
\label{eq:ppca-oracle-scale-condition}
\end{align}
and the oracle \gls{APPC} is calibrated.

\item[(ii)] Suppose the conditions in part (i) hold. If $N_{\train, m} \to \infty$ and $N_{\new,m}/N_{\train,m} \to 0$, 
then the \gls{APPC} with estimated parameters and known $P_0$ is calibrated. 

\item[(iii)] Suppose the conditions in part (ii) hold. If  $N_{\new,m}/N_{\agg,m}\to0$ and
\begin{align}
\E_{\bm{x}\sim P_0 K_{\widehat\theta_m,\mathrm{post}}}
\left[\left(\bm a^\top\bm x\right)^4\right]
&=
O_P(1),
\label{eq:ppca-oracle-fourth-moment}
\end{align}
then the empirical \gls{APPC} is calibrated.
\end{enumerate}
\end{proposition}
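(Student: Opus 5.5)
The plan is to verify the hypotheses of \Cref{cor:generic} for each reference distribution in \Cref{tab:ppca-reference-distributions}, using the explicit forms of $b_m$ and $s_m$ in \Cref{eq:ppca-common-b-s}. The central fact I will use is a closed form for the population optimizer. Since we fit zero-mean PPCA without centering, the expected log-likelihood depends on $P_0$ only through the second-moment matrix
\begin{align*}
\bm M_0 = \bm\Sigma_0+\bm\mu_0\bm\mu_0^\top = \bm W_0(\bm\Lambda+\bm m_Z\bm m_Z^\top)\bm W_0^\top+\sigma_0^2 I_D .
\end{align*}
This matrix has exact PPCA form: a rank-$K$ spike plus isotropic noise. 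By \citet{tipping1999probabilistic}, every element of $\mathcal S^\star_{\mathrm{post}}$ therefore satisfies
\begin{align*}
\sigma^{\star2}=\sigma_0^2, \qquad \bm W^\star\bm W^{\star\top}=\bm W_0(\bm\Lambda+\bm m_Z\bm m_Z^\top)\bm W_0^\top .
\end{align*}
Write $\bm W^\star\bm W^{\star\top}=\bm U\,\mathrm{diag}(\ell_k)\,\bm U^\top$. Then
\begin{align*}
\bm B^\star=\bm U\,\mathrm{diag}\{\ell_k/(\ell_k+\sigma_0^2)\}\,\bm U^\top ,
\end{align*}
and \Cref{ass:ppca-regularity,ass:ppca-pervasive} give $\ell_k\asymp D_m$. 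Consequently $(I-\bm B^\star)$ acts on $\mathrm{col}(\bm W_0)$ with operator norm $O(D_m^{-1})$.

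\textbf{Part (i).} For the centering, $\bm\mu_0=\bm W_0\bm m_Z$ lies in $\mathrm{col}(\bm U)$, so
\begin{align*}
\lVert(I-\bm B^\star)\bm\mu_0\rVert\le O(D_m^{-1})\lVert\bm W_0\rVert\lVert\bm m_Z\rVert=O(D_m^{-1/2}).
\end{align*}
Combining this with $\lVert\bm a\rVert_2=O(D_m^{-1/2})$ gives $|\bm a^\top(I-\bm B^\star)\bm\mu_0|=O(D_m^{-1})$. Since the denominator is bounded below, \Cref{eq:ppca-oracle-centering-condition} follows from $\sqrt{N_{\new,m}}/D_m\to0$.

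For the scale, I expand
\begin{align*}
\bm\Sigma_0-\bm V_m^{\mathrm{oracle}}
=\bigl(\bm W_0\bm\Lambda\bm W_0^\top-\bm B^\star\bm W_0\bm\Lambda\bm W_0^\top\bm B^\star\bigr)-\sigma_0^2(\bm B^\star+\bm B^{\star2}).
\end{align*}
The second term contributes $O(\lVert\bm a\rVert^2)=O(D_m^{-1})$ to the quadratic form. The first term equals
\begin{align*}
(I-\bm B^\star)\bm W_0\bm\Lambda\bm W_0^\top+\bm B^\star\bm W_0\bm\Lambda\bm W_0^\top(I-\bm B^\star).
\end{align*}
Each piece has operator norm $O(D_m^{-1/2}\cdot D_m^{1/2})=O(1)$, so its quadratic form in $\bm a$ is $O(D_m^{-1})$. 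Dividing by $\bm a^\top\bm\Sigma_0\bm a$, which is bounded away from zero, gives \Cref{eq:ppca-oracle-scale-condition}. \Cref{cor:oracle-appc-calibration} then finishes part (i).

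\textbf{Part (ii).} This is the main obstacle, because $D_m\to\infty$ prevents a naive appeal to the fixed-dimensional \Cref{prop:learned-appc-calibration}. Instead, I will bound the estimation error directly. The MLE has the same closed form as the population optimizer, with $\bm M_0$ replaced by $\widehat{\bm M}=N_{\train}^{-1}\sum_i\bm x_i\bm x_i^\top$. I will control $\widehat{\bm B}_m-\bm B^\star$ in three steps:
\begin{enumerate}
\item[(a)] a Davis--Kahan bound for the top-$K$ eigenspace, which has eigengap $\asymp D_m$;
\item[(b)] Weyl's inequality for the top-$K$ eigenvalues;
\item[(c)] consistency of $\widehat\sigma_m^2$, computed as the average of the trailing eigenvalues.
\end{enumerate}
I do not need the full operator-norm error. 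It suffices to bound the projected quantities $\bm a^\top(\widehat{\bm B}_m-\bm B^\star)\bm\mu_0$ and $\bm a^\top(\widehat{\bm B}_m\bm\Sigma_0\widehat{\bm B}_m-\bm B^\star\bm\Sigma_0\bm B^\star)\bm a$, which only involve directions within or near $\mathrm{col}(\bm W_0)$. Using factor-model estimates in the style of \citet{bai2003inferential,fan2013large}, I aim to show
\begin{align*}
\lVert(\widehat{\bm B}_m-\bm B^\star)\bm\mu_0\rVert=O_P\bigl(\sqrt{D_m/N_{\train,m}}+D_m^{-1/2}\bigr).
\end{align*}
Multiplying by $\lVert\bm a\rVert=O(D_m^{-1/2})$ gives
\begin{align*}
|\bm a^\top(\widehat{\bm B}_m-\bm B^\star)\bm\mu_0|=O_P\bigl(N_{\train,m}^{-1/2}+D_m^{-1}\bigr).
\end{align*}
After scaling by $\sqrt{N_{\new,m}}$, this vanishes because $N_{\new,m}/N_{\train,m}\to0$ and $\sqrt{N_{\new,m}}/D_m\to0$. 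The scale ratio is handled analogously. In each case I will add and subtract the oracle quantities of part (i) and invoke \Cref{cor:generic}.

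\textbf{Part (iii).} I will use the aggregation decomposition \Cref{eq:ppca-empirical-centering}. Conditional on $\bm X^\train$, the aggregation term $\bm a^\top\widehat{\bm B}_m(\bm\mu_0-\overline{\bm x}^\agg_m)$ has mean zero and variance $\bm a^\top\widehat{\bm B}_m\bm\Sigma_0\widehat{\bm B}_m\bm a/N_{\agg,m}=O_P(N_{\agg,m}^{-1})$. After scaling by $\sqrt{N_{\new,m}}$ it is therefore $o_P(1)$, since $N_{\new,m}/N_{\agg,m}\to0$.

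For the scale, I need
\begin{align*}
\bm a^\top\widehat{\bm B}_m(\widehat{\bm\Sigma}_{\agg,m}-\bm\Sigma_0)\widehat{\bm B}_m\bm a=o_P(1).
\end{align*}
This is the variance of a conditionally i.i.d.\ average of $(\bm a^\top\widehat{\bm B}_m\bm x)^2$, minus the squared-mean correction. I will control it with Chebyshev's inequality using the fourth-moment condition \Cref{eq:ppca-oracle-fourth-moment}, which is precisely the PPCA instance of \Cref{eq:h4-emp-agg}. Alternatively, part (iii) can be cast directly as an application of \Cref{prop:emp-appc-calibration} once part (ii) has supplied the estimation-error control.
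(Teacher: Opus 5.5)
Your proposal is correct and follows essentially the paper's route: the closed-form population optimizer with $O(D_m^{-1})$ shrinkage of $\bm B^\star$ on $\mathrm{col}(\bm W_0)$ for (i), Davis--Kahan with a $D_m$-order eigengap plus Weyl and the trailing-eigenvalue formula for $\widehat\sigma_m^2$ for (ii), and conditional Chebyshev with the fourth-moment condition for (iii). The differences are minor. In (i), the paper handles the scale through the exact identity $\bm V_m^{\mathrm{oracle}}-\bm\Sigma_0=\bm B^\star\bm E_0\bm B^\star-\bm E_0$ with $\bm E_0=-\bm\mu_0\bm\mu_0^\top$, rather than your term-by-term expansion. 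In (ii), your steps (a)--(c) need the perturbation bound $\lVert\widehat{\bm S}_{\train,m}-\bm S_0\rVert_{\mathrm{op}}=O_P(D_m/\sqrt{N_{\train,m}})$ (from $\E\lVert\bm x\rVert^4=O(D_m^2)$); with it they give the full bound $\lVert\widehat{\bm B}_m-\bm B^\star\rVert_{\mathrm{op}}=O_P(D_m^{-1}+N_{\train,m}^{-1/2})$ despite your remark that you avoid it, and the paper proves this bound and applies it to both the centering and scale terms.
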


%% file: 4-experiments.tex
\glsresetall

\section{Experiments}
\label{sec:Experiments}

We now study the \gls{APPC} for deep latent variable models, predominantly VAEs. Our study considers the questions: (1) Can the aggregated posterior predictive (APP) generate realistic samples even when the prior is incorrect? (2) How does the APP with a simple prior compare with existing VAE variants which have more complex priors? 
We find that:
\begin{enumerate}
    \item When the prior is incorrect, the APP can still generate samples for which the diagnostic does not detect a discrepancy.
    \item Unlike the PPCA example, the \gls{APPC} for VAEs are often not uniformly distributed. This may be due to optimization, variational or aggregation errors. 
    \item The APP performs similarly to existing VAE variants with more complex priors.
\end{enumerate}

\subsection{Datasets}
We consider both synthetic and real-world datasets covering heavy-tailed, multimodal, structured, and high-dimensional settings. For details on the exact settings, see \Cref{app:syn-datasets}.

\textbf{Synthetic datasets:}
\begin{enumerate}
    \item Bivariate-$t_{2}$: bivariate $t$-distributed data with $2$ degrees of freedom. 
    \item Clustered-$t_{2}$: $5$-component mixture of bivariate $t$ with $2$ degrees of freedom. 
\end{enumerate}

\textbf{Real datasets:}
\begin{enumerate}[start=3]
    \item Finance:  standardized daily bivariate returns of S$\&$P $500$ and Dow Jones indices from Yahoo Finance market data, following the setup of \citet{tam2025statistical}.
    \item Images: MNIST, a dataset of grayscale handwritten digit images \citep{lecun2010mnist}.
\end{enumerate}

\subsection{Models}

The baseline model is a standard VAE \citep{kingma2013auto} with a Gaussian latent prior. The decoder uses a Gaussian likelihood with learnable variance for continuous data and a Bernoulli likelihood for image data.  

Across all the experiments, for the fitted standard VAE, we consider three latent sampling strategies to generate replicates: (i) prior sampling, where latent variables are drawn from the Gaussian prior (G-PS); (ii) empirical aggregated posterior sampling, where latents are obtained by encoding the aggregation dataset, $\mX^\agg$ and resampling from the aggregated posterior (\Cref{eq:encoder-empirical-aggregation-notvi}) (G-EAP); and (iii) fitted aggregated posterior sampling, where a diffusion model is trained on encoded latent variables to produce an approximation to the aggregated posterior (G-DAP). (The ``G'' indicates that the first-stage VAE had a Gaussian prior). In each case, the latent draws are used to generate replicates via the fitted likelihood.

We compare these strategies against VAE variants that increase latent expressiveness and target heavy-tailed or multimodal structures, including VAEs with mixture-based priors \citep{tomczak2018vamp} and VAEs with heavy tailed priors \citep{kim2024t}. 

For the real datasets, we employ additional strategies for the VAE including: adjusting the KL penalty to avoid posterior collapse, and using tempering when drawing from the latent posterior (see \Cref{app:real-data-adjustments} for more details). 

\parhead{Experimental Framework.} 
We split the data into three: $\{\bm{X}^\train, \bm{X}^\agg, \bm{X}^\new\}$. For each model, $\bm{X}^\train$ is used to obtain parameter estimates. For strategies which estimate the aggregated posterior, the empirical distribution of $\bm{X}^\agg$ is used to approximate $P_0$. Finally, $\bm{X}^\new$ is the holdout data for the predictive check.

For synthetic data, the  $p$-value distribution is approximated by repeating the experiments $T$ times. For real data, we instead use $T$ bootstrapped trials. In each trial, we resample the observed data with replacement to form the three datasets.

\parhead{Evaluation.} We evaluate sampling strategies based on the distribution of $p$-values of their corresponding predictive checks. Ideally, a predictive check should have uniform $p$-values under the null hypothesis that the sampling strategy produces realistic data (with respect to the diagnostic function). We use Kolmogorov-Smirnov to test uniformity of $p$-values.

\subsection{Diagnostic Functions}

For each dataset, we chose a diagnostic sensitive to the characteristics of that data. 

\textbf{$t$-distributed data.} We use the mean log-norm diagnostic, 
\begin{align*}
d(\bm{X}) = \frac{1}{n}\sum_{i=1}^n \log\big(1+\|\bm{x}_i\|\big)
\end{align*} 
which captures overall magnitude while reducing sensitivity to extreme observations. Such log-moment functionals are commonly used in heavy-tail analysis \citep{resnick2007heavy}.

\textbf{Financial data.} We use the joint extreme share diagnostic
\begin{align*}
d(\bm{X}) = \frac{1}{n}\sum_{i=1}^n \mathbf{1}\left(\min_{1\le j \le d} |\bm{x}_{ij}| > t , \, \mathrm{sign}(\bm{x}_{i1}) = \dots = \mathrm{sign}(\bm{x}_{id})\right),
\end{align*}
measuring the frequency of simultaneous extreme co-movements across dimensions. Such events are of interest in multivariate extreme value analysis \citep{embrechts2013modelling}.

\textbf{MNIST.} We use a centroid confidence diagnostic, motivated by prototype learning \citep{snell2017prototypical}. This diagnostic assesses whether generated samples resemble coherent digit prototypes. Let $\{c_k\}_{k=1}^{K}$ denote the class centroids where $c_k$ is the sample mean of images with label $k$ over the training data. The diagnostic is
\begin{align*}
d(\bm{X}) = \frac{1}{n}\sum_{i=1}^n \max_k \pi_{ik}, \quad \text{where  } \pi_{ik} = \frac{\exp(-\|\bm{x}_i - c_k\|^2)}{\sum_{\ell=1}^K \exp(-\|\bm{x}_i - c_\ell\|^2)}.
\end{align*}
For each image, $\max_k \pi_{ik}$ measures how strongly it resembles its closest class centroid.

\subsection{Results}

\parhead{1. Bivariate $t$ with 2 df.}
We first study heavy-tailed bivariate $t$-distributed data. We compare G-PS, G-EAP and G-DAP for the baseline VAE model to models with more complex priors: a VAE with Student's-$t$ prior (stVAE, \citet{24babff9-6288-42ab-a099-4272559768c3}), the $t^3$-VAE (\citet{kim2024t}); and a mixture-based prior (VampVAE, \citet{tomczak2018vamp}). We also fit a diffusion model \citep{ho2020denoising} directly to the data.

We see the G-EAP and G-DAP strategies can generate heavy-tailed data (\Cref{fig:bivt-2df}).  However, their \gls{APPC} $p$-value distributions (orange and green) are not quite uniformly distributed, with peaks at zero and one. These peaks indicate that either the sampling strategy does not replicate the data distribution or that finite-sample training, variational or aggregation errors remain non-negligible. Despite this, G-EAP and G-DAP often produce samples for which the diagnostic does not detect a discrepancy from the holdout data, and have similar $p$-value distributions to the more complicated models (VampPrior and stVAE). 
Further, G-EAP and G-DAP have similar predictive check $p$-values to a diffusion in data-space.

\begin{figure}[!htbp]
    \centering
    \includegraphics[width=\linewidth]{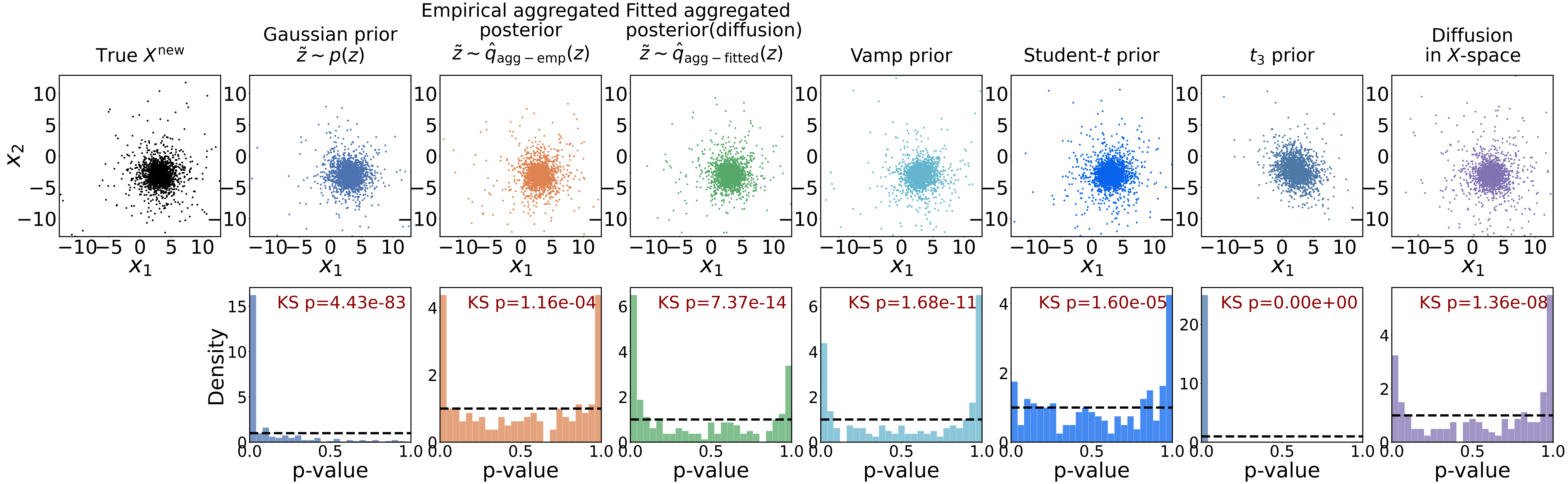}
    \caption{\textbf{Bivariate t.} Top: Draws of synthetic data $\bm{X}^\rep$. Bottom: $p$-values from model checks over $200$ trials. G-EAP and G-DAP $p$-values are not uniform but have similar performance as more complex generative models. Here, $N_\train = N_\agg = 10000, N_\new = 2000$.}
    \label{fig:bivt-2df}
\end{figure}

\parhead{2. Clustered $t$ with 2 df.} We next study  $5$-component bivariate $t$-distributed data (\Cref{fig:clustt-2df}). We compare G-PS, G-EAP and G-DAP to VAEs which have mixture priors (the VampPrior, and a mixture of $t$-distributed components). Our observations are similar to the previous experiment: G-EAP and G-DAP perform similarly to the VampPrior and the VAE with mixture of $t$-distributed components in terms of their predictive check $p$-values.

\begin{figure}[!htbp]
    \centering
    \includegraphics[width=\linewidth]{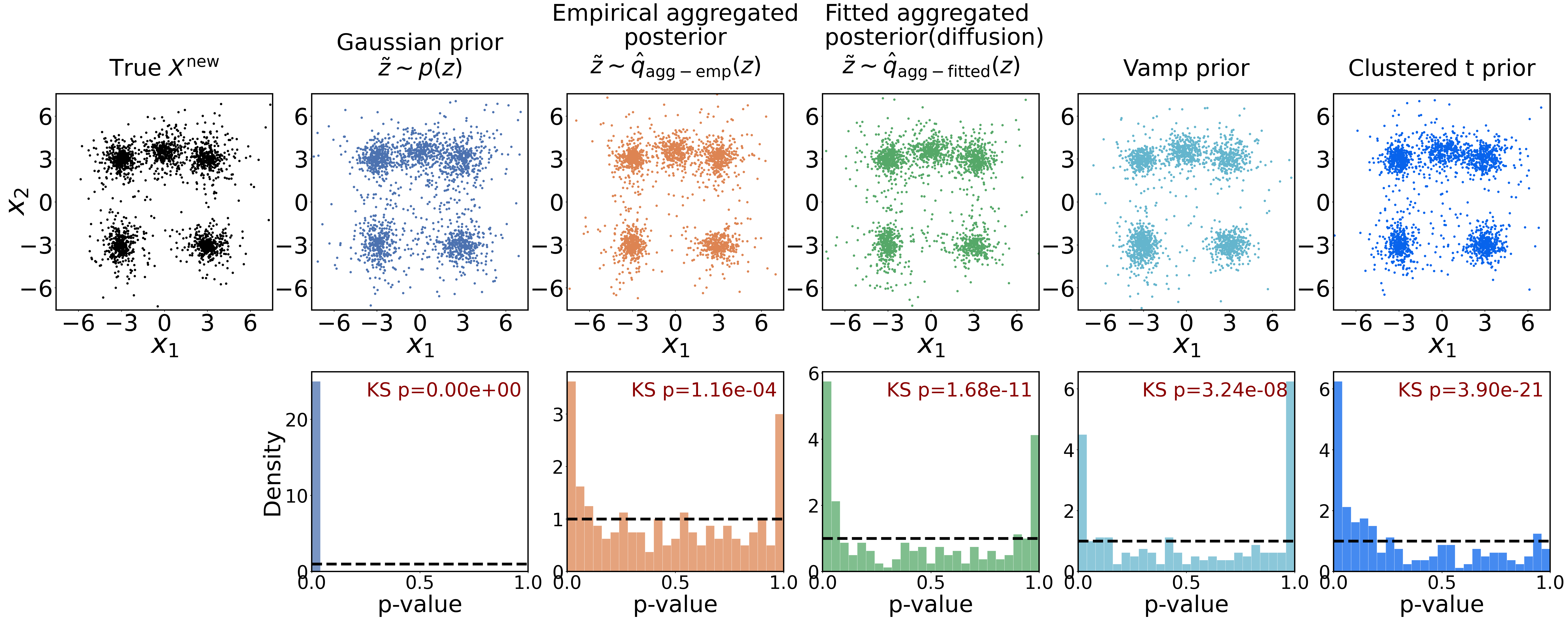}
    \caption{\textbf{Clustered t.} Top: Draws of synthetic data $\bm{X}^\rep$.  Bottom: $p$-values from model checks over $200$ trials. G-EAP and G-DAP $p$-values are not uniform but show similar behavior to more complex generative models. Here, $N_\train = N_\agg = 10000, N_\new = 2000$.}
    \label{fig:clustt-2df}
\end{figure}

\parhead{3. Finance data.} G-EAP can produce samples similar to the heavy-tailed observed data: the G-EAP $p$-values are consistent with a uniform distribution (\Cref{fig:financial}). Meanwhile, the G-DAP $p$-values are not uniform; the fitted diffusion may be oversmoothing. 

\begin{figure}[!htbp]
    \centering  \includegraphics[width=0.65\linewidth]{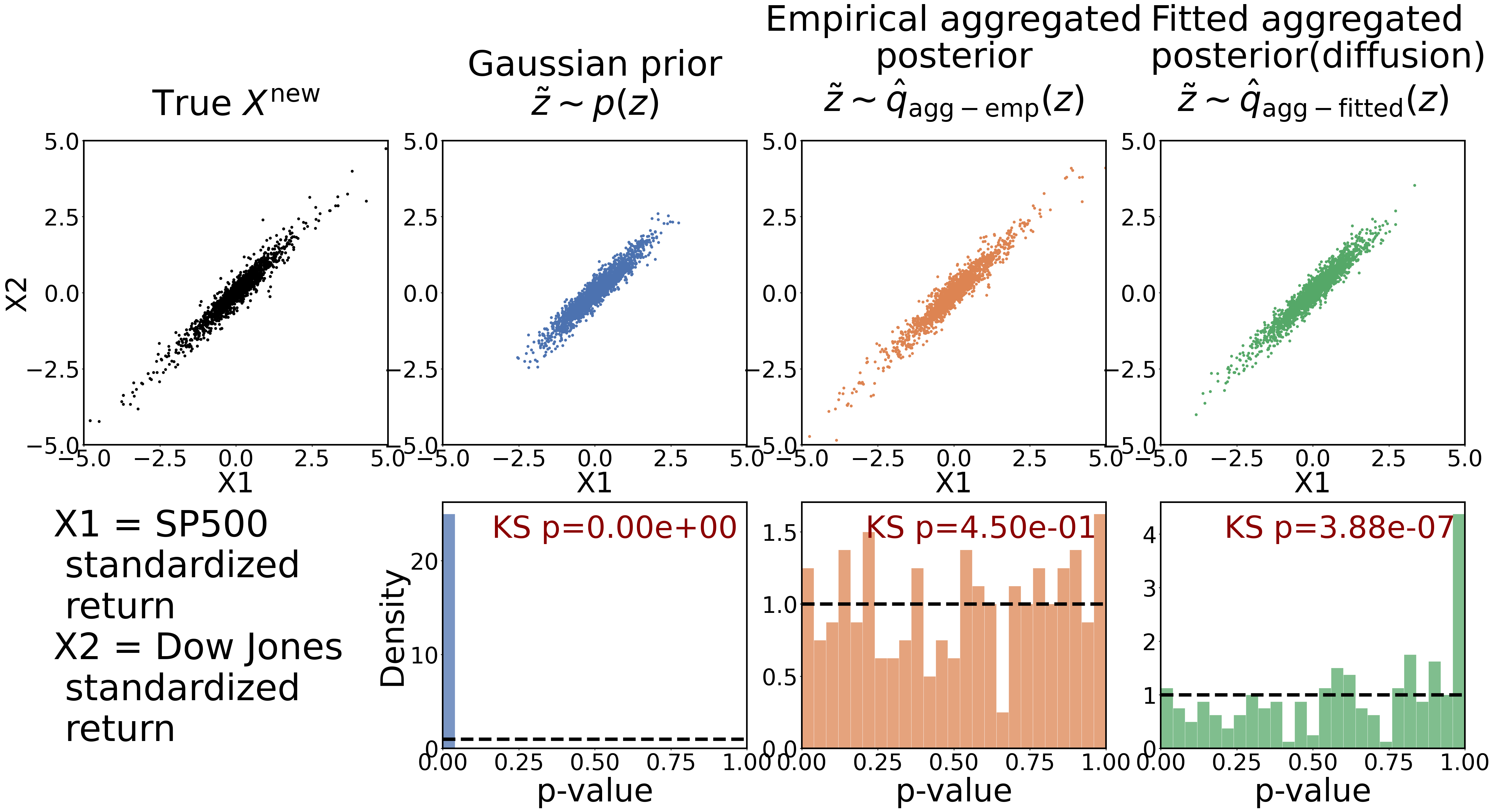}
    \caption{\textbf{Financial.} Top: Aggregated posterior with standard VAE (learnable variance, 1000 epochs) produces synthetic $\bm{X}^\rep$ which visually capture the heavy-tailed behavior of financial data. Bottom: G-EAP $p$-values are approximately uniform. Here,  $N_{\mathrm{train}} = N_{\mathrm{agg}} = 4000, N_{\mathrm{new}} = 2000$.}
    \label{fig:financial}
\end{figure}

\parhead{4. MNIST data.} G-EAP produces samples that are more visually aligned with a digit class than prior sampling and the G-DAP (\Cref{fig:mnist}). Further, the corresponding G-EAP $p$-values with the centroid confidence diagnostic are consistent with a uniform distribution.

\begin{figure}[!htbp]
    \centering
    \includegraphics[width=0.65\linewidth]{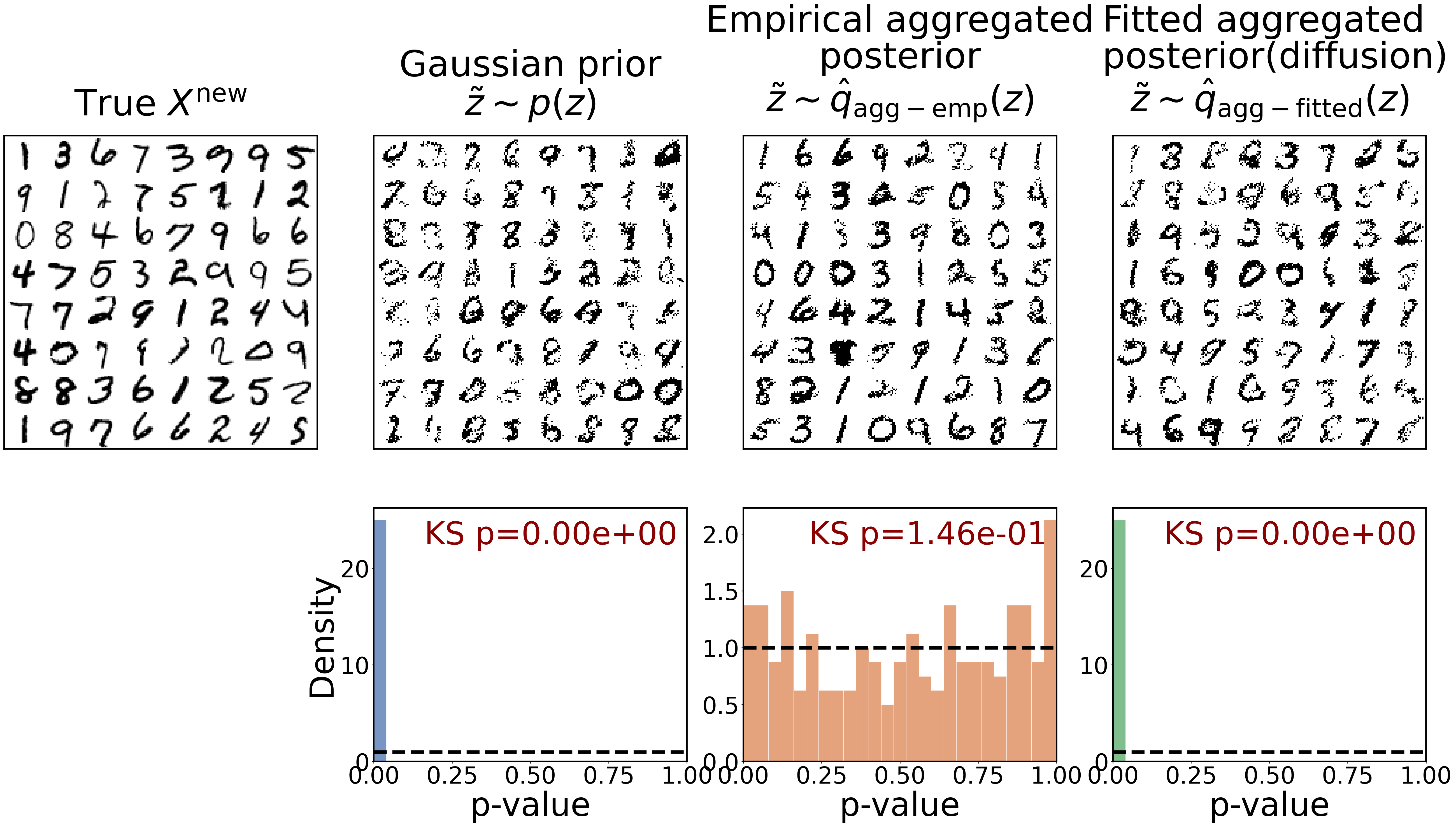}
    \caption{\textbf{MNIST.} Top: G-EAP produces recognizable digits. Bottom: G-EAP $p$-values are approximately uniform. Here,  $N_{\mathrm{train}} = N_{\mathrm{agg}} = 10000, N_{\mathrm{new}} = 2000$.}
    \label{fig:mnist}
\end{figure}

%% file: 5-discussion.tex
\glsresetall

\section{Discussion} \label{sec: Discussion}

In this paper, we introduce the \gls{APPC}, a holdout predictive check for modern generative modeling pipelines. In modern generative modeling, practitioners often use a two-stage strategy where first a latent variable generative model is fit, and second, a distribution over the encoded latent variables is estimated, i.e. the aggregated posterior.  To check such pipelines, we propose the \gls{APPC}. The \gls{APPC} asks: does the two-stage strategy produce synthetic data which ``looks like'' a holdout dataset?

We provide theoretical guarantees for the \gls{APPC}; in particular, we provide sufficient conditions under which the \gls{APPC} yields calibrated $p$-values, even when the prior is incorrect. We specialize these results to a classic latent variable model, PPCA. A limitation of our results is that the calibration conditions need to be verified on a model-by-model basis. 

In experiments, we show that for VAE models, the empirical aggregated posterior predictive can capture heavier tails than Gaussian prior sampling. Our empirical results suggest several interesting directions for future work. Specifically, the two-stage strategy often recovers the target distribution well, motivating a broader theoretical investigation of such procedures from an empirical Bayes lens. Another direction is to investigate whether iterative aggregated posterior refitting can further improve performance, where the aggregated posterior from the first stage is used as the prior in the next step and the model is refit.

%% file: appendix.tex

\glsresetall

\section{Proofs}

The proofs in this section are organized as follows. 
\begin{enumerate}
    \item We consider the PPCA example \Cref{subsec:ppca}.  We derive the predictive check $p$-values for the latent prior predictive and conditional-replicate predictive distributions in \Cref{app:ppca-predictive-check-pval}. We show that the conditional-replicate $p$-values are not calibrated.
    \item We prove the theorems for the \gls{APPC} from \Cref{sec:theory}.
    \item We derive a Bayesian analogue of the \gls{APPC}, treating $\theta$ as random.
    \item Finally, we prove the calibration results for the PPCA example under prior misspecification.
\end{enumerate}

\subsection{PPCA example: Predictive check $p$-values}
\label{app:ppca-predictive-check-pval}

For the PPCA model \Cref{eq:ppca-ws-model}, the different predictive distributions from \Cref{sec:local-latent-references} are:

\emph{Latent prior predictive distribution:}
    \begin{align}
    \label{eq:latent-prior-predictive-distribution-ppca}
\bm{x}^{\rep, \mathrm{prior}}\mid\bm{X}^\train \sim \mathcal{N}\left(0,\, \widehat{\bm{W}}\widehat{\bm{W}}^\top +  \widehat{\sigma}^2\bm{I}_D\right);
    \end{align}
\emph{Conditional-per-sample predictive distribution:}
    \begin{align}
    \label{eq:conditional-replicate-predictive-distribution-ppca}
\bm{x}_i^{\rep, \mathrm{cond}}\mid\bm{X}^\train, \bm{x}_i^\agg \sim \mathcal{N}\left(\widehat{\bm{B}}\bm{x}_i^\agg,\,\widehat{\sigma}^2(\bm{I}_D + \widehat{\bm{B}})\right), \quad i =1,\dots, N_{\new};
    \end{align}
where $\widehat{\bm{B}} =\widehat{\bm{W}}(\widehat{\bm{W}}^\top{\widehat{\bm{W}}} + \widehat{\sigma}^2I_K)^{-1}{\widehat{\bm{W}}}^\top$.
    
\emph{Empirical aggregated posterior predictive distribution:}
    \begin{align}
\bm{x}^{\rep,\agg} \mid \bm{X}^\train, \bm{X}^\agg \sim \frac{1}{N_\agg}\sum_{i=1}^{N_\agg}\mathcal{N}\left(\widehat{\bm{B}}\bm{x}_i^\agg,\, \widehat{\sigma}^2(\bm{I}_D + \widehat{\bm{B}})\right).
\label{eq:ppca-empirical-aggregate-replicates-generator}
\end{align}

We provide the $p$-values for the replicate distributions from \Cref{sec:local-latent-references} for the mean diagnostic:
\begin{align*}
d(\bm{X}) = \bm{a}^\top \overline{\bm{x}}, \quad \overline{\bm{x}}= (\overline{x}_1,\dots, \overline{x}_D), \quad \overline{x}_j = \frac{1}{N}\sum_{i=1}^N x_{ij}.
\end{align*}

\subsubsection{Latent prior predictive}
\label{app:ppca-PC-pval}
The distribution of a synthetic data point ${\bm{x}_i}^\rep, \, i = 1, \dots, N_\new$ is given by \Cref{eq:latent-prior-predictive-distribution-ppca}.
Since ${\bm{x}}_i^{\rep, \mathrm{prior}}$ conditional on $\bm{X}^\train$ are i.i.d. draws from \Cref{eq:latent-prior-predictive-distribution-ppca}, we have
\begin{align*}
    \sqrt{N_{\new}}\bm{a}^\top\overline{\bm{x}}^{\rep, \mathrm{prior}}\mid\bm{X}^\train  \stackrel{d}{\longrightarrow} \mathcal{N}\left(0,\bm{a}^\top( \widehat{\bm{W}}\widehat{\bm{W}}^\top + \widehat{\sigma}^2I_D )\bm{a}\right).
\end{align*}
Then the predictive check $p$-value from \Cref{eq:predictive-check-pval} for the latent prior predictive distribution is
\begin{align}
    p^{\mathrm{prior}}_{\mathrm{PC}}
    &= 1- \Phi\left(\frac{\bm{a}^\top\overline{\bm{x}}^{\new}}{\sqrt{\frac{1}{N_{\new}}\,\bm{a}^\top ( \widehat{\bm{W}}\widehat{\bm{W}}^\top + \widehat{\sigma}^2I_D) \bm{a}}}
    \right).
    \label{eq:ppca-latent-prior-p-val}
\end{align}

\subsubsection{Conditional-per-sample predictive}
\label{app:ppca-HPC-pval}
The conditional-per-sample predictive distribution is given by \Cref{eq:conditional-replicate-predictive-distribution-ppca}.
Conditional on $\bm{X}^{\train}$ and $\bm{X}^\agg$, we have
\begin{align*}
    \sqrt{N_{\new}}\,\bm{a}^\top(\overline{\bm{x}}^{\rep, \mathrm{cond}} -\widehat{\bm{B}}\overline{\bm{x}}^\agg_{1:N_{\new}})\mid \bm{X}^\agg, \bm{X}^\train \stackrel{d}{\longrightarrow} \mathcal{N}\left(0,\,\bm{a}^\top\widehat{\sigma}^2(I_D + \widehat{\bm{B}}) \bm{a}\right)
\end{align*}
Hence, the predictive check $p$-value from \Cref{eq:predictive-check-pval} becomes
\begin{align}
    \label{eq:ppca-cond-rep-prior-p-val}
    p^{\mathrm{cond}}_{\mathrm{PC}}
    &=  1- \Phi\left(\frac{\bm{a}^\top\overline{\bm{x}}^{\new} - \bm{a}^\top \widehat{\bm{B}}\,\overline{\bm{x}}^{\agg}_{1:N_\new}}{\sqrt{\frac{1}{N_\new}\,\bm{a}^\top  \widehat{\sigma}^2(I_D+\widehat{\bm{B}}) \bm{a}}}
    \right).
\end{align}

\subsubsection{Calibration of predictive check $p$-values for PPCA}
\label{app:ppca-predictive-check-pval-calibration}
\begin{proposition}
Suppose the observed data follows the well-specified PPCA data generating process \Cref{eq:ppca-ws-model}. Define $\bm{\Sigma}_0 = \bm{W}_0\bm{W}_0^\top + \sigma_0^2I_D$ and $\bm{R}_0 = \sigma_0^2(I_D+\bm{B}_0)$. Suppose $\widehat{\bm{W}}\widehat{\bm{W}}^\top \stackrel{P}{\to} \bm{W}_0\bm{W}_0^\top$ and $\widehat{\sigma}^2\stackrel{P}{\to}\sigma_0^2$. If $N_\train, N_\new \to \infty$ and $N_\agg \geq N_\new$, then 
\begin{align*}
p^{\mathrm{prior}}_{\mathrm{PC}} &\stackrel{d}{\longrightarrow} U(0,1),\\
p^{\mathrm{cond}}_{\mathrm{PC}} &= 1- \Phi(s_{\mathrm{cond}}Z) + o_P(1),\quad Z\sim N(0,1)
\end{align*}
where 
\begin{align}
s_{\mathrm{cond}} = \sqrt{\frac{\bm{a}^\top (\bm{\Sigma}_0 + \bm{B}_0\bm{\Sigma}_0\bm{B}_0)\bm{a}}{\bm{a}^\top\bm{R}_0\bm{a}}}.
\label{eq:cond-rep-ppca-std}
\end{align}
That is, $p^{\mathrm{prior}}_{\mathrm{PC}}$ is asymptotically uniform (i.e. calibrated) while $p^{\mathrm{cond}}_{\mathrm{PC}}$ is not calibrated except when $\bm{a}^\top\bm{B}_0=0$.
\end{proposition}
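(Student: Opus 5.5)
Both claims follow from the explicit $p$-value formulas \eqref{eq:ppca-latent-prior-p-val} and \eqref{eq:ppca-cond-rep-prior-p-val}, together with a CLT for the holdout mean and the consistency of $\widehat{\bm W}\widehat{\bm W}^\top$ and $\widehat\sigma^2$. Under the well-specified model, $\bm\mu_0=0$ and $\bm\Sigma_0=\bm W_0\bm W_0^\top+\sigma_0^2 I_D$. Since $\bm X^\new$ consists of $N_\new$ i.i.d.\ Gaussian draws, $\sqrt{N_\new}\,\bm a^\top\overline{\bm x}^\new/\sqrt{\bm a^\top\bm\Sigma_0\bm a}=:Z_\new$ is exactly $\mathcal N(0,1)$. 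We take $D$ and $\bm a$ fixed, so that $\bm a^\top\bm\Sigma_0\bm a>0$.

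\textbf{Prior predictive.} Write the argument of $\Phi$ in \eqref{eq:ppca-latent-prior-p-val} as $Z_\new\cdot\{\bm a^\top\bm\Sigma_0\bm a/\bm a^\top(\widehat{\bm W}\widehat{\bm W}^\top+\widehat\sigma^2I)\bm a\}^{1/2}$. By the assumed consistency and the continuous mapping theorem, the ratio tends to $1$ in probability. Slutsky's lemma then gives convergence of the argument to $\mathcal N(0,1)$, and since $1-\Phi$ is continuous, $1-\Phi(Z)\sim U(0,1)$. Equivalently, this is Corollary~\ref{cor:generic} with $b_m=0$ and $s_m\to1$.

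\textbf{Conditional predictive.} Here the centering involves $\overline{\bm x}^\agg_{1:N_\new}$, the mean of the first $N_\new$ aggregation points. This requires $N_\agg\ge N_\new$, and that block is independent of $\bm X^\new$. Define $Z_\agg=\sqrt{N_\new}\,\bm a^\top\bm B_0\overline{\bm x}^\agg_{1:N_\new}/\sqrt{\bm a^\top\bm B_0\bm\Sigma_0\bm B_0\bm a}$ when $\bm a^\top\bm B_0\neq0$. Then $Z_\new$ and $Z_\agg$ are independent standard normals, so the numerator scaled by $\sqrt{N_\new}$ is
\[
\sqrt{\bm a^\top\bm\Sigma_0\bm a}\,Z_\new-\sqrt{\bm a^\top\bm B_0\bm\Sigma_0\bm B_0\bm a}\,Z_\agg+\sqrt{N_\new}\,\bm a^\top(\bm B_0-\widehat{\bm B})\overline{\bm x}^\agg_{1:N_\new}.
\]
The last term is $o_P(1)$: $\widehat{\bm B}$ is a continuous function of $(\widehat{\bm W}\widehat{\bm W}^\top,\widehat\sigma^2)$, and $\sqrt{N_\new}\,\overline{\bm x}^\agg_{1:N_\new}=O_P(1)$ since $\bm\mu_0=0$.

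The continuity of $\widehat{\bm B}$ needs a short argument, because $\bm B$ is defined through $\bm W$, which is only identified up to rotation. The push-through identity gives
\[
\bm W(\bm W^\top\bm W+\sigma^2I_K)^{-1}\bm W^\top=\bm W\bm W^\top(\bm W\bm W^\top+\sigma^2 I_D)^{-1},
\]
so $\bm B$ depends only on $\bm W\bm W^\top$ and $\sigma^2$, and is continuous in them for $\sigma^2>0$. This identity is the one step that needs care; the rest is Slutsky.

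The combined first two terms equal $\sqrt{\bm a^\top(\bm\Sigma_0+\bm B_0\bm\Sigma_0\bm B_0)\bm a}\;Z$ with $Z\sim\mathcal N(0,1)$. This holds exactly in law at each $m$ by Gaussian independence, so we may define $Z$ on the same space. The denominator times $\sqrt{N_\new}$ converges in probability to $\sqrt{\bm a^\top\bm R_0\bm a}$ by continuity. Hence $p^{\mathrm{cond}}_{\mathrm{PC}}=1-\Phi(s_{\mathrm{cond}}Z)+o_P(1)$, using uniform continuity of $\Phi$ to pass from an $o_P(1)$ perturbation of the argument to an $o_P(1)$ perturbation of $p$.

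For the final claim, $\bm B_0=\bm W_0(\bm W_0^\top\bm W_0+\sigma_0^2I_K)^{-1}\bm W_0^\top$, so the eigenvalues of $\bm B_0$ are $\lambda/(\lambda+\sigma_0^2)$ for the eigenvalues $\lambda$ of $\bm W_0\bm W_0^\top$. Therefore
\[
\bm\Sigma_0+\bm B_0\bm\Sigma_0\bm B_0-\bm R_0=\bm W_0\bm W_0^\top-\sigma_0^2\bm B_0+\bm B_0\bm\Sigma_0\bm B_0
\]
is simultaneously diagonalizable with these matrices. On each signal eigendirection it equals $\lambda-\sigma_0^2\lambda/(\lambda+\sigma_0^2)+\lambda^2/(\lambda+\sigma_0^2)=2\lambda^2/(\lambda+\sigma_0^2)+\lambda\sigma_0^2/(\lambda+\sigma_0^2)\cdot 0$, which is positive whenever $\lambda>0$. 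It vanishes exactly on the noise directions. Thus $s_{\mathrm{cond}}>1$ unless $\bm a$ has no component in the column space of $\bm W_0$, which is the condition $\bm a^\top\bm B_0=0$. A limit of the form $1-\Phi(sZ)$ with $s\neq1$ is not uniform, so $p^{\mathrm{cond}}_{\mathrm{PC}}$ is not calibrated outside that case.
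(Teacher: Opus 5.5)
Your proposal is correct and follows the paper's proof: Slutsky's theorem for the prior check, and for the conditional check a Gaussian limit of the standardized statistic with variance ratio $s_{\mathrm{cond}}^2$. Your explicit independent coupling of $Z_{\mathrm{new}}$ and $Z_{\mathrm{agg}}$, and your push-through argument for continuity of $\widehat{\bm B}$ in $(\widehat{\bm W}\widehat{\bm W}^\top,\widehat\sigma^2)$, fill in steps the paper only asserts. Your spectral computation of $\bm\Sigma_0+\bm B_0\bm\Sigma_0\bm B_0-\bm R_0$ is equivalent to the paper's use of the identity $\bm\Sigma_0=\bm R_0+\bm B_0\bm\Sigma_0\bm B_0$, which shows the excess is exactly $2\bm B_0\bm\Sigma_0\bm B_0$; this matches your eigenvalue $2\lambda^2/(\lambda+\sigma_0^2)$, and the trailing ``$\cdot 0$'' term in your display is a harmless typo.
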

\begin{proof}
Under the null hypothesis that the model is correctly specified, we have as $N_{\new} \rightarrow \infty$,
\begin{align*}
  \sqrt{N_{\new}}\,\bm{a}^\top \overline{\bm{x}}^\new \stackrel{d}{\longrightarrow} \mathcal{N}\left(0,\bm{a}^\top (\bm{W}_0\bm{W}_0^\top + \sigma_0^2I_D) \bm{a}\right)
\end{align*}
Under the assumption that the estimators $(\widehat{\bm{W}}\widehat{\bm{W}}^\top,\widehat{\sigma}^2)$ are consistent, by continuous mapping and Slutsky's theorems, as $N_\train, N_\new \rightarrow \infty$,
\begin{align*}
    \frac{\bm{a}^\top\overline{\bm{x}}^{\new}}{\sqrt{\frac{1}{N_{\new}}\,\bm{a}^\top ( \widehat{\bm{W}}\widehat{\bm{W}}^\top + \widehat{\sigma}^2I_D) \bm{a}}} \stackrel{d}{\longrightarrow} \mathcal{N}\left(0,1\right).
\end{align*}
Hence, $p^{\mathrm{prior}}_{\mathrm{PC}}$ from \Cref{eq:ppca-latent-prior-p-val} is calibrated.

For the conditional replicate $p$-value in \Cref{eq:ppca-cond-rep-prior-p-val}, let 
\begin{align*}
Q = \frac{\sqrt{N_{\new}}(\bm{a}^\top\overline{\bm{x}}^{\new} - \bm{a}^\top \widehat{\bm{B}}\,\overline{\bm{x}}_{1:N_\new}^{\agg})}{\sqrt{\bm{a}^\top  \widehat{\bm R} \bm{a}}}
\end{align*}
where $\widehat{\bm{R}} = \widehat{\sigma}^2(I_D+\widehat{\bm{B}})$.
Note that as $N_\new\longrightarrow\infty$,
\begin{align*}
    Q\stackrel{d}{\longrightarrow} \mathcal{N}(0, s_{\mathrm{cond}}^2).
\end{align*}
Since $p^{\mathrm{cond}}_{\mathrm{PC}}=1-\Phi(Q)$, the continuous mapping theorem yields
$p^{\mathrm{cond}}_{\mathrm{PC}}
\stackrel{d}{\longrightarrow}1-\Phi(s_{\mathrm{cond}}Z)$ for
$Z\sim \mathcal{N}(0,1)$.

It remains to characterize $s_{\mathrm{cond}}$. For PPCA,
\begin{align*}
\bm\Sigma_0=\bm R_0+\bm B_0\bm\Sigma_0\bm B_0.
\end{align*}
Consequently,
$s_{\mathrm{cond}}^2\geq1$.

Moreover, $\bm\Sigma_0\succ0$, so
$\bm a^\top\bm B_0\bm\Sigma_0\bm B_0\bm a=0$ if and only if
$\bm B_0\bm a=0$, equivalently
$\bm a\perp\operatorname{col}(\bm W_0)$. Outside of this edge case, $s_{\mathrm{cond}}>1$ and
$1-\Phi(s_{\mathrm{cond}}Z)$ is then nonuniform.
\end{proof}

\subsection{Proofs for \Cref{sec:theory}}

\subsubsection{Proof of \Cref{prop:generic-appc-calibration}}
\begin{proof}
The conditional normal approximation in \Cref{ass:conditional-diagnostic-clt} implies
\begin{align*}
p_m(Q_m) = 1-\Phi(W_m)+o_P(1),
\qquad
W_m = \frac{\sqrt{N_{\new,m}}\left[d_{N_{\new,m}}(\bm{X}_m^\new)-\nu_m\right]}{\tau_m}.
\end{align*}
Decompose
\begin{align*}
W_m &=
\frac{\sqrt{N_{\new,m}}\left[d_{N_{\new,m}}\left(\bm{X}_m^\new\right)-\nu_0\right]}{\tau_0}\frac{\tau_0}{\tau_m} +
\frac{\sqrt{N_{\new,m}}\left(\nu_0-\nu_m\right)}{\tau_m} = \frac{Z_m}{s_m} + b_m.
\end{align*}
Then,
\begin{align*}
\frac{\sqrt{N_{\new,m}}\left[d_{N_{\new,m}}(\bm{X}_m^\new)-\nu_0\right]}{\tau_0} \stackrel{d}{\longrightarrow} \mathcal{N}\left(0, 1\right).
\end{align*}
by the Central Limit Theorem as $m\to\infty$.

If $b_m\stackrel{P}{\longrightarrow} b$ and $s_m\stackrel{P}{\longrightarrow}s > 0$, then Slutsky's theorem and the continuous mapping theorem give:
\begin{align*}
p_m(Q_m) \Rightarrow 1 - \Phi(s^{-1}Z + b), \quad Z\sim \mathcal{N}(0,1).
\end{align*}
\end{proof}

\subsection{Proof of \Cref{cor:oracle-ws-calibration}}
\begin{proof}
In the well-specified case, we have from \Cref{eq:pop-optimal-ap-kernel},
\begin{align*}
(P_{\theta^\star} K_{\theta^\star,\mathrm{post}})(A)
&= \int \int
P_{\theta^\star}(\bm{x}^\rep \in A \mid\bm z)
\,p_{\theta^\star}(d\bm z\mid\bm x)
\,P_{\theta^\star}(d\bm x)
\notag\\
&= \int P_{\theta^\star}(\bm{x}^\rep \in A \mid\bm z)\,p_Z(d\bm z)\\
&= P_{\theta^\star}(A).
\end{align*}
When $P_{0}=P_{\theta^\star}$, we have $\nu_0 = \nu_{\mathrm{post}}^\star$ and $\tau_0=\tau_{\mathrm{post}}^\star$. Then, 
\begin{align*}
b(P_0 K_{\theta^\star,\mathrm{post}}) = 0,\quad s(P_0 K_{\theta^\star,\mathrm{post}}) = 1,
\end{align*}
and by \Cref{cor:generic}, $p_m(P_0 K_{\theta^\star, \mathrm{post}})\stackrel{d}{\longrightarrow} U[0,1]$.
\end{proof}

\subsection{Proof of \Cref{prop:learned-appc-calibration}}
\begin{proof}
We need to show that the centering and scale
conditions in \Cref{cor:generic} are satisfied when the reference distribution is $Q_m = P_0 K_{\widehat\theta_m,\mathrm{post}}$, that is,
\begin{align*}
    b(P_0 K_{\widehat\theta_m,\mathrm{post}}) = \frac{\sqrt{N_{\new,m}} (\nu_0 - \nu_{\widehat{\theta}_m})}{\tau_{\widehat{\theta}_m}} \stackrel{P}{\longrightarrow}0, \qquad
s(P_0 K_{\widehat{\theta}_m, \mathrm{post}}) = \frac{\tau_{\widehat{\theta}_m}}{\tau_0} \stackrel{P}{\longrightarrow}1.
\end{align*}
Then by \Cref{cor:generic}, $p_m(P_0 K_{\widehat{\theta}_m,\mathrm{post}})\stackrel{d}{\longrightarrow} U[0,1]$.

Since $m_{1,\widehat\theta_m} = \nu_{\widehat\theta_m}$ and $m_{1,\mathrm{post}}^\star = \nu_{\mathrm{post}}^\star$, by \Cref{eq:moment-holder-to-learned-set},
\begin{align*}
\lvert\nu_{\widehat\theta_m}-\nu_{\mathrm{post}}^\star\rvert
= O_P\left(r_{\train,m}^\alpha\right),
\qquad
\lvert m_{2,\widehat\theta_m}-m_{2,\mathrm{post}}^\star\rvert
= O_P\left(r_{\train,m}^\alpha\right).
\end{align*}
Hence, by \Cref{eq:training-reference-rate-condition} if $\sqrt{N_{\new,m}}\,r_{\train,m}^\alpha
\longrightarrow0$,
\begin{align}
\sqrt{N_{\new,m}}\,
\lvert\nu_{\widehat\theta_m}-\nu_{\mathrm{post}}^\star\rvert
=o_P(1).
\label{eq:centering-error-learnedtheta-posterior}
\end{align}
Moreover, since $\lvert\nu_{\mathrm{post}}^{\star}\rvert = O_P(1)$, then $\lvert\nu_{\widehat\theta_m}+\nu_{\mathrm{post}}^{\star}\rvert = O_P(1)$ as $r_{\train,m}^\alpha
\longrightarrow0$. Now, by the triangle inequality,
\begin{align*}
\lvert\tau_{\widehat\theta_m}^2-(\tau_{\mathrm{post}}^\star)^2\rvert
&= \lvert m_{2,\widehat\theta_m} -m_{2,\mathrm{post}}^\star -\nu_{\widehat\theta_m}^2 + (\nu_{\mathrm{post}}^{\star})^2\rvert\\
&\le \lvert m_{2,\widehat\theta_m} -m_{2,\mathrm{post}}^\star\rvert + \lvert\nu_{\widehat\theta_m}-\nu_{\mathrm{post}}^{\star}\rvert\lvert\nu_{\widehat\theta_m}+\nu_{\mathrm{post}}^{\star}\rvert\\
&= O_P\left(r_{\train,m}^\alpha\right) = o_P(1).
\end{align*}
Since $\tau^\star_{\mathrm{post}}$ is bounded away from zero, we have 
\begin{align}
\frac{\tau_{\widehat{\theta}_m}}{\tau^\star_{\mathrm{post}}} \stackrel{P}{\longrightarrow} 1.
\label{eq:scale-learnedtheta-posterior}
\end{align} 
Combining with the second condition of \Cref{cor:oracle-appc-calibration}, the scale condition is satisfied, 
\begin{align*}
    \frac{\tau_{\widehat{\theta}_m}}{\tau_0} = \frac{\tau_{\widehat{\theta}_m}}{\tau^\star_{\mathrm{post}}}\frac{\tau^\star_{\mathrm{post}}}{\tau_0} \stackrel{P}{\longrightarrow}1.
\end{align*}
Consequently, $\tau_{\widehat{\theta}_m}$ is also bounded away from zero with probability tending to 1.

For the centering condition, we decompose into different source of error as in \Cref{eq:error-decomp-est-post},
\begin{align*}
      b(P_0 K_{\widehat\theta_m,\mathrm{post}}) &= \frac{\sqrt{N_{\new,m}} (\nu_0 - \nu_{\widehat{\theta}_m})}{\tau_{\widehat{\theta}_m}} \\
      &= \underbrace{\frac{\sqrt{N_{\new,m}} (\nu_0 - \nu^\star_{\mathrm{post}})}{\tau^\star_{\mathrm{post}}}\frac{\tau^\star_{\mathrm{post}}}{\tau_{\widehat{\theta}_m}}}_{\text{Term 1}} + \underbrace{\frac{\sqrt{N_{\new,m}} ( \nu^\star_{\mathrm{post}} - \nu_{\widehat{\theta}_m})}{\tau_{\widehat{\theta}_m}}}_{\text{Term 2}}.
\end{align*}
The first term is $o_P(1)$ using the first condition of \Cref{cor:oracle-appc-calibration} and \Cref{eq:scale-learnedtheta-posterior}. The second term becomes $o_P(1)$ from \Cref{eq:centering-error-learnedtheta-posterior} and since ${\tau_{\widehat{\theta}_m}}$ is bounded away from zero, which completes the proof.
\end{proof}

\subsection{APPC with variational parameters.}
\label{app:appc-vi}
In this section, we provide sufficient conditions for calibration of the \gls{APPC} with variational parameters from \Cref{subsec:appc-vi}.

The centering and scaling terms from \Cref{prop:generic-appc-calibration} becomes:
\begin{align*}
b(P_0 K_{\widehat\eta_m}) &= \frac{\sqrt{N_{\new,m}} (\nu_0 - \nu_{\widehat{\eta}_m})}{\tau_{\widehat{\eta}_m}}, \qquad
s(P_0 K_{\widehat{\eta}_m}) = \frac{\tau_{\widehat{\eta}_m}}{\tau_0},
\end{align*}
where $\nu_{\widehat{\eta}_m} = \E_{\bm{x}\sim P_0 K_{\widehat{\eta}_m}}[h(\bm{x})]$, $\tau_{\widehat{\eta}_m}^2 =  \mathrm{Var}_{\bm{x}\sim P_0 K_{\widehat{\eta}_m}}\{h(\bm{x})\}$.

Analogously to the previous section, we define a distance to the variational optimizer set: for $\eta\in \Theta \times \mathcal{E}$,
\begin{align*}
\operatorname{dist}(\eta,\mathcal{S}_{\mathrm{VI}}^\star)
= \inf_{\bar\eta\in\mathcal{S}_{\mathrm{VI}}^\star}
\lVert\eta-\bar\eta\rVert.
\end{align*}
We further define $(\tau_{ \mathrm{VI}}^\star)^2 = \mathrm{Var}_{\bm{x}\sim P_0 K_{\eta^\star}}\{h(\bm{x})\}$ and for $j\in\{1,2\}$, we write $m_{j,\eta} = \mathbb{E}_{\bm{x}\sim P_0K_{\eta}}[h^j(\bm{x})]$ and $m_{j,\mathrm{VI}}^\star = \mathbb{E}_{\bm{x}\sim P_0 K_{\eta^\star}}[h^j(\bm{x})]$. 

The next proposition states sufficient conditions for the calibration of the \gls{APPC} with estimated variational parameters.  

\vspace{1em}

\begin{proposition}
\label{prop:vi-appc-calibration}
Suppose \Cref{ass:optimizer-predictive-equivalence} and the conditional CLT in \Cref{ass:conditional-diagnostic-clt} hold for $Q_m = P_0 K_{\widehat\eta_m}$. Assume further that the conditions of \Cref{cor:oracle-appc-calibration} hold. Suppose that, for some
sequence $r_{\train,m}\to0$, $\operatorname{dist}
(\widehat\eta_m,\mathcal{S}_{\mathrm{VI}}^\star)
= O_P(r_{\train,m})$.
Assume that there exist constants $C<\infty$, $\epsilon>0$, and
$\alpha>0$ such that, for $j\in\{1,2\}$ and every $\eta$ satisfying
$\operatorname{dist}(\eta,\mathcal{S}_{\mathrm{VI}}^\star)<\epsilon$,
\begin{align}
\left|
    m_{j,\eta}-m_{j,\mathrm{VI}}^\star
\right|
\le
C\operatorname{dist}
(\eta,\mathcal{S}_{\mathrm{VI}}^\star)^\alpha,
\label{eq:moment-holder-to-vi-set}
\end{align}
where $m_{1,\mathrm{VI}}^\star=\nu_{\mathrm{VI}}^\star$. Suppose also that $\lvert \nu^\star_{\mathrm{VI}} \rvert = O(1)$ and $\tau_{\mathrm{VI}}^\star$ is bounded away from zero.  Finally, suppose that the variational error satisfies:
\begin{align}
\frac{
    \sqrt{N_{\new,m}}
    (\nu^\star_{\mathrm{post}}-\nu_{\mathrm{VI}}^\star)
}{
    \tau_{\mathrm{VI}}^\star
}
\stackrel{P}{\longrightarrow}0,  \qquad \frac{
    \tau_{\mathrm{VI}}^\star
}{
    \tau_{\mathrm{post}}^\star
}
\stackrel{P}{\longrightarrow}1
\label{eq:vi-post-convergence}
\end{align}
If
\begin{align}
\sqrt{N_{\new,m}}\,r_{\train,m}^\alpha
\longrightarrow0,
\tag{\ref{eq:training-reference-rate-condition}}
\end{align}
then the \gls{APPC} with estimated variational parameters is calibrated.
\end{proposition}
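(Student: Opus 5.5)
The plan is to mirror the proof of \Cref{prop:learned-appc-calibration}, adding one extra intermediate reference point for the variational error. By \Cref{cor:generic}, with $Q_m=P_0K_{\widehat\eta_m}$ and the conditional CLT of \Cref{ass:conditional-diagnostic-clt}, it suffices to verify two limits:
\[
b(P_0K_{\widehat\eta_m})=\frac{\sqrt{N_{\new,m}}(\nu_0-\nu_{\widehat\eta_m})}{\tau_{\widehat\eta_m}}\stackrel{P}{\to}0,
\qquad
\frac{\tau_{\widehat\eta_m}}{\tau_0}\stackrel{P}{\to}1 .
\]
\Cref{ass:optimizer-predictive-equivalence} guarantees that $\nu^\star_{\mathrm{VI}}$, $\tau^\star_{\mathrm{VI}}$ and $m^\star_{j,\mathrm{VI}}$ do not depend on which $\eta^\star\in\mathcal S^\star_{\mathrm{VI}}$ is chosen. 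It likewise makes $\nu^\star_{\mathrm{post}}$ and $\tau^\star_{\mathrm{post}}$ well defined, so the decomposition \eqref{eq:error-decomp-variational} is meaningful.

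\textbf{Scale.} Since $\operatorname{dist}(\widehat\eta_m,\mathcal S^\star_{\mathrm{VI}})=O_P(r_{\train,m})$ with $r_{\train,m}\to0$, with probability tending to one $\widehat\eta_m$ lies in the $\epsilon$-neighbourhood where \eqref{eq:moment-holder-to-vi-set} applies. This gives $|m_{j,\widehat\eta_m}-m^\star_{j,\mathrm{VI}}|=O_P(r^\alpha_{\train,m})$ for $j=1,2$.

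Exactly as before, I would write
\[
\tau^2_{\widehat\eta_m}-(\tau^\star_{\mathrm{VI}})^2=(m_{2,\widehat\eta_m}-m^\star_{2,\mathrm{VI}})-(\nu_{\widehat\eta_m}-\nu^\star_{\mathrm{VI}})(\nu_{\widehat\eta_m}+\nu^\star_{\mathrm{VI}}).
\]
Using $|\nu^\star_{\mathrm{VI}}|=O(1)$, this shows the difference is $O_P(r^\alpha_{\train,m})=o_P(1)$. Because $\tau^\star_{\mathrm{VI}}$ is bounded away from zero, it follows that $\tau_{\widehat\eta_m}/\tau^\star_{\mathrm{VI}}\to1$ in probability.

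The scale ratio then factors as
\[
\frac{\tau_{\widehat\eta_m}}{\tau_0}=\frac{\tau_{\widehat\eta_m}}{\tau^\star_{\mathrm{VI}}}\cdot\frac{\tau^\star_{\mathrm{VI}}}{\tau^\star_{\mathrm{post}}}\cdot\frac{\tau^\star_{\mathrm{post}}}{\tau_0}.
\]
The three factors converge to one by, respectively, the step just shown, the second part of \eqref{eq:vi-post-convergence}, and the scale condition of \Cref{cor:oracle-appc-calibration}. As a by-product, $\tau_{\widehat\eta_m}$ is bounded away from zero with probability tending to one.

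\textbf{Centering.} Using \eqref{eq:error-decomp-variational}, I would split $b(P_0K_{\widehat\eta_m})$ into three terms:
\[
\frac{\sqrt{N_{\new,m}}(\nu_0-\nu^\star_{\mathrm{post}})}{\tau^\star_{\mathrm{post}}}\frac{\tau^\star_{\mathrm{post}}}{\tau_{\widehat\eta_m}}
+\frac{\sqrt{N_{\new,m}}(\nu^\star_{\mathrm{post}}-\nu^\star_{\mathrm{VI}})}{\tau^\star_{\mathrm{VI}}}\frac{\tau^\star_{\mathrm{VI}}}{\tau_{\widehat\eta_m}}
+\frac{\sqrt{N_{\new,m}}(\nu^\star_{\mathrm{VI}}-\nu_{\widehat\eta_m})}{\tau_{\widehat\eta_m}}.
\]
The first term is $o_P(1)$ by the centering condition of \Cref{cor:oracle-appc-calibration} together with the ratio $\tau^\star_{\mathrm{post}}/\tau_{\widehat\eta_m}\to1$ from the scale step. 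The second term is $o_P(1)$ by the first part of \eqref{eq:vi-post-convergence} and $\tau^\star_{\mathrm{VI}}/\tau_{\widehat\eta_m}\to1$. The third term is $O_P(\sqrt{N_{\new,m}}\,r^\alpha_{\train,m})=o_P(1)$ by \eqref{eq:training-reference-rate-condition}, since its denominator is bounded away from zero. Applying \Cref{cor:generic} then finishes the proof.

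The only delicate point is the localization step. The Hölder bound \eqref{eq:moment-holder-to-vi-set} holds only within distance $\epsilon$ of $\mathcal S^\star_{\mathrm{VI}}$, so I would intersect with the event $\{\operatorname{dist}(\widehat\eta_m,\mathcal S^\star_{\mathrm{VI}})<\epsilon\}$, whose probability tends to one. I would also carry the ratios of random and deterministic scales carefully through Slutsky's theorem. Everything else is bookkeeping.
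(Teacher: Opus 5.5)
Your proposal is correct and follows the paper's proof essentially step for step. It uses the same H\"older-based control of the first two moments, the same three-factor scale product $\frac{\tau_{\widehat\eta_m}}{\tau^\star_{\mathrm{VI}}}\cdot\frac{\tau^\star_{\mathrm{VI}}}{\tau^\star_{\mathrm{post}}}\cdot\frac{\tau^\star_{\mathrm{post}}}{\tau_0}$, and the same three-term centering split before invoking \Cref{cor:generic}; your explicit localization to the event $\{\operatorname{dist}(\widehat\eta_m,\mathcal S^\star_{\mathrm{VI}})<\epsilon\}$ is a small point of rigor the paper leaves implicit.
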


\begin{proof}
The proof is similar to that of \Cref{prop:learned-appc-calibration}. By \Cref{eq:moment-holder-to-vi-set},
\begin{align*}
\lvert\nu_{\widehat\eta_m}-\nu_{\mathrm{VI}}^\star\rvert
=
O_P\left(r_{\train,m}^\alpha\right),
\qquad
\lvert m_{2,\widehat\eta_m}-m_{2,\mathrm{VI}}^\star\rvert
=
O_P\left(r_{\train,m}^\alpha\right).
\end{align*}
As $\sqrt{N_{\new,m}}\,r_{\train,m}^\alpha
\longrightarrow0$,
\begin{align}
\sqrt{N_{\new,m}}\,
\lvert\nu_{\widehat\eta_m}-\nu_{\mathrm{VI}}^\star\rvert
=o_P(1).
\label{eq:centering-error-variational-posterior}
\end{align}
Since $\lvert\nu_{\mathrm{VI}}^{\star}\rvert = O_P(1)$ and $r_{\train,m}^\alpha
\longrightarrow0$, then $\lvert\nu_{\widehat\eta_m}+\nu_{\mathrm{VI}}^{\star}\rvert = O_P(1)$ and 
\begin{align*}
\lvert\tau_{\widehat\eta_m}^2-(\tau_{\mathrm{VI}}^\star)^2\rvert
&\le \lvert m_{2,\widehat\eta_m} -m_{2,\mathrm{VI}}^\star\rvert + \lvert\nu_{\widehat\eta_m}-\nu_{\mathrm{VI}}^{\star}\rvert\lvert\nu_{\widehat\eta_m}+\nu_{\mathrm{VI}}^{\star}\rvert\\
&= O_P\left(r_{\train,m}^\alpha\right) = o_P(1).
\end{align*} 
As $\tau^\star_{\mathrm{VI}}$ is bounded away from zero, we have 
\begin{align}
\frac{\tau_{\widehat{\eta}_m}}{\tau^\star_{\mathrm{VI}}} \stackrel{P}{\longrightarrow} 1.
\label{eq:scale-variational-posterior}
\end{align} 
Combining with the second condition of \Cref{cor:oracle-appc-calibration} and the second assumption \Cref{eq:vi-post-convergence} of \Cref{prop:vi-appc-calibration} satisfies the required scale condition, 
\begin{align*}
    \frac{\tau_{\widehat{\eta}_m}}{\tau_0} = \frac{\tau_{\widehat{\eta}_m}}{\tau^\star_{\mathrm{VI}}}\frac{\tau^\star_{\mathrm{VI}}}{\tau^\star_{\mathrm{post}}}\frac{\tau^\star_{\mathrm{post}}}{\tau_0} \stackrel{P}{\longrightarrow}1.
\end{align*}
Consequently, $\tau_{\widehat{\eta}_m}$ is also bounded away from zero with probability tending to 1.

The centering condition has the additional variational error and we can decompose as in \Cref{eq:error-decomp-variational},
\begin{align*}
      b(P_0 K_{\widehat\eta_m}) &= \frac{\sqrt{N_{\new,m}} (\nu_0 - \nu_{\widehat{\eta}_m})}{\tau_{\widehat{\eta}_m}} \\
      &= \underbrace{\frac{\sqrt{N_{\new,m}} (\nu_0 - \nu^\star_{\mathrm{post}})}{\tau^\star_{\mathrm{post}}}\frac{\tau^\star_{\mathrm{post}}}{\tau^\star_{\mathrm{VI}}}\frac{\tau^\star_{\mathrm{VI}}}{\tau_{\widehat{\eta}_m}}}_{\text{Term 1}} +
      \underbrace{\frac{\sqrt{N_{\new,m}} ( \nu^\star_{\mathrm{post}} - \nu^\star_{\mathrm{VI}})}{\tau^\star_{\mathrm{VI}}}\frac{\tau^\star_{\mathrm{VI}}}{\tau_{\widehat{\eta}_m}}}_{\text{Term 2}} + \underbrace{\frac{\sqrt{N_{\new,m}} ( \nu^\star_{\mathrm{VI}} - \nu_{\widehat{\eta}_m})}{\tau_{\widehat{\eta}_m}}}_{\text{Term 3}}.
\end{align*}
The first term is $o_P(1)$ using the first condition of \Cref{cor:oracle-appc-calibration}, the scaling condition \Cref{eq:vi-post-convergence} of \Cref{prop:vi-appc-calibration} and \Cref{eq:scale-variational-posterior}. The second term is $o_P(1)$ using the centering condition \Cref{eq:vi-post-convergence} of \Cref{prop:vi-appc-calibration} and \Cref{eq:scale-variational-posterior}.  The third term becomes $o_P(1)$ from \Cref{eq:centering-error-variational-posterior} and as ${\tau_{\widehat{\eta}_m}}$ is bounded away from zero.

We therefore verify the centering and scale conditions in \Cref{cor:generic} for \Cref{prop:vi-appc-calibration},
\begin{align*}
    b(P_0 K_{\widehat\eta_m}) = \frac{\sqrt{N_{\new,m}} (\nu_0 - \nu_{\widehat{\eta}_m})}{\tau_{\widehat{\eta}_m}} \stackrel{P}{\longrightarrow}0, \qquad
s(P_0 K_{\widehat{\eta}_m}) = \frac{\tau_{\widehat{\eta}_m}}{\tau_0} \stackrel{P}{\longrightarrow}1.
\end{align*}
By \Cref{cor:generic}, $p_m(P_0 K_{\widehat{\eta}_m})\stackrel{d}{\longrightarrow} U[0,1]$.
\end{proof}

\subsection{Proof of \Cref{prop:emp-appc-calibration}}

\begin{proof}
We first define: For $\bm{x}, \bm{x}' \in \mathcal{X}$, 
\begin{align*}
    g_{\widehat{\eta}_m}(\bm x) \defeq \int h(\bm{x}')K_{\widehat{\eta}_m}(\bm{x},d\bm{x}'), \quad \text{ and } \quad g^{(2)}_{\widehat{\eta}_m}(\bm x) \defeq  \int h^2(\bm{x}')K_{\widehat{\eta}_m}(\bm{x},d\bm{x}').
\end{align*}
We also use the notation:
$PK_\eta h \defeq \mathbb{E}_{\bm{x}\sim PK_\eta}[h(\bm{x})]$ and $K_{\eta}h(\bm x) \defeq \int h(\bm {x}')K_\eta(\bm{x},d\bm{x}')$. 

Hence we can write
\begin{align*}
\nu_{\agg,\widehat{\eta}_m} 
&= \int\int h(\bm{x}')K_{\widehat{\eta}_m}(\bm{x},d\bm{x}')\,\widehat{P}_{\agg,m}(d\bm{x})\\
&= \int g_{\widehat\eta_m}(\bm{x})\widehat{P}_{\agg,m}(d\bm{x})\\
&= \frac{1}{N_{\agg,m}}\sum_{i=1}^{N_{\agg,m}} g_{\widehat{\eta}_m}(\bm{x}_{m,i}^\agg)
\end{align*}
where the last equality follows using \Cref{eq:definition-empirical-distribution}. The corresponding population quantity is $\E_{P_0}[g_{\widehat{\eta}_m}(\bm{x})]$.

Because the aggregation sample is independent of the training sample,
conditional on $\widehat\eta_m$,
\begin{align*}
\E_{P_0}\left[
    (\nu_{\agg,\widehat\eta_m}-\nu_{\widehat\eta_m}
    )^2 \mid
    \widehat\eta_m
\right] = {
    \Var_{P_0}
    \{g_{\widehat{\eta}_m}(\bm{x})\mid \widehat{\eta}_m\}
}/{N_{\agg,m}}.
\end{align*}
By Jensen's inequality, $[g_{\widehat\eta_m}(\bm{x})]^2 \leq g^{(2)}_{\widehat\eta_m}(\bm{x})$.
Therefore,
\begin{align*}
\mathbb{E}_{P_0}
\left[ \{g_{\widehat{\eta}_m}(\bm{x})\}^2
\mid \widehat{\eta}_m \right]
&\leq \mathbb{E}_{P_0}
\left[g^{(2)}_{\widehat\eta_m}(\bm{x})
\mid \widehat{\eta}_m\right]\\
&=\mathbb{E}_{\bm{x}^{\rep}\sim P_0 K_{\widehat{\eta}_m}}
\left[h^2(\bm{x}^{\rep})
\mid \widehat{\eta}_m \right]
\end{align*}
which is $O_P(1)$ under the fourth-moment assumption \Cref{eq:h4-emp-agg}. Hence, $\Var_{P_0}\left\{g_{\widehat{\eta}_m}(\bm{x})
\mid \widehat{\eta}_m \right\} = O_P(1).$ Consequently, we have by Chebyshev's inequality, $\nu_{\agg,\widehat\eta_m}-\nu_{\widehat\eta_m}
=
O_P(N_{\agg,m}^{-1/2}).$
Thus, the aggregation component of the centering condition is negligible whenever ${N_{\new,m}}/{N_{\agg,m}}
\longrightarrow0$, satisfied by the assumption of \Cref{prop:emp-appc-calibration}. That is,
\begin{align}
\sqrt{N_{\new,m}}\,
\lvert\nu_{\agg,\widehat\eta_m}-\nu_{\widehat\eta_m}\rvert
&=o_P(1).
\label{eq:centering-error-empirical-posterior}
\end{align}
For the scale component, we can write
\begin{align*}
    m_{2,\widehat{\eta}_m} &=  \mathbb{E}_{P_0 K_{\widehat{\eta}_m}}\left[h^2(\bm{x})\right] = \E_{P_0}\left[g^{(2)}_{\widehat{\eta}_m}(\bm{x})\right], \\ m_{2,\agg,\widehat{\eta}_m} &= \mathbb{E}_{\widehat{P}_{\agg,m}K_{\widehat{\eta}_m}}\left[h^2(\bm{x})\right] = \frac{1}{N_{\agg,m}}\sum_{i=1}^{N_{\agg,m}} g^{(2)}_{\widehat{\eta}_m}(\bm{x}_{m,i}^\agg).
\end{align*}
By the triangle inequality,
\begin{align*}
\lvert \tau^2_{\agg,\widehat{\eta}_m} - \tau^2_{\widehat{\eta}_m}\rvert
\leq \lvert m_{2,\agg,\widehat{\eta}_m} - m_{2,\widehat{\eta}_m}\rvert + \lvert
\nu_{\agg,\widehat{\eta}_m} - \nu_{\widehat{\eta}_m}\rvert
\lvert\nu_{\agg,\widehat{\eta}_m} +\nu_{\widehat{\eta}_m}\rvert.
\end{align*}
We now need to bound the first term in the RHS. We proceed with the same argument as before.

Conditional on $\widehat{\eta}_m$, $\mathbb{E}_{P_0}\left[
(m_{2,\agg,\widehat{\eta}_m}
- m_{2,\widehat{\eta}_m}
)^2 \mid \widehat{\eta}_m \right]
= {\Var_{P_0}\{g^{(2)}_{\widehat{\eta}_m}(\bm{x}) \mid\widehat{\eta}_m\}}/{N_{\agg,m}}.$

Using Jensen's inequality, 
\begin{align*}
\mathbb{E}_{P_0}
\left[ \{g^{(2)}_{\widehat{\eta}_m}(\bm{x})\}^2
\mid \widehat{\eta}_m \right]
&\leq \mathbb{E}_{P_0} \left[K_{\widehat{\eta}_m}h^4(\bm{x})
\mid \widehat{\eta}_m \right] \\
&=\mathbb{E}_{\bm{x}^{\rep}\sim P_0 K_{\widehat{\eta}_m}}
\left[h^4(\bm{x}^{\rep}) \mid \widehat{\eta}_m \right].
\end{align*}
Under the fourth-moment condition, this quantity is
$O_P(1)$. Therefore, $\Var_{P_0}\left\{g^{(2)}_{\widehat{\eta}_m}(\bm{x}) \mid
\widehat{\eta}_m\right\} = O_P(1),$
and Chebyshev's inequality gives $
    m_{2,\agg,\widehat{\eta}_m} - m_{2,\widehat{\eta}_m} = O_P(N_{\agg,m}^{-1/2}).$

Now, $\lvert\nu_{\widehat\eta_m}\rvert =O_P(1)$. As $N_{\agg,m}
\longrightarrow \infty$, $\lvert\nu_{\agg,\widehat\eta_m}+\nu_{\widehat\eta_m}\rvert = O_P(1)$ and
\begin{align}
\lvert\tau_{\agg, \widehat\eta_m}^2-\tau^2_{\widehat\eta_m}\rvert
= o_P(1).
\label{eq:scaling-error-empirical-posterior}
\end{align} 
Since $\tau_{\widehat\eta_m}$ is bounded away from zero, we have 
\begin{align}
\frac{\tau_{\agg, \widehat{\eta}_m}}{\tau_{\widehat\eta_m}} \stackrel{P}{\longrightarrow} 1.
\label{eq:scale-empirical-posterior}
\end{align} 
Combining with the second condition of \Cref{cor:oracle-appc-calibration}, \Cref{eq:scale-variational-posterior} and the scaling assumption of \Cref{prop:vi-appc-calibration},
\begin{align*}
    \frac{\tau_{\agg, \widehat{\eta}_m}}{\tau_0} = \frac{\tau_{\agg, \widehat{\eta}_m}}{\tau_{\widehat\eta_m}} \frac{\tau_{\widehat{\eta}_m}}{\tau^\star_{\mathrm{VI}}}\frac{\tau^\star_{\mathrm{VI}}}{\tau^\star_{\mathrm{post}}}\frac{\tau^\star_{\mathrm{post}}}{\tau_0} \stackrel{P}{\longrightarrow}1.
\end{align*}
Therefore, $\tau_{\agg, \widehat{\eta}_m}$ is also bounded away from zero with probability tending to 1.

The centering condition has the additional empirical approximation error, as in \Cref{eq:empirical-centering-decomposition}
\begin{align*}
      b(\widehat{P}_{\agg,m} K_{\widehat\eta_m}) &= \frac{\sqrt{N_{\new,m}} (\nu_0 - \nu_{\agg, \widehat{\eta}_m})}{\tau_{\agg, \widehat{\eta}_m}} \\
      &=  
      \begin{aligned}[t]
      \underbrace{\frac{\sqrt{N_{\new,m}} (\nu_0 - \nu^\star_{\mathrm{post}})}{\tau^\star_{\mathrm{post}}}\frac{\tau^\star_{\mathrm{post}}}{\tau^\star_{\mathrm{VI}}}\frac{\tau^\star_{\mathrm{VI}}}{\tau_{\widehat{\eta}_m}}\frac{\tau_{\widehat\eta_m}}{\tau_{\agg, \widehat{\eta}_m}}}_{\text{Term 1}} 
      &+ \underbrace{\frac{\sqrt{N_{\new,m}} ( \nu^\star_{\mathrm{post}} - \nu^\star_{\mathrm{VI}})}{\tau^\star_{\mathrm{VI}}}\frac{\tau^\star_{\mathrm{VI}}}{\tau_{\widehat{\eta}_m}}\frac{\tau_{\widehat\eta_m}}{\tau_{\agg, \widehat{\eta}_m}}}_{\text{Term 2}}
      \\
      + \underbrace{\frac{\sqrt{N_{\new,m}} (\nu^\star_{\mathrm{VI}} - \nu_{\widehat{\eta}_m})}{\tau_{\widehat{\eta}_m}}\frac{\tau_{\widehat\eta_m}}{\tau_{\agg, \widehat{\eta}_m}}}_{\text{Term 3}} 
      &+\underbrace{\frac{\sqrt{N_{\new,m}} ( \nu_{\widehat{\eta}_m} - \nu_{\agg, \widehat{\eta}_m})}{\tau_{\agg, \widehat{\eta}_m}}}_{\text{Term 4}}
      \end{aligned}
\end{align*}
The fourth term becomes $o_P(1)$ from \Cref{eq:centering-error-empirical-posterior} as ${\tau_{\agg, \widehat{\eta}_m}}$ is bounded away from zero. The rest of the quantities are also $o_P(1)$ as in \Cref{prop:vi-appc-calibration} and by \Cref{eq:scale-empirical-posterior}. 

We hence have the centering and scale conditions in \Cref{cor:generic} verified for \Cref{prop:emp-appc-calibration},
\begin{align*}
    b(\widehat{P}_{\agg,m} K_{\widehat\eta_m}) = \frac{\sqrt{N_{\new,m}} (\nu_0 - \nu_{\agg, \widehat{\eta}_m})}{\tau_{\agg, \widehat{\eta}_m}} \stackrel{P}{\longrightarrow}0, \qquad
s(\widehat{P}_{\agg,m}K_{\widehat{\eta}_m}) = \frac{\tau_{\agg,\widehat{\eta}_m}}{\tau_0} \stackrel{P}{\longrightarrow}1.
\end{align*}
By \Cref{cor:generic}, $p_m(\widehat{P}_{\agg,m} K_{\widehat{\eta}_m})\stackrel{d}{\longrightarrow} U[0,1]$.
\end{proof}

\subsection{Proof of \Cref{prop:ppca-three-reference-calibration}}

\textbf{Notation.} For a symmetric matrix $\bm A \in \mathbb{R}^{D \times D}$, the operator norm is defined as $\lVert\bm A\rVert_{\mathrm{op}} \defeq \max_{1\leq i\leq D}\lvert\lambda_i(\bm A)\rvert$ where $\lambda_i(\bm A)$ are the eigenvalues of $\bm A$. Furthermore, the smallest and largest eigenvalues of $\bm A$ are denoted by $\lambda_{\min}(\bm A)$ and $\lambda_{\max}(\bm A)$, respectively, whereas, $\lambda^{+}_{\min}(\bm A)$ denotes the smallest strictly positive eigenvalue of $\bm A$. The Frobenius norm of symmetric $\bm A$ is $\lVert \bm A \rVert_F = \sqrt{\operatorname{tr}(\bm A^\top \bm A)} = \sqrt{\sum_{i=1}^D\lambda_i^2(\bm A)}$.

For positive deterministic sequences $a_m$ and $b_m$, we write
$a_m\asymp b_m$ if there exist constants $0<c<C<\infty$ such that
\begin{align*}
c \, b_m\leq a_m\leq C \, b_m
\end{align*}
for all sufficiently large $m$. For nonnegative random sequences $A_m$, we write $A_m\asymp_P b_m$ if
\begin{align*}
    A_m/b_m=O_P(1)
\qquad\text{and}\qquad
b_m/A_m=O_P(1).
\end{align*}

Recall that we consider the true data generating distribution for the PPCA model with misspecified prior from \Cref{eq:ppca-ms-data-theory}. The true mean and variance of the latent factor distribution are $\bm m_Z
\defeq \E_G[\bm z]$ and $\bm\Lambda_m \defeq \Var_G(\bm z)$. Then the true population mean and variance are 
\begin{align*}
    \bm\mu_0\defeq\E_{P_0}[\bm x] = \bm{W}_0 \bm{m}_Z \qquad \text{and} \qquad \bm\Sigma_0\defeq\Var_{P_0}(\bm x) = \bm{W}_0\bm{\Lambda}_m\bm{W}_0^T + \sigma_0^2I_{D_m}.
\end{align*}
Before we proceed, we establish the following three results for this model. The proofs are available at \Cref{sec:additional-ppca-proofs}.
\begin{result}
\label{result:ppca-population-optimal}
Under \Cref{ass:ppca-regularity,ass:ppca-pervasive}, for any population-optimal $\theta^\star = (\bm{W}^\star, \sigma^{\star 2}) \in \mathcal{S}^\star_{\mathrm{post}}$,
\begin{align*}
    \sigma^{\star 2} = \sigma_0^2, \qquad \bm{W}^\star\bm{W}^{\star\top} = \bm{W}_0\left(\bm{\Lambda}_m+\bm{m}_Z\bm{m}_Z^\top\right)\bm{W}_0^\top, 
    \qquad \mathrm{col}\left(\bm{W}^\star\right) = \mathrm{col}\left(\bm{W}_0\right)
\end{align*}
where $\bm{W}_0$, $\sigma_0^2$ are the true parameters and $\bm m_Z = \E_G[\bm z]$, $\bm\Lambda_m = \Var_G(\bm z)$.
\end{result}

\vspace{1em}

\begin{result}
\label{result:order-of-eigenvalues}
    From the pervasiveness condition (\Cref{ass:ppca-pervasive}) and \Cref{result:ppca-population-optimal}, we have
    \begin{align*}
    \lambda_{\min}^{+}(\bm{W}^\star\bm{W}^{\star\top})\asymp D_m,\qquad\lambda_{\max}(\bm{W}^\star\bm{W}^{\star\top})\asymp D_m,
    \end{align*}
    where $\lambda^+_{\mathrm{min}}(\bm{W}^\star\bm{W}^{\star\top})$ is the smallest strictly positive eigenvalue and $\lambda_{\mathrm{max}}(\bm{W}^\star\bm{W}^{\star\top})$ is the largest eigenvalue of $\bm{W}^\star\bm{W}^{\star\top}$.
\end{result}

\vspace{1em}

\begin{result}
\label{result:ppca-training-eigenspace-rate}
Let
\begin{align*}
\widehat{\bm S}_{\train,m}
=\frac{1}{N_{\train,m}}
\sum_{i=1}^{N_{\train,m}}
\bm x_{m,i}^\train\bm x_{m,i}^{\train\top}.
\end{align*}
Let $\widehat{\bm P}_m$ be the orthogonal projector onto its leading $K$ eigenvectors and $\bm P_{\bm W^\star}$ denote the orthogonal projector
onto $\operatorname{col}(\bm W^\star)$. Suppose \Cref{result:ppca-population-optimal,result:order-of-eigenvalues} hold. Under
\Cref{ass:ppca-regularity,ass:ppca-pervasive}, as
$D_m,N_{\train,m}\to\infty$,
\begin{align}
\lVert\widehat{\bm P}_m-\bm P_{\bm W^\star}\rVert_{\mathrm{op}}
&=O_P\left(N_{\train,m}^{-1/2}\right),
\label{eq:ppca-davis-kahan-rate}\\
\widehat\sigma_m^2-\sigma_0^2&=o_P(1),
\qquad
\lambda_{\min}^+
\left(\widehat{\bm W}_m\widehat{\bm W}_m^\top\right)
\asymp_P D_m.
\label{eq:ppca-fitted-spectrum-rate}
\end{align}
Consequently,
\begin{align}
\lVert\widehat{\bm B}_m-\bm B^\star\rVert_{\mathrm{op}}
=O_P\left(D_m^{-1}+N_{\train,m}^{-1/2}\right).
\label{eq:ppca-estimation-error-B-bound}
\end{align}
\end{result}

Results 1 and 2 provide the bounds needed to control misspecification error in the APPC. Specifically, even under latent-prior misspecification, population-optimal PPCA recovers the true signal subspace, and pervasiveness makes $\bm B^\star$ approach the projector onto this
subspace. Result 3 provides the bounds needed to control parameter-estimation
error in the APPC: it establishes noise-variance consistency and
translates standard PCA/PPCA estimation bounds into control of
$\lVert\widehat{\bm B}_m-\bm B^\star\rVert_{\mathrm{op}}$.

For readability, we write $\bm a_m = \bm a$ and $D_m = D$.

\begin{proof}
    \subsubsection*{Proof of (i): PPCA with oracle APPC}
Throughout the proof, we use $\sigma^{\star2} = \sigma^2_0$ as established in \Cref{result:ppca-population-optimal}.

\textbf{Mean shrinkage condition.} We first prove the oracle centering condition, \Cref{eq:ppca-oracle-centering-condition}.

Let $\bm P_{\bm{W}_0}$ denote the orthogonal projection onto $\mathrm{col}(\bm{W}_0)$. As $\bm{\mu}_0=\bm{W}_0\bm m_Z\in\mathrm{col}(\bm{W}_0)$, and $\col(\bm{B}^\star) \subseteq \col(\bm{W}^\star) = \col(\bm{W}_0)$ from \Cref{result:ppca-population-optimal}, then $\bm P_{\bm W_0}\bm{\mu}_0 = \bm{\mu}_0$ and $\bm{B}^\star=\bm P_{\bm{W}_0}\bm{B}^\star = \bm{B}^\star \bm P_{\bm{W}_0}$. Hence, $(I_D-\bm{B}^\star)\bm{\mu}_0 = (\bm P_{\bm W_0}-\bm{B}^\star)\bm{\mu}_0$. 

Denote $\Delta \defeq \bm P_{\bm W_0}-\bm{B}^\star$.
Now,
\begin{align*}
    \left| \bm{a}^\top\left(I_D-\bm{B}^\star\right)\bm\mu_0 \right| = \lvert \bm{a}^\top\Delta\bm\mu_0 \rvert \underset{(i)}{\leq} \lVert \bm{a} \rVert \, \lVert \Delta \rVert_{\mathrm{op}} \, \lVert \bm\mu_0 \rVert \underset{(ii)}{=} O(D^{-1})
\end{align*}
where $(i)$ is by Cauchy-Schwarz inequality followed by standard operator norm inequality.

Now for $(ii)$, the assumption $D\lVert \bm{a}\rVert_2^2 = O(1)$ implies $\lVert \bm{a}\rVert = O(D^{-1/2})$. And, since by \Cref{ass:ppca-regularity}, $\lVert\bm m_Z\rVert=O(1)$, using the pervasive condition \Cref{ass:ppca-pervasive},
\begin{align}
\label{eq:mu-norm-val}
\lVert\bm{\mu}_0\rVert
\le
\lVert\bm{W}_0\rVert_{\mathrm{op}} \lVert\bm m_Z\rVert
\le
\sqrt{\lambda_{\max}(\bm{W}_0^\top \bm{W}_0)}\lVert\bm m_Z\rVert
=
O\left(D^{1/2}\right).
\end{align}
Finally, on signal subspace of $\bm W^\star$, the eigenvalues of $\bm{B}^\star$ are
${\lambda^\star_k}/{\left(\lambda^\star_k+\sigma_0^{2}\right)}$, for $1 \leq k \leq \mathrm{rank}(\bm{W}^\star)$,
where $\lambda^\star_k$ are the positive eigenvalues of $\bm{W}^\star\bm{W}^{\star\top}$. Hence we have
\begin{align}
\label{eq:delta-norm-val} 
\lVert\Delta\rVert_{\mathrm{op}} = \lVert P_{\bm{W}_0}-\bm{B}^\star\rVert_{\mathrm{op}}
= \max_k
\frac{\sigma_0^{2}}{\lambda^\star_k+\sigma_0^{2}}
\le
\frac{1}{1+\rho^\star}
\end{align}
where $\rho^\star = {\lambda_{\min}^+(\bm {W}^\star\bm {W}^{\star\top})}/{\sigma_0^{ 2}}$. By \Cref{ass:ppca-regularity}, $\sigma_0^2 = O(1)$ and from \Cref{result:order-of-eigenvalues} we obtain $\lVert\Delta\rVert_{\mathrm{op}} = O(D^{-1})$.

Hence, under $\sqrt{N_{\new,m}}/D\to 0$,
\begin{align*}
    \left| \sqrt{N_{\new,m}} \,\bm{a}^\top\left(I_D-\bm{B}^\star\right)\bm{\mu}_0 \right| = O\left(\frac{\sqrt{N_{\new,m}}}{D}\right) = o(1).
\end{align*}
Since $\bm{a}^\top \bm{V}_m^{\mathrm{oracle}}\bm{a}$ is bounded away from $0$, we have
\begin{align*}
\frac{\sqrt{N_{\new,m}}\,
\bm{a}^\top\left(I_D-\bm{B}^\star\right)\bm{\mu}_0
}{\sqrt{\bm{a}^\top \bm{V}_m^{\mathrm{oracle}}\bm{a}}
}\stackrel{P}{\longrightarrow} 0.
\end{align*}

\textbf{Variance matching condition.} We next prove the oracle scaling condition, \Cref{eq:ppca-oracle-scale-condition}.
    
Let $\bm{E}_0 \defeq \bm{W}_0\bm{\Lambda}_m\bm{W}_0^\top - \bm{W}^\star\bm{W}^{\star\top} \in \mathbb{R}^{D\times D}.$ Then,
$\bm\Sigma_0
= \bm{W}^\star\bm{W}^{\star\top}+\sigma_0^2I_D+\bm{E}_0$. 

By definition from \Cref{tab:ppca-reference-distributions},
\begin{align}
\bm{V}_m^{\mathrm{oracle}}
&=  \bm R^\star+\bm B^\star\bm\Sigma_0\bm B^\star \notag\\
&= \sigma_0^{2}\left(I_D+\bm{B}^\star\right)+\bm{B}^\star\left(\bm{W}^\star\bm{W}^{\star\top}+\sigma_0^{2}I_D+\bm{E}_0\right)\bm{B}^\star \notag\\
&\underset{(i)}{=} \bm{W}^\star\bm{W}^{\star\top} + \sigma_0^2I_D+\bm{B}^\star\bm{E}_0\bm{B}^\star \notag\\
& = \bm\Sigma_0 + \left[\bm{B}^\star\bm{E}_0\bm{B}^\star-\bm{E}_0\right]
\label{eq:ppca-oracle-var-rewrite}
\end{align}
where $(i)$ follows from the PPCA identity $\sigma_0^2\bm{B}^\star + \bm{B}^\star\left(\bm{W}^\star\bm{W}^{\star\top}\right)\bm{B}^\star + \sigma_0^2\bm{B}^{\star2} = \bm{W}^\star\bm{W}^{\star\top}$.

We shall bound the second term in the RHS of \Cref{eq:ppca-oracle-var-rewrite}. Note that $\bm{E}_0$ is supported on $\mathrm{col}(\bm{W}_0)$, hence $\bm{E}_0=\bm P_{\bm{W}_0}\bm{E}_0 \bm P_{\bm{W}_0}$. We can re-write $\bm{B}^\star\bm{E}_0\bm{B}^\star-\bm{E}_0
= \left(\bm P_{\bm{W}_0}-\Delta\right)\bm{E}_0\left(\bm P_{\bm{W}_0}-\Delta\right)-\bm{E}_0=\Delta\bm{E}_0\Delta
-\Delta\bm{E}_0-\bm{E}_0\Delta.$
where $\Delta = \bm P_{\bm W_0}-\bm{B}^\star$.
Thus,
\begin{align*}
\lVert\bm{B}^\star\bm{E}_0\bm{B}^\star-\bm{E}_0\rVert_{\mathrm{op}}
\underset{(i)}{\le}
\left(
2\lVert\Delta\rVert_{\mathrm{op}}+\lVert\Delta\rVert_{\mathrm{op}}^2
\right)\lVert\bm{E}_0\rVert_{\mathrm{op}}
\underset{(ii)}{=} O(1).
\end{align*}
where $(i)$ follows from triangle inequality and submultiplicity of operator norm. To obtain $(ii)$, from \Cref{eq:delta-norm-val} and \Cref{result:order-of-eigenvalues},
\begin{align*}
2\lVert\Delta\rVert_{\mathrm{op}}+\lVert\Delta\rVert_{\mathrm{op}}^2 \le \frac{2}{1+\rho^\star} + \frac{1}{(1+\rho^\star)^2} = O\left(D^{-1}\right).
\end{align*}
And, by definition of $\bm{E}_0$ and \Cref{result:ppca-population-optimal}, $\bm{E}_0 = -\bm{W}_0\bm{m}_Z\bm{m}_Z^\top\bm{W}_0^\top = -\bm{\mu}_0\bm{\mu}_0^\top$, hence from \Cref{eq:mu-norm-val}, $\lVert\bm{E}_0\rVert_{\mathrm{op}} = O(D)$.

Finally, using the assumption $\lVert \bm{a}\rVert = O(D^{-1/2})$, 
\begin{align*}
\left|
\bm{a}^\top\left(\bm{B}^\star\bm{E}_0\bm{B}^\star-\bm{E}_0\right)\bm{a}
\right|
\underset{(i)}{\le} \lVert\bm{B}^\star\bm{E}_0\bm{B}^\star-\bm{E}_0\rVert_{\mathrm{op}}\,\lVert\bm{a}\rVert^2
= O\left(D^{-1}\right)
\end{align*}
where $(i)$ follows from the standard quadratic-form bound for the symmetric matrix $\bm{B}^\star\bm{E}_0\bm{B}^\star-\bm{E}_0$.

Since $D \longrightarrow \infty$ and $\bm{a}^\top \bm\Sigma_0 \bm{a}$ is bounded away from $0$,
\begin{align*}
\frac{
\bm{a}^\top\left(\bm{B}^\star\bm{E}_0\bm{B}^\star-\bm{E}_0\right)\bm{a}
}{
\bm{a}^\top \bm\Sigma_0 \bm{a}
}
\stackrel{P}{\rightarrow}0
\end{align*}
completes the proof.

That is, we proved $b(P_0 K_{\theta^\star,\mathrm{post}}) \stackrel{P}{\longrightarrow} 0,\, s(P_0 K_{\theta^\star,\mathrm{post}}) \stackrel{P}{\longrightarrow} 1$ by substituting the first row of
\Cref{tab:ppca-reference-distributions} into \Cref{eq:ppca-common-b-s}. Hence, by \Cref{cor:generic}, the oracle \gls{APPC} for PPCA is calibrated.

\subsubsection*{Proof of (ii): PPCA with estimated APPC} We next prove calibration of \gls{APPC} with estimated parameters for PPCA. We need to show
\begin{align*}
\frac{\sqrt{N_{\new,m}}\,\bm{a}^\top
(\widehat{\bm B}_m-\bm B^\star)\bm\mu_0}{
\sqrt{\bm{a}^\top\bm{V}_m^{\mathrm{oracle}}\bm{a}}
}\stackrel{P}{\longrightarrow}0,
\qquad
\frac{\bm{a}^\top\bm{V}_m^{\mathrm{learn}}\bm{a}
}{\bm{a}^\top\bm{V}_m^{\mathrm{oracle}}\bm{a}}
&\stackrel{P}{\longrightarrow}1.
\end{align*}
By \Cref{result:ppca-training-eigenspace-rate},
$\lVert\widehat{\bm B}_m-\bm B^\star\rVert_{\mathrm{op}}
=O_P(r_{B,m})$, where
$r_{B,m}=D^{-1}+N_{\train,m}^{-1/2}$. Since
$\lVert\bm a\rVert=O(D^{-1/2})$ and
$\lVert\bm\mu_0\rVert=O(D^{1/2})$ from \Cref{eq:mu-norm-val},
\begin{align*}
\left| \sqrt{N_{\new,m}}\,\bm{a}^\top
(\widehat{\bm B}_m-\bm B^\star)\bm\mu_0 \right|
&\leq \sqrt{N_{\new,m}}\,\lVert \bm{a} \rVert
\lVert \widehat{\bm{B}}_m - \bm{B}^\star \rVert_{\mathrm{op}}
\lVert \bm\mu_0 \rVert\\
&=O_P\left(\frac{\sqrt{N_{\new,m}}}{D}
+\sqrt{\frac{N_{\new,m}}{N_{\train,m}}}\right).
\end{align*}
Under the assumptions $\sqrt{N_{\new,m}}/D\longrightarrow 0$ and ${N_{\new,m}}/{N_{\train,m}} \longrightarrow 0$, this quantity is $o_P(1)$. And, as $\bm{a}^\top \bm{V}_m^{\mathrm{oracle}}\bm{a}$ is bounded away from $0$, we have
\begin{align}
\frac{\sqrt{N_{\new,m}}\,
\bm{a}^\top\left(\widehat{\bm B}_m-\bm{B}^\star\right)\bm{\mu}_0
}{\sqrt{\bm{a}^\top \bm{V}_m^{\mathrm{oracle}}\bm{a}}
}\stackrel{P}{\longrightarrow} 0.
\label{eq:ppca-estimated-centering-condition}
\end{align}

For the scale condition, \Cref{tab:ppca-reference-distributions} gives
\begin{align*}
\bm V_m^{\mathrm{learn}}-\bm V_m^{\mathrm{oracle}}
&=\widehat{\bm R}_m-\bm R^\star
+(\widehat{\bm B}_m-\bm B^\star)\bm\Sigma_0\widehat{\bm B}_m
+\bm B^\star\bm\Sigma_0(\widehat{\bm B}_m-\bm B^\star).
\end{align*}
Moreover,
\begin{align*}
\widehat {\bm R}_m - \bm R^\star =
(\widehat\sigma_m^2 - \sigma_0^2)(I_D + \widehat {\bm B}_m)
+ \sigma_0^2(\widehat {\bm B}_m - {\bm B}^\star).
\end{align*}
Now, $\lVert\widehat{\bm B}_m\rVert_{\mathrm{op}},
\lVert\bm B^\star\rVert_{\mathrm{op}}\leq1$,
$\lVert\bm\Sigma_0\rVert_{\mathrm{op}}=O(D)$, and
$\lVert\bm a\rVert^2=O(D^{-1})$. Then by \Cref{result:ppca-training-eigenspace-rate},
$\lVert\widehat{\bm R}_m-\bm R^\star\rVert_{\mathrm{op}}=o_P(1)$. Finally,
\begin{align*}
\left|\bm a^\top
(\bm V_m^{\mathrm{learn}}-\bm V_m^{\mathrm{oracle}})\bm a\right|
&\leq \lVert\bm a\rVert^2
\left\{\lVert\widehat{\bm R}_m-\bm R^\star\rVert_{\mathrm{op}}
+\lVert\bm\Sigma_0\rVert_{\mathrm{op}}
\lVert\widehat{\bm B}_m-\bm B^\star\rVert_{\mathrm{op}}
(\lVert\widehat{\bm B}_m\rVert_{\mathrm{op}}+\lVert\bm B^\star\rVert_{\mathrm{op}})
\right\}\\
&=o_P(1)+O_P(r_{B,m})=o_P(1).
\end{align*}
Since $\bm{a}^\top\bm{V}_m^{\mathrm{oracle}}\bm{a}$ is bounded away from zero,
\begin{align*}
\frac{
\bm{a}^\top\left(\bm{V}_m^{\mathrm{learn}} - \bm{V}_m^{\mathrm{oracle}}\right)\bm{a}
}{
\bm{a}^\top\bm{V}_m^{\mathrm{oracle}}\bm{a}
}
&\stackrel{P}{\longrightarrow}0
\end{align*}
which gives us the scale condition
\begin{align}
    \frac{\bm{a}^\top\bm{V}_m^{\mathrm{learn}}\bm{a}
}{\bm{a}^\top\bm{V}_m^{\mathrm{oracle}}\bm{a}}
\stackrel{P}{\longrightarrow}1.
\label{eq:ppca-estimated-scale-condition}
\end{align}
Consequently, $\bm a^\top\bm V_m^\mathrm{learn}\bm a$ is bounded away from $0$.

Combining part (i), \Cref{eq:ppca-learned-centering}, and
\Cref{eq:ppca-estimated-centering-condition,eq:ppca-estimated-scale-condition}
gives
\begin{align*}
b(P_0K_{\widehat\theta_m,\mathrm{post}})\stackrel{P}{\longrightarrow}0,
\qquad
s(P_0K_{\widehat\theta_m,\mathrm{post}})\stackrel{P}{\longrightarrow}1.
\end{align*}
Thus the estimated-parameter
\gls{APPC} is calibrated by \Cref{cor:generic}.

\subsubsection*{Proof of (iii): PPCA with empirical APPC} We need to show:
\begin{align}
\frac{
\sqrt{N_{\new,m}}\,
\bm{a}^\top\widehat{\bm B}_m
(\overline{\bm x}_m^\agg-\bm\mu_0)
}{
\sqrt{\bm{a}^\top\bm{V}_m^{\mathrm{learn}}\bm{a}}
} \stackrel{P}{\longrightarrow} 0 \quad\text{and}\quad \frac{\bm{a}^\top\bm{V}_m^{\mathrm{learn}}\bm{a}}{\bm{a}^\top\bm{V}_m^{\mathrm{emp}}\bm{a}}\stackrel{P}{\longrightarrow}1.\label{eq:ppca-emp-appc-to-show}
\end{align}

The aggregation sample is independent of the training sample. Conditional on
$\widehat\theta_m$, the matrices $\widehat{\bm B}_m$ and
$\widehat{\bm R}_m$ are fixed, while
$\mathbb{E}(\overline{\bm x}_m^{\agg}-\bm\mu_0\mid\widehat\theta_m)=0$
and
$\Var(\overline{\bm x}_m^{\agg}\mid\widehat\theta_m)
=\bm\Sigma_0/N_{\agg,m}$.
Therefore, we have
\begin{align}
&\E\left[
\left\{
\frac{
\sqrt{N_{\new,m}}\,
\bm{a}^\top\widehat{\bm B}_m
(\overline{\bm x}_m^\agg-\bm\mu_0)
}{
\sqrt{\bm{a}^\top\bm{V}_m^{\mathrm{learn}}\bm{a}}
}
\right\}^2
\rvert
\widehat\theta_m
\right]
=
\frac{N_{\new,m}}{N_{\agg,m}}
\frac{
\bm{a}^\top\widehat{\bm B}_m
\bm\Sigma_0
\widehat{\bm B}_m\bm{a}
}{
\bm{a}^\top\bm{V}_m^{\mathrm{learn}}\bm{a}
}
\leq
\frac{N_{\new,m}}{N_{\agg,m}},
\label{eq:ppca-aggregation-rate-bound}
\end{align}
because
$\widehat{\bm{R}}_m \succeq 0$ and hence $\widehat{\bm B}_m\bm\Sigma_0\widehat{\bm B}_m
\preceq\bm{V}_m^{\mathrm{learn}}$.
Hence, by Chebyshev's inequality, the centering term is
$O_P\{(N_{\new,m}/N_{\agg,m})^{1/2}\}$. Then, \Cref{eq:ppca-aggregation-rate-bound} along with the condition ${N_{\new,m}}/{N_{\agg,m}}\longrightarrow0$ implies that the centering term is $o_P(1)$.

To prove the second term of \Cref{eq:ppca-emp-appc-to-show}, conditional on
$\widehat\theta_m$, define the i.i.d. scalar variables and their conditional
moments by
\begin{align*}
Y_{m,i}&=\bm a^\top\widehat{\bm B}_m\bm x_{m,i}^\agg,
&
\widehat{\mu}_{Y,m}&=\E[Y_{m,i}\mid\widehat\theta_m],
&
\widehat v_{Y,m}&=\Var(Y_{m,i}\mid\widehat\theta_m).
\end{align*}
Here
$\widehat v_{Y,m}=\bm a^\top\widehat{\bm B}_m\bm\Sigma_0
\widehat{\bm B}_m\bm a$, while
\begin{align*}
\bm a^\top\widehat{\bm B}_m\widehat{\bm\Sigma}_{\agg,m}
\widehat{\bm B}_m\bm a
=\frac1{N_{\agg,m}}\sum_{i=1}^{N_{\agg,m}}(Y_{m,i}-\overline Y_m)^2.
\end{align*}
Hence, $\bm{a}^\top
\left(
\bm{V}_m^{\mathrm{emp}}-\bm{V}_m^{\mathrm{learn}}
\right)
\bm{a}= \bm{a}^\top\widehat{\bm B}_m\left(\widehat{\bm\Sigma}_{\agg,m} -\bm\Sigma_0\right)\widehat{\bm B}_m\bm{a}$ is the difference between empirical and population variance of $Y_{m,i}$.
\begin{align*}
    \bm{a}^\top
\left(
\bm{V}_m^{\mathrm{emp}}-\bm{V}_m^{\mathrm{learn}}
\right)
\bm{a}&= \bm{a}^\top\widehat{\bm B}_m\left(\widehat{\bm\Sigma}_{\agg,m} -\bm\Sigma_0\right)\widehat{\bm B}_m\bm{a} \\
&= \underbrace{\left[\frac{1}{N_{\agg,m}}\sum_{i=1}^{N_{\agg,m}}
(Y_{m,i}-\widehat{\mu}_{Y,m})^2-\widehat v_{Y,m}\right]}_{\mathrm{Term \, 1}}
-\underbrace{(\overline Y_m-\widehat{\mu}_{Y,m})^2}_{\mathrm{Term \, 2}}.
\end{align*}
The fourth-moment condition in
\Cref{eq:ppca-oracle-fourth-moment}, together with conditional Jensen's inequality, implies
$\E_{P_0}[\lvert Y_{m,i}\rvert^4\mid\widehat\theta_m]=O_P(1)$. Consequently,
$\E_{P_0}\{(Y_{m,i}-\widehat{\mu}_{Y,m})^4\mid\widehat\theta_m\}
=O_P(1)$ and
$\Var\{(Y_{m,i}-\widehat{\mu}_{Y,m})^2\mid\widehat\theta_m\}=O_P(1)$.
Chebyshev's inequality therefore gives
$\mathrm{Term~1}=O_P(N_{\agg,m}^{-1/2})$.

The same fourth-moment bound gives
$\widehat v_{Y,m}=O_P(1)$, so
$\Var(\overline Y_m\mid\widehat\theta_m)=O_P(N_{\agg,m}^{-1})$
and $\mathrm{Term~2}=O_P(N_{\agg,m}^{-1})$.

Hence,
\begin{align*}
\bm{a}^\top
\left(
\bm{V}_m^{\mathrm{emp}}-\bm{V}_m^{\mathrm{learn}}
\right)
\bm{a}
=
O_P\left(N_{\agg,m}^{-1/2}\right)
=
o_P(1),
\end{align*}
Since $\bm a^\top\bm V_m^\mathrm{learn}\bm a$ is bounded away from zero,
the second condition in \Cref{eq:ppca-emp-appc-to-show} follows, and
$\bm a^\top\bm V_m^\mathrm{emp}\bm a$ is also bounded away from zero. Combining these two incremental bounds with part (ii) and
\Cref{eq:ppca-empirical-centering} gives
\begin{align*}
b(\widehat P_{\agg,m}K_{\widehat\theta_m,\mathrm{post}})\stackrel{P}{\longrightarrow}0,
\qquad
s(\widehat P_{\agg,m}K_{\widehat\theta_m,\mathrm{post}})\stackrel{P}{\longrightarrow}1.
\end{align*}
Calibration now follows from \Cref{cor:generic}.
\end{proof}

\subsection{Bayesian APPC}
\label{app:bayesian-appc}
In this subsection, we consider a regular, identifiable, finite-dimensional
model for the global parameter, with an exact conditional posterior for the
local latent variables. The result is restricted to the well-specified case, as in \citet{moran2024a}. 

\vspace{1em}

\begin{definition}[Bayesian aggregated posterior predictive check]
Let $\mathcal F_m=\sigma(\bm X_m^\train,\bm X_m^\agg)$. When the
global parameter is treated as random, the Bayesian \gls{APPC} $p$-value is
given by
\begin{align*}
p_m^{\mathrm{bayes}}
&= \int
\Pr\left\{
d_{N_{\new,m}}(\bm X_m^\rep)
\ge d_{N_{\new,m}}(\bm X_m^\new)
\mid \mathcal F_m,\bm X_m^\new,\theta
\right\}
\Pi(d\theta\mid\bm X_m^\train).
\end{align*}
Here the empirical aggregated posterior is
\begin{align*}
\widehat q_{\theta,\agg}(d\bm z)
=\int p_\theta(d\bm z\mid\bm x)\,
\widehat P_{\agg,m}(d\bm x).
\end{align*}
Equivalently, for each replicated dataset, draw one shared global parameter
$\widetilde\theta\sim\Pi(\cdot\mid\bm X_m^\train)$ and then draw, independently
over $i$, conditional on $(\mathcal F_m,\widetilde\theta)$,
\begin{align*}
\bm z_{m,i}^\rep\sim\widehat q_{\widetilde\theta,\agg},
\qquad
\bm x_{m,i}^\rep\sim
p_{\widetilde\theta}(\cdot\mid\bm z_{m,i}^\rep),
\qquad i=1,\ldots,N_{\new,m}.
\end{align*}
\end{definition}

Conditional on $(\mathcal F_m,\theta)$, let $Q_{m,\theta}$ denote the
distribution of one replicate. For the empirical aggregated posterior,
$Q_{m,\theta}=\widehat P_{\agg,m}K_{\theta,\mathrm{post}}$. Define its
conditional diagnostic moments by
\begin{align*}
\nu_m(\theta)=\int h(\bm x)\,Q_{m,\theta}(d\bm x),
\qquad
\tau_m^2(\theta)=\int
\{h(\bm x)-\nu_m(\theta)\}^2Q_{m,\theta}(d\bm x).
\end{align*}

We also define the population reference quantities:
\begin{align*}
Q_\theta=P_0K_{\theta,\mathrm{post}},
\qquad
\nu(\theta)=\int h(\bm x)\,Q_\theta(d\bm x),
\qquad
\tau^2(\theta)=\Var_{Q_\theta}\{h(\bm x)\}.
\end{align*}

\begin{proposition}[Calibration of the Bayesian APPC]
\label{prop:bayesian-appc-calibration}  
Suppose $P_0=P_{\theta^\star}$ for an interior, locally identifiable
$\theta^\star\in\mathbb R^p$, where $p$ is fixed, and suppose the holdout CLT
in \Cref{ass:diag-clt} holds. Let $I(\theta^\star)$ denote the per-observation
Fisher information, assumed positive definite. Suppose also that:
\begin{enumerate}
    \item For some fixed neighborhood $U$ of $\theta^\star$, the conditional
    reference CLT holds uniformly in $\theta\in U$:
    \begin{align*}
    \sup_{\theta\in U}\sup_{t\in\mathbb R}
    \left|
    \Pr\left\{
    \frac{\sqrt{N_{\new,m}}
    \{d_{N_{\new,m}}(\bm X_m^\rep)-\nu_m(\theta)\}}
    {\tau_m(\theta)}
    \le t
    \;\middle|\;\mathcal F_m,\theta
    \right\}-\Phi(t)
    \right|
    \stackrel{P}{\longrightarrow}0.
    \end{align*}
    \item The functions $\nu_m$ are continuously differentiable on $U$.
    Because $\nu_m$ and $\tau_m^2$ themselves vary with $m$, smoothness for
    each fixed $m$ does not by itself control the expansion as $m\to\infty$.
    We therefore assume that, for every fixed $M<\infty$, the following
    uniform smoothness conditions hold over the posterior concentration
    neighborhood:
    \begin{align*}
    \sup_{\substack{\theta\in U\\
    \sqrt{N_{\train,m}}\lVert\theta-\theta^\star\rVert\le M}}
    \lVert\dot\nu_m(\theta)-\dot\nu_m(\theta^\star)\rVert
    &\stackrel{P}{\longrightarrow}0,
    \\
    \sup_{\substack{\theta\in U\\
    \sqrt{N_{\train,m}}\lVert\theta-\theta^\star\rVert\le M}}
    \lvert\tau_m^2(\theta)-\tau_m^2(\theta^\star)\rvert
    &\stackrel{P}{\longrightarrow}0.
    \end{align*}
    Moreover, $\tau_m(\theta)>0$ on $U$ with probability tending to one,
    and
    \begin{align*}
    \dot\nu_m(\theta^\star)
    \stackrel{P}{\longrightarrow}\dot\nu(\theta^\star).
    \end{align*}
    \item For some center $\widehat\theta_m$, the posterior satisfies the
    Bernstein--von Mises approximation
    \begin{align*}
    \left\|
    \Pi\left(
    \sqrt{N_{\train,m}}(\theta-\widehat\theta_m)\in\,\cdot
    \mid\bm X_m^\train
    \right)
    -\mathcal N_p\{0,\,I(\theta^\star)^{-1}\}
    \right\|_{\mathrm{TV}}
    \stackrel{P}{\longrightarrow}0,
    \end{align*}
    and its center is asymptotically efficient:
    \begin{align*}
    \sqrt{N_{\train,m}}(\widehat\theta_m-\theta^\star)
    \stackrel{d}{\longrightarrow}
    \mathcal N_p\{0,I(\theta^\star)^{-1}\}.
    \end{align*}
\end{enumerate}
Assume that the split sizes satisfy
\begin{align*}
\kappa_m\defeq
\frac{N_{\new,m}}{N_{\train,m}}
\longrightarrow\kappa\in[0,\infty),
\qquad
N_{\agg,m}\longrightarrow\infty,
\qquad
\frac{N_{\new,m}}{N_{\agg,m}}\longrightarrow0.
\end{align*}
Then
\begin{align}
p_m^{\mathrm{bayes}}
&=1-\Phi(T_m)+o_P(1),
\notag\\
V_m^{\mathrm{bayes}}
&=\tau_m^2(\theta^\star)
+\kappa_m\dot\nu_m(\theta^\star)^T
I(\theta^\star)^{-1}\dot\nu_m(\theta^\star),
\notag\\
T_m
&=\frac{1}{\sqrt{V_m^{\mathrm{bayes}}}}
\begin{aligned}[t]
\big[&\sqrt{N_{\new,m}}
\{d_{N_{\new,m}}(\bm X_m^\new)-\nu_0\}\\
&+\sqrt{N_{\new,m}}\{\nu_0-\nu_m(\theta^\star)\}\\
&-\sqrt{N_{\new,m}}\,
\dot\nu_m(\theta^\star)^T
(\widehat\theta_m-\theta^\star)\big],
\end{aligned}
\label{eq:bayesian-appc-standardized-statistic}
\end{align}
and $T_m\stackrel{d}{\longrightarrow}\mathcal{N}(0,1)$. Consequently,
$p_m^{\mathrm{bayes}}\stackrel{d}{\longrightarrow}U(0,1)$.
\end{proposition}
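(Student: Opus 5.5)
The plan is to express $p_m^{\mathrm{bayes}}$ as a posterior average of conditional $p$-values, linearize each one in $\theta$ around $\theta^\star$, and then use the Bernstein--von Mises approximation to carry out the average as a Gaussian integral.

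\textbf{Step 1: conditional $p$-values.} Fix $\theta\in U$. By condition 1, uniformly in $\theta$,
\begin{align*}
\Pr\{d_{N_{\new,m}}(\bm X_m^\rep)\ge d_{N_{\new,m}}(\bm X_m^\new)\mid\mathcal F_m,\bm X_m^\new,\theta\}
=1-\Phi\bigl(W_m(\theta)\bigr)+o_P(1),
\end{align*}
where
\begin{align*}
W_m(\theta)=\sqrt{N_{\new,m}}\{d_{N_{\new,m}}(\bm X_m^\new)-\nu_m(\theta)\}/\tau_m(\theta).
\end{align*}
Posterior mass outside the shrinking set $\{\sqrt{N_{\train,m}}\lVert\theta-\theta^\star\rVert\le M\}$ is small for large $M$. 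This follows from Bernstein--von Mises together with $\sqrt{N_{\train,m}}$-consistency of $\widehat\theta_m$. Since the integrand lies in $[0,1]$, we may therefore restrict the integral to this set at cost $o_P(1)+\epsilon_M$.

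\textbf{Step 2: linearize on the local set.} On the local set, the $\tau_m^2$ smoothness condition allows us to replace $\tau_m(\theta)$ by $\tau_m(\theta^\star)$. For $\nu_m$, a mean-value expansion with the uniform derivative condition gives
\begin{align*}
\nu_m(\theta)=\nu_m(\theta^\star)+\dot\nu_m(\theta^\star)^T(\theta-\theta^\star)+o_P(N_{\train,m}^{-1/2}).
\end{align*}
Multiplying by $\sqrt{N_{\new,m}}=\sqrt{\kappa_m N_{\train,m}}$, with $\kappa_m$ bounded, keeps the remainder $o_P(1)$. Next write $\theta-\theta^\star=(\theta-\widehat\theta_m)+(\widehat\theta_m-\theta^\star)$ and set $u=\sqrt{N_{\train,m}}(\theta-\widehat\theta_m)$. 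Then
\begin{align*}
W_m(\theta)=\frac{A_m-\sqrt{\kappa_m}\,\dot\nu_m(\theta^\star)^Tu}{\tau_m(\theta^\star)}+o_P(1),
\end{align*}
where $A_m$ is the bracketed numerator of $T_m$.

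\textbf{Step 3: Gaussian integration.} By Bernstein--von Mises in total variation, and because $\Phi$ is bounded and Lipschitz, the posterior average equals $\E_u[\Phi(\cdot)]$ with $u\sim\mathcal N_p\{0,I(\theta^\star)^{-1}\}$, up to $o_P(1)$. The identity $\E\Phi(a+cG)=\Phi(a/\sqrt{1+c^2})$ for $G\sim\mathcal N(0,1)$ then gives
\begin{align*}
1-\E_u\Phi(\cdot)=1-\Phi\bigl(A_m/\sqrt{V_m^{\mathrm{bayes}}}\bigr)=1-\Phi(T_m).
\end{align*}
Here we use that $\tau_m^2(\theta^\star)$ is bounded away from zero, which follows from $\tau_m>0$ w.p.\ tending to one combined with Step 4's limits.

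\textbf{Step 4: limit of $T_m$.} Decompose $A_m$ into three independent pieces:
\begin{itemize}
\item the holdout term, which is $\tau_0Z_m\to\tau_0\mathcal N(0,1)$ by \Cref{ass:diag-clt};
\item the aggregation term $\sqrt{N_{\new,m}}\{\nu_0-\nu_m(\theta^\star)\}$;
\item the training term, $-\sqrt{\kappa_m}\,\dot\nu_m(\theta^\star)^T\sqrt{N_{\train,m}}(\widehat\theta_m-\theta^\star)\to\mathcal N\{0,\kappa\,\dot\nu^TI^{-1}\dot\nu\}$.
\end{itemize}
For the aggregation term, well-specification and \Cref{cor:oracle-ws-calibration} give $\nu(\theta^\star)=\nu_0$. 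Since $\nu_m(\theta^\star)$ is an i.i.d.\ average over $\bm X_m^\agg$ of $g_{\theta^\star}$, exactly as in the proof of \Cref{prop:emp-appc-calibration}, this term is $O_P(\sqrt{N_{\new,m}/N_{\agg,m}})=o_P(1)$. A second-moment bound is implicitly needed here. By the same argument, $\tau_m^2(\theta^\star)\to\tau^2(\theta^\star)=\tau_0^2$.

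The three pieces are independent across splits. Their variances therefore add to $\tau_0^2+\kappa\dot\nu^TI^{-1}\dot\nu$, which matches the limit of $V_m^{\mathrm{bayes}}$, so $T_m\to\mathcal N(0,1)$. If $\kappa_m$ does not converge this step could be delicate, but the statement assumes convergence. Continuous mapping then yields uniformity of $p_m^{\mathrm{bayes}}$.

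\textbf{Main obstacle.} The hardest part is Step 3: justifying the interchange between the uniform CLT approximation, the local linearization, and the TV Bernstein--von Mises replacement. The centering $\widehat\theta_m$ is random and correlated with the integrand, and the linearization error must be uniform over the shrinking neighborhood. I would handle this by first bounding $\sup_{\lVert u\rVert\le M}|W_m-\text{linear}|$ using condition 2, and then using boundedness of $\Phi$ to absorb the tails.
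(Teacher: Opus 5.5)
Your proposal is correct and follows essentially the paper's route. It uses the same decomposition: replicate CLT, posterior-draw linearization re-centered at $\widehat\theta_m$, aggregation error, and holdout CLT. Your identity $\E\Phi(a+cG)=\Phi(a/\sqrt{1+c^2})$, applied under the total-variation Bernstein--von Mises approximation, spells out the step the paper states only briefly, namely that ``$W_m$ is conditionally asymptotically Gaussian with mean $\mu_{W,m}$ and variance $V_{W,m}$.''

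The second-moment bound you flag in Step 4 is not an extra assumption. Well-specification gives $P_0K_{\theta^\star,\mathrm{post}}=P_0$, so Jensen yields $\E_{P_0}[g_{\theta^\star}^2]\le\E_{P_0}[h^2]<\infty$. Also, $\tau_m^2(\theta^\star)\to\tau_0^2$ needs only the weak law of large numbers for $g^{(2)}_{\theta^\star}$, not a Chebyshev bound.
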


\begin{proof}
Under the joint construction in the definition, let
\begin{align*}
W_m=\sqrt{N_{\new,m}}
\{d_{N_{\new,m}}(\bm X_m^\rep)
-d_{N_{\new,m}}(\bm X_m^\new)\}.
\end{align*}
Then $p_m^{\mathrm{bayes}}
=\Pr(W_m\ge0\mid\mathcal F_m,\bm X_m^\new)$ exactly.

Adding and subtracting $\nu_m(\theta)$, $\nu_m(\theta^\star)$, and $\nu_0$
gives the exact decomposition
\begin{align}
W_m
={}&\sqrt{N_{\new,m}}
\{d_{N_{\new,m}}(\bm X_m^\rep)-\nu_m(\theta)\}
\notag\\
&+\sqrt{N_{\new,m}}
\{\nu_m(\theta)-\nu_m(\theta^\star)\}
\notag\\
&+\sqrt{N_{\new,m}}
\{\nu_m(\theta^\star)-\nu_0\}
\notag\\
&-\sqrt{N_{\new,m}}
\{d_{N_{\new,m}}(\bm X_m^\new)-\nu_0\}.
\label{eq:bayesian-w-decomposition}
\end{align}
The four terms isolate, respectively, replicate sampling variation, variation
due to the posterior draw of $\theta$, aggregation error, and holdout sampling
variation. To see the third interpretation explicitly, decompose
\begin{align}
\nu_m(\theta^\star)-\nu_0
={}&\underbrace{\nu_m(\theta^\star)-\nu(\theta^\star)}_
{\text{aggregation error}}
+\underbrace{\nu(\theta^\star)-\nu_0}_
{\text{population reference error}}.
\label{eq:bayesian-aggregation-decomposition}
\end{align}
The second term is zero by the well-specified assumption. The first satisfies
\begin{align*}
\sqrt{N_{\new,m}}
\{\nu_m(\theta^\star)-\nu(\theta^\star)\}
={}&\left(\frac{N_{\new,m}}{N_{\agg,m}}\right)^{1/2}
\sqrt{N_{\agg,m}}
\{\nu_m(\theta^\star)-\nu(\theta^\star)\}.
\end{align*}
Now, 
\begin{align*}
\sqrt{N_{\agg,m}}
\{\nu_m(\theta^\star)-\nu(\theta^\star)\} = O_P(1)
\end{align*}
using the same Jensen plus Chebyshev argument of \Cref{prop:emp-appc-calibration}.  So, the first term of \Cref{eq:bayesian-aggregation-decomposition} is $o_P(1)$ because
$N_{\new,m}/N_{\agg,m}\to0$. We also have
$\tau_m^2(\theta^\star) - \tau^2(\theta^\star)=o_P(1)$ by the law of large numbers. 

The Bernstein--von Mises approximation and the limit distribution of its
center imply
$\theta-\theta^\star=O_P(N_{\train,m}^{-1/2})$.
For any $\epsilon>0$, we have a fixed $M$ such that
$\sqrt{N_{\train,m}}\|\theta-\theta^\star\|\le M$ with probability at least
$1-\epsilon$ asymptotically. On this event, the mean-value theorem and uniform
smoothness over the corresponding neighborhood give the posterior delta-method
expansion
\begin{align*}
\nu_m(\theta)-\nu_m(\theta^\star)
=\dot\nu_m(\theta^\star)^T(\theta-\theta^\star)
+o_P\left(N_{\train,m}^{-1/2}\right).
\end{align*}
Because $\kappa_m=O(1)$, the last term remains $o_P(1)$ after
multiplication by $\sqrt{N_{\new,m}}$. Notice that the linear term is
generally $O_P(1)$ when $\kappa>0$ and is not discarded. Similarly,
uniform smoothness of the variance over this neighborhood gives
$\tau_m^2(\theta)-\tau_m^2(\theta^\star)=o_P(1)$.

The uniform conditional CLT and the total-variation Bernstein--von Mises
approximation now imply that, conditional on
$(\mathcal F_m,\bm X_m^\new)$, $W_m$ is asymptotically Gaussian with mean
\begin{align*}
\mu_{W,m}
={}&\sqrt{N_{\new,m}}\{\nu_m(\theta^\star)-\nu_0\}
+\sqrt{N_{\new,m}}\,
\dot\nu_m(\theta^\star)^T(\widehat\theta_m-\theta^\star)\\
&-\sqrt{N_{\new,m}}
\{d_{N_{\new,m}}(\bm X_m^\new)-\nu_0\},
\end{align*}
and variance
\begin{align*}
V_{W,m}
=\tau_m^2(\theta^\star)
+\kappa_m\dot\nu_m(\theta^\star)^T
I(\theta^\star)^{-1}\dot\nu_m(\theta^\star).
\end{align*}
It follows that
\begin{align*}
p_m^{\mathrm{bayes}}
=1-\Phi\left(-\frac{\mu_{W,m}}{\sqrt{V_{W,m}}}\right)+o_P(1)
=1-\Phi(T_m)+o_P(1),
\end{align*}
with $T_m$ as in
\Cref{eq:bayesian-appc-standardized-statistic}.

It remains to obtain the unconditional limit of $T_m$. By the holdout CLT,
\begin{align*}
\sqrt{N_{\new,m}}
\{d_{N_{\new,m}}(\bm X_m^\new)-\nu_0\}
\stackrel{d}{\longrightarrow} \mathcal{N}(0,\,\tau_0^2).
\end{align*}
Also, by asymptotic efficiency of the posterior center,
\begin{align*}
\sqrt{N_{\new,m}}\,
\dot\nu_m(\theta^\star)^T(\widehat\theta_m-\theta^\star)
\stackrel{d}{\longrightarrow} \mathcal{N}\left(0,\,\kappa\dot\nu(\theta^\star)^T
I(\theta^\star)^{-1}\dot\nu(\theta^\star)\right).
\end{align*}
The holdout sample is independent of $\mathcal F_m$, so these two limits are
independent. The aggregation calculation following
\Cref{eq:bayesian-aggregation-decomposition} makes the remaining numerator
term $o_P(1)$ and gives
\begin{align*}
V_{W,m}\stackrel{P}{\longrightarrow}
\tau_0^2+\kappa\dot\nu(\theta^\star)^T
I(\theta^\star)^{-1}\dot\nu(\theta^\star).
\end{align*}
Consequently, $T_m\stackrel{d}{\longrightarrow} \mathcal{N}(0,1)$ by Slutsky's
theorem, and the probability integral transform gives
$p_m^{\mathrm{bayes}}\stackrel{d}{\longrightarrow}U(0,1)$. In particular,
calibration allows $N_{\new,m}/N_{\train,m}\to\kappa>0$ because posterior
uncertainty supplies the corresponding variance term. In contrast,
$\widehat q_{\theta,\agg}$ is a plug-in empirical distribution, so its
uncertainty is negligible here because
$N_{\new,m}/N_{\agg,m}\to0$.
\end{proof}

\subsection{Proof of additional PPCA results}\label{sec:additional-ppca-proofs}

In the proof of the following results, for readability, we write $\bm a_m = \bm a$ and $D_m = D$.

\subsubsection{Proof of \Cref{result:ppca-population-optimal}}

\begin{proof}
    Recall the fitted PPCA model:
    \begin{align*}
    \bm z_i &\sim \mathcal{N}\left(0,\,I_K\right),
    \qquad
    \bm x_i\mid\bm z_i \sim \mathcal{N}\left(\bm W\bm z_i,\,\sigma^2I_D\right)
    \\
    \bm x_i &\sim \mathcal{N}\left(0, \,\bm C_\theta\right),
    \qquad \bm C_\theta \defeq \bm W\bm{W}^\top + \sigma^2I_D.
    \end{align*}
The population log-likelihood is
\begin{align*}
    \mathcal{L}_{\mathrm{post}}(\theta) = -\frac{1}{2}\left[\log \det \bm C_{\theta} + \operatorname{tr}\left(\bm C_{\theta}^{-1} \bm S_0\right)\right] + \text{constant}, 
    \quad
    \text{where} \quad
    \bm S_0 \defeq \mathbb{E}_{P_0}\left[\bm x_i \bm{x}_i^\top\right].
\end{align*}
Over all covariance matrices $\bm C \succ 0$, $\mathcal{L}_{\mathrm{post}}(\theta)$ is uniquely maximized at $\bm C = \bm S_0$.

Under the true data-generating model, \Cref{eq:ppca-ms-data-theory}, 
\begin{align*}
    \bm S_0 = \mathbb{E}_{P_0}\left[\bm x_i \bm{x}_i^\top\right] = \bm{W}_0\left(\bm{\Lambda}_m+\bm{m}_Z\bm{m}_Z^\top\right)\bm{W}_0^\top + \sigma_0^2I_D.
\end{align*}

Let $\bm{H}_m \defeq \E_{G}[\bm z_i\bm z_i^\top] = \bm{\Lambda}_m + \bm{m}_Z\bm{m}_Z^\top$.
Since $\bm\Lambda_m \succ 0$, $\bm{H}_m \succ 0$, we can rewrite 
\begin{align*}
    \bm S_0 = \left(\bm{W}_0 \bm{H}_m^{1/2}\right) \left(\bm{W}_0 \bm{H}_m^{1/2}\right)^\top + \sigma_0^2I_D,
\end{align*} so $\bm S_0$ belongs to the PPCA covariance class. Hence for any $\theta^\star \in \mathcal{S}^\star_{\mathrm{post}}$, $\bm C_{\theta^\star} = \bm S_0$.  Thus,
\begin{align*}
\bm{W}^\star\bm{W}^{\star\top} + \sigma^{\star 2}I_D = \bm{W}_0\bm{H}_m\bm{W}_0^\top + \sigma_0^2 I_D.
\end{align*}
Furthermore, $K < D$, $\bm H_m \succ 0$, $\mathrm{rank}(\bm W_0) = K$ and $\mathrm{rank}(\bm W^\star \bm W^{\star \top}) \leq K$. Hence, $\sigma^{\star 2} = \lambda_{\min}(\bm C_{\theta^\star}) = \lambda_{\min}(\bm S_0) = \sigma_0^2$
which completes the proof of the first part.

For the next part, note that from the regularity assumptions, \Cref{ass:ppca-regularity}, the eigenvalues of $\bm{\Lambda}_m$ are uniformly bounded and $\|\bm m_Z\|=O(1)$. Also, $\bm \Lambda_m \succ 0$ and $\bm{m}_Z\bm{m}_Z^\top \succeq 0$ yields
\begin{align}
\label{eq:bound-eigenvalues-H}
    0 < c_{\bm \Lambda} \leq \lambda_{\min}(\bm \Lambda_m) &\leq \lambda_{\min}(\bm H_m) \notag\\
    &\leq \lambda_{\max}(\bm H_m) \leq \lambda_{\max}(\bm \Lambda_m)  + \lVert \bm{m}_Z \rVert ^2 \leq C_{\bm \Lambda} + \lVert \bm{m}_Z \rVert ^2 < \infty.
\end{align}
The matrix $\bm{H}_m$ is symmetric positive definite by \Cref{ass:ppca-regularity} and since $\lambda_{\min}(\bm \Lambda_m) > 0$, $\bm{H}_m$ is invertible. Now,
\begin{align*}
    \col\left(\bm{W}^\star\right) = \col\left(\bm{W}_0 \bm{H}_m \bm{W}_0^\top\right)
    = \col\left(\bm{W}_0 \bm{H}_m^{1/2} (\bm{W}_0 \bm{H}_m^{1/2})^\top\right) = \col\left(\bm{W}_0\right).
\end{align*}
\end{proof}

\subsubsection{Proof of \Cref{result:order-of-eigenvalues}}

\begin{proof}
From \Cref{ass:ppca-pervasive} since $K$ is fixed and $\bm{Q}_W$ is positive definite, all eigenvalues of $\bm{Q}_W$ are bounded away from $0$ and $\infty$. By continuity of extreme eigenvalues,
\begin{align*}
    \lambda_{\min}\left(\frac{1}{D}\bm{W}_0^\top\bm{W}_0\right) \longrightarrow \lambda_{\min}(\bm{Q}_W), \qquad
    \lambda_{\max}\left(\frac{1}{D}\bm{W}_0^\top\bm{W}_0\right) \longrightarrow \lambda_{\max}(\bm{Q}_W).
\end{align*}
For sufficiently large $D$, there exist constants $0 < c < C < \infty$ such that 
\begin{align*}
    cD \le \lambda_{\min}\left(\bm{W}_0^\top\bm{W}_0\right) \le \lambda_{\max}\left(\bm{W}_0^\top\bm{W}_0\right) \le CD.
\end{align*} 
Note that, $\bm{W}_0 \in \mathbb{R}^{D \times K}$ and $\bm{W}_0\bm{W}_0^{\top} \in \mathbb{R}^{D \times D}$ have rank $K < D$ for sufficiently large $D$, under the pervasiveness assumption. Furthermore, $\bm{W}_0^\top\bm{W}_0$ and $\bm{W}_0\bm{W}_0^\top$ have the same $K$ nonzero eigenvalues.
Therefore, 
\begin{align*}
    \lambda_{\min}^{+}(\bm{W}_0\bm{W}_0^{\top}) = \lambda_{\min}(\bm{W}_0^{\top}\bm{W}_0)\asymp D,\quad \lambda_{\max}(\bm{W}_0\bm{W}_0^{\top}) = \lambda_{\max}(\bm{W}_0^{\top}\bm{W}_0)\asymp D.
\end{align*}
The nonzero eigenvalues of $\bm{W}^\star\bm{W}^{\star\top} = \bm{W}_0 \bm{H}_m \bm{W}_0^\top$ (from \Cref{result:ppca-population-optimal}) are the eigenvalues of $\bm{H}_m^{1/2}\bm{W}_0^\top \bm{W}_0 \bm{H}_m^{1/2}$.
From \Cref{eq:bound-eigenvalues-H}  for some finite constant $C_{\bm{H}}$, $c_{\bm \Lambda} \leq \lambda_{\min}(\bm{H}_m) \leq \lambda_{\max}(\bm{H}_m) \leq C_{\bm{H}}$. Therefore,
\begin{align*}
    c_{\bm \Lambda}\lambda_{\min}\left(\bm{W}_0^\top \bm{W}_0\right) &\leq \lambda_{\min}\left(\bm{H}_m^{1/2}\bm{W}_0^\top \bm{W}_0 \bm{H}_m^{1/2}\right) \\
    &= \lambda_{\min}^+(\bm{W}^\star \bm{W}^{\star\top}) \leq
    \lambda_{\max}(\bm{W}^\star \bm{W}^{\star\top}) \le C_{\bm{H}} \lambda_{\max}(\bm{W}_0^\top \bm{W}_0).
\end{align*}
Hence, $\lambda_{\min}^{+}(\bm{W}^\star\bm{W}^{\star \top})\asymp D,\, \lambda_{\max}(\bm{W}^\star\bm{W}^{\star\top})\asymp D.$
\end{proof}

\subsubsection{Proof of
\Cref{result:ppca-training-eigenspace-rate}}

\begin{proof}
For readability, write $n=N_{\train,m}$ and $D=D_m$. 

Let $\bm H_m \defeq \E_G[\bm z_i\bm z_i^\top]
=\bm\Lambda_m+\bm m_Z\bm m_Z^\top$. The population second-moment matrix under the true data generating model, \Cref{eq:ppca-ms-data-theory} is
\begin{align*}
\bm S_0=\E_{P_0}\left[\bm x_i\bm x_i^\top\right]
=\bm W_0\bm H_m\bm W_0^\top+\sigma_0^2I_D,
\end{align*}
and its observable empirical counterpart satisfies
\begin{align*}
\widehat{\bm S}_{\train,m}-\bm S_0
={}&\frac1n\sum_{i=1}^n
\left\{\bm x_i\bm x_i^\top-\bm S_0\right\}.
\end{align*}
Pervasiveness, the bounded fourth moment of $\bm z_i$, and the Gaussian noise assumption imply $\E\lVert\bm x_i\rVert^4=O(D^2)$. Independence across observations therefore gives
\begin{align*}
\E\lVert\widehat{\bm S}_{\train,m}-\bm S_0\rVert_F^2
&=\frac1n
\E\lVert\bm x_i\bm x_i^\top-\bm S_0\rVert_F^2\\
&\leq\frac1n\E\lVert\bm x_i\bm x_i^\top\rVert_F^2
=\frac1n\E\lVert\bm x_i\rVert^4
=O\left(D^2/n\right).
\end{align*}
Since the operator norm is bounded by the Frobenius norm, Markov's inequality
then yields
\begin{align}
\lVert\widehat{\bm S}_{\train,m}-\bm S_0\rVert_{\mathrm{op}}
=O_P\left(D/\sqrt n\right).
\label{eq:ppca-training-covariance-rate}
\end{align}

Let $\bm U_m$ and $\widehat{\bm U}_m$ contain the leading $K$
orthonormal eigenvectors of $\bm S_0$ and
$\widehat{\bm S}_{\train,m}$, respectively. The population eigengap is
\begin{align*}
\delta_m
&=\lambda_K(\bm S_0)-\lambda_{K+1}(\bm S_0)\\
&=\lambda_{\min}^+(\bm W_0\bm H_m\bm W_0^\top)
=\lambda_{\min}^+(\bm W^\star\bm W^{\star\top})
\asymp D,
\end{align*}
where the last equality follows from
\Cref{result:ppca-population-optimal,result:order-of-eigenvalues}.
The Davis--Kahan $\sin\Theta$ theorem
\citep{davis1970rotation}, in the population-gap form of
\citet[Theorem~2]{yu2015useful}, gives
\begin{align*}
\lVert\widehat{\bm P}_m-\bm P_{\bm W^\star}\rVert_{\mathrm{op}}
&=\lVert\sin\Theta(\widehat{\bm U}_m,\bm U_m)\rVert_{\mathrm{op}}\\
&\leq
\lVert\sin\Theta(\widehat{\bm U}_m,\bm U_m)\rVert_F\\
&\leq
\frac{2\sqrt K\,
\lVert\widehat{\bm S}_{\train,m}-\bm S_0\rVert_{\mathrm{op}}}
{\delta_m}
=O_P(n^{-1/2}).
\end{align*}
The last equality uses \Cref{eq:ppca-training-covariance-rate} and the facts that $D,n\to\infty$ and $K$ is fixed. This proves
\Cref{eq:ppca-davis-kahan-rate}. Notice that this version of the theorem requires only the population eigengap; no separate sample-eigengap assumption is needed.

It remains to prove consistency of $\widehat\sigma_m^2$ and establish the
order of the positive eigenvalues of
$\widehat{\bm W}_m\widehat{\bm W}_m^\top$. Let
$\widehat\ell_1\geq\cdots\geq\widehat\ell_D$ be the eigenvalues of
$\widehat{\bm S}_{\train,m}$. The PPCA maximum-likelihood estimator satisfies
\begin{align*}
\widehat\sigma_m^2
=\frac{1}{D-K}\sum_{j=K+1}^D\widehat\ell_j.
\end{align*}
The matrix $\bm H_m$ is positive definite by
\Cref{ass:ppca-regularity}, and pervasiveness implies that $\bm W_0$ has
column rank $K$. Hence $\bm W_0\bm H_m\bm W_0^\top$ is positive semidefinite with rank $K$ and therefore has $D-K$ zero eigenvalues. Because
$\bm S_0=\bm W_0\bm H_m\bm W_0^\top+\sigma_0^2I_D$, adding
$\sigma_0^2I_D$ shifts every eigenvalue by $\sigma_0^2$. Thus,
\begin{align*}
\lambda_j(\bm S_0)=\sigma_0^2,
\qquad j=K+1,\ldots,D.
\end{align*}
The fourth-moment assumptions imply
\begin{align*}
\frac{\operatorname{tr}(\widehat{\bm S}_{\train,m}-\bm S_0)}D
=O_P(n^{-1/2}),
\end{align*}
because $\E\|\bm x_i\|^4=O(D^2)$. Weyl's inequality and
\Cref{eq:ppca-training-covariance-rate} also give
\begin{align*}
\frac1D\sum_{j=1}^K
|\widehat\ell_j-\lambda_j(\bm S_0)|
&\leq\frac KD
\|\widehat{\bm S}_{\train,m}-\bm S_0\|_{\mathrm{op}}\\
&=O_P(n^{-1/2})=o_P(1),
\end{align*}
where we use that $K$ is fixed. Using the trace identities for the sample and population eigenvalues now gives
\begin{align*}
\widehat\sigma_m^2-\sigma_0^2
={}&\frac{\operatorname{tr}
(\widehat{\bm S}_{\train,m}-\bm S_0)}{D-K}-\frac{1}{D-K}\sum_{j=1}^K
\left\{\widehat\ell_j-\lambda_j(\bm S_0)\right\}.
\end{align*}
Both terms are $O_P(n^{-1/2})$, since $D/(D-K)\to1$. Therefore,
$\widehat\sigma_m^2-\sigma_0^2=O_P(n^{-1/2})=o_P(1)$.
Weyl's inequality and \Cref{result:order-of-eigenvalues} then show that $\widehat\ell_K-\widehat\sigma_m^2\asymp_P D$. The positive eigenvalues of
$\widehat{\bm W}_m\widehat{\bm W}_m^\top$ are
$\widehat\ell_j-\widehat\sigma_m^2$, $j=1,\ldots,K$; hence its smallest
positive eigenvalue is of order $D$, proving
\Cref{eq:ppca-fitted-spectrum-rate}.

Finally, insert the two signal-space projectors:
\begin{align*}
\|\widehat{\bm B}_m-\bm B^\star\|_{\mathrm{op}}
\leq{}&
\|\widehat{\bm B}_m-\widehat{\bm P}_m\|_{\mathrm{op}}
+\|\widehat{\bm P}_m-\bm P_{\bm W^\star}\|_{\mathrm{op}}
+\|\bm P_{\bm W^\star}-\bm B^\star\|_{\mathrm{op}}.
\end{align*}
The first and third terms are $O_P(D^{-1})$ and $O(D^{-1})$,
respectively, because the corresponding positive eigenvalues of the loading covariance matrices are of order $D$ while the noise variances are of order one. The middle term is $O_P(n^{-1/2})$ by
\Cref{eq:ppca-davis-kahan-rate}. This proves
\Cref{eq:ppca-estimation-error-B-bound}.
\end{proof}

\section{Additional PPCA Experiments}\label{app:ppca-extra-experiment}

Here we provide an additional experiment for the PPCA example in \Cref{subsec:ppca}. Specifically, we re-run the experiments in \Cref{subsec:ppca} with an alternative diagnostic function, the centered fourth cumulant:
\begin{align*}
d(\bm{X}) = \frac{1}{n}\sum_{i=1}^n \left[ \bm{a}^\top\bm{x}_i - \bm{a}^\top\overline{\bm{x}}\right]^4 - 3 \left(\frac{1}{n}\sum_{i=1}^n \left[ \bm{a}^\top\bm{x}_i - \bm{a}^\top\overline{\bm{x}}\right]^2\right)^2. 
\end{align*}

For these experiments, we run centered PPCA.

\begin{figure}[!htbp]
\begin{subfigure}{0.48\textwidth}
    \centering
    \includegraphics[width=\linewidth]{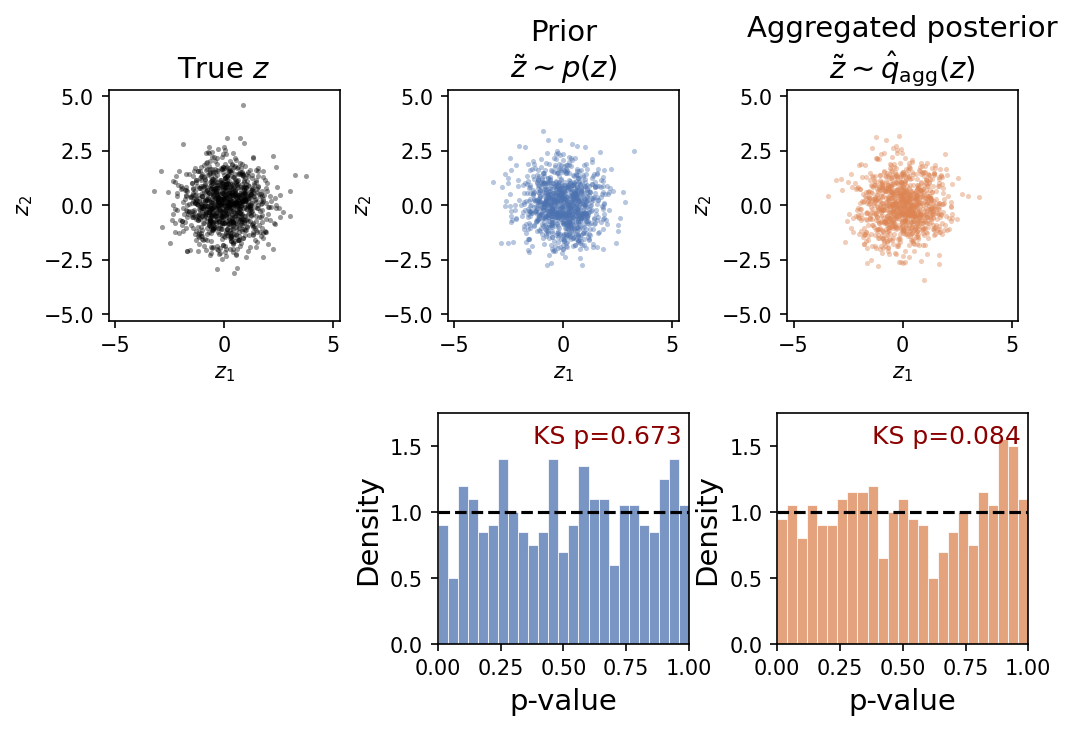}
     \caption{Well-specified case.}
    \label{fig:ppca_summary_ws_fourth}
\end{subfigure}
\begin{subfigure}{0.48\textwidth}
    \centering
    \includegraphics[width=\linewidth]{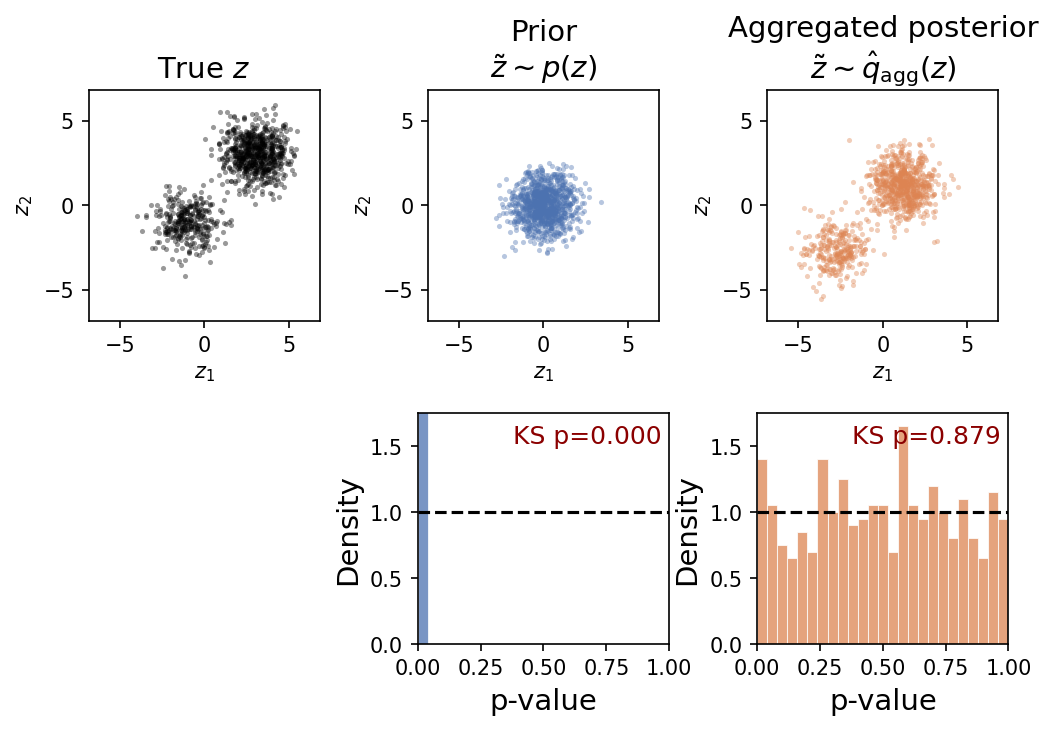}
     \caption{Misspecified case.}
    \label{fig:ppca_summary_ms_fourth}
\end{subfigure}
\caption{\textbf{PPCA example (fourth cumulant diagnostic).} The \gls{APPC} is calibrated; the empirical aggregated posterior sampling strategy generates realistic samples in both well-specified and misspecified cases. Top row: example draws of latent variables $\bm{z}$. Bottom row: $p$-values from predictive checks over 500 trials.}
\end{figure}

\section{Experiment Settings}

\subsection{VAE Details}\label{app:vae-details}

Variational autoencoders (VAEs) \citep{kingma2013auto,rezende2014stochastic} are latent variable generative models designed to learn nonlinear, low-dimensional representations of high-dimensional data. In a VAE, a latent prior and an observation model is specified,
\begin{align*}
\bm z_i\sim\mathcal N_K(\bm 0,I_K), \qquad
\bm x_i\mid\bm z_i\sim p_{\theta}(\bm x_i\mid\bm z_i),
\end{align*}
where the conditional distribution $p_{\theta}(\bm x_i\mid\bm z_i)$, referred to as the decoder, is typically parameterized by a neural network, such that for continuous data,
\begin{align*}
\bm x_i=f_{\theta}(\bm z_i)+\bm\varepsilon_i,
\qquad \bm\varepsilon_i\sim\mathcal N_D\left (0,\,I_D\right).
\end{align*}
Since the posterior distribution $p_\theta(\bm z_i \mid \bm x_i)$ is generally intractable, a VAE approximates it with a variational distribution
\begin{align*}
q_{\phi}(\bm z_i\mid\bm x_i)
= \mathcal N_K\left(
\mu_{\phi}(\bm x_i),\,
\operatorname{diag}\{\sigma_{\phi}^2(\bm x_i)\}
\right),
\end{align*}
where $\mu_{\phi}$ and $\sigma_{\phi}^2$ are typically parameterized by neural networks, referred as the encoder. The parameters $\{\theta,\phi\}$ are learned jointly by maximizing the evidence lower bound (ELBO),
\begin{align*}
\sum_{i=1}^N \left[ \mathbb E_{q_{\phi}(\bm z_i\mid\bm x_i)}
\{\log p_{\theta}(\bm x_i\mid\bm z_i)\}
- D_{\mathrm{KL}} \{q_{\phi}(\bm z_i\mid\bm x_i)\,\|\,p(\bm z_i)\}
\right].
\end{align*}
The first term encourages the latent representation to retain information needed to reconstruct the observations, while the second regularizes the approximate posterior toward the latent prior. Once fitted, the VAE provides both an approximate posterior representation of an observation through $q_{\widehat{\phi}}(\bm z\mid\bm x)$ and a generative mechanism for producing new observations through $p_{\widehat{\theta}}(\bm x\mid\bm z)$.

During training stage, for each observed data point $\bm x_i$, a latent variable is sampled using reparametrization trick:
\begin{align*}
    \bm z_i &\sim q_{\phi}(\bm z_i \mid \bm x_i),\\
    \bm z_i &= \mu_\phi(\bm x_i) + \sigma_\phi(\bm x_i) \odot \varepsilon_i, \qquad \varepsilon_i \sim \mathcal{N}_K\left(0,\,I_K\right)
\end{align*}

\subsection{VAE Datasets}\label{app:syn-datasets}

\textbf{Synthetic datasets:}
\begin{enumerate}
    \item Bivariate-$t_{2,2}$: We generate samples from a bivariate Student's-$t$ distributions with $2$ degrees of freedom, centered at $(3,-3)$. 
    \item Clustered-$t_{2,2,5}$: We generate samples from a $5$-component mixture of bivariate Student's-$t$ distributions with $2$ degrees of freedom. The component means are 
    \begin{align*}
        (-3,-3),\,(3,-3),\,(0,3.5),\,(-3,3),\,(3,3)
    \end{align*}
    and the component scale matrices are diagonal with respective parameters 
    \begin{align*}
        (0.16,0.36),\, (0.25,0.16),\, (0.36,0.25),\, (0.16,0.16),\, (0.25,0.25).
    \end{align*}
    This dataset provides a multimodal heavy-tailed setting with clustered structure.
    \item Latent Gaussian Mixture: We use a Gaussian mixture latent-variable model with $5$ bivariate Gaussian components. The latent mixture means are 
    \begin{align*}
        (-3.8,-3.6),\, (3.8,-3.4),\, (0,4.6),\, (-2.8,3.2),\, (3.2,3.4),
    \end{align*}
    with equal mixing probabilities. The covariance matrices are
    \begin{align*}
    \begin{pmatrix}
    0.55 & 0 \\
    0 & 0.55
    \end{pmatrix},
    \quad
    \begin{pmatrix}
    0.55 & 0 \\
    0 & 0.55
    \end{pmatrix},
    \quad
    \begin{pmatrix}
    0.85 & 0.18 \\
    0.18 & 0.28
    \end{pmatrix},
    \quad
    \begin{pmatrix}
    0.28 & 0 \\
    0 & 0.75
    \end{pmatrix},
    \quad
    \begin{pmatrix}
    0.60 & 0 \\
    0 & 0.30
    \end{pmatrix}.
    \end{align*} 
    Given a latent draw $\bm{z}$, the observations $\bm{x}$ are generated according to:
    \begin{align*}
    \bm{x} \mid \bm{z} \sim \mathcal{N}\left(\bm{A}\bm{z}+\bm{b},\,\Sigma_{\obs}\right):
    \quad
    \bm{A} =
    \begin{pmatrix}
    1.35 & 0.20\\
    0.08 & 1.15
    \end{pmatrix},
    \quad
    \bm{b}=\mathbf 0,
    \quad
    \Sigma_{\mathrm{obs}} =
    \begin{pmatrix}
    0.18 & 0\\
    0 & 0.18
    \end{pmatrix}
    \end{align*}
    This dataset yields a multimodal distribution with clusters exhibiting varying shapes and orientations. However, this dataset isolates multimodality without heavy-tailed within-cluster variation.

\end{enumerate}

\subsection{Synthetic data experiments}\label{app:syn-data-details}

\begin{itemize}
    \item {\bf Clustered $t$.} For numerical stability, we employed few model adjustments, such as, clipping log-variance parameters to $[-10,10]$; while for clustered-t VAE, bounded prior logits before the softmax transformation, and ensured the degrees-of-freedom parameter $\nu > 2.001$.
\end{itemize}

\subsubsection{Additional results: the latent GMM example}
\begin{figure}[!htbp]
    \centering
    \includegraphics[width=0.96\linewidth]{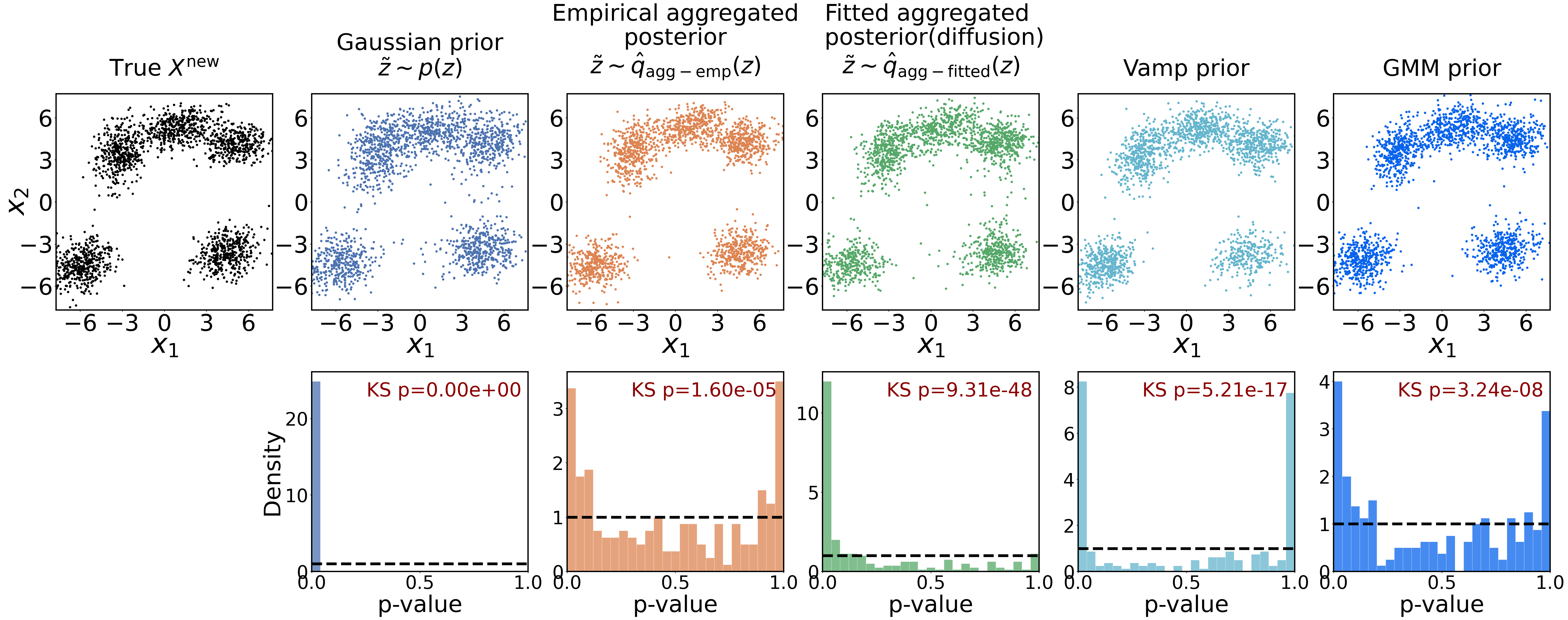}
    \caption{\textbf{GMM.} Top: Draws of data $\bm{x}$. Aggregated posterior with standard VAE (learnable variance, 700 epochs) performs better compared to VAEs with true or mixture-based priors. Bottom: $p$-values from model checks over $200$ trials.  Here, $N_\train = N_\agg = 10000, N_{\new} = 2000$.}
    \label{fig:gmm}
\end{figure}

\subsection{Real data experiments}\label{app:real-data-adjustments}

 We implemented the following modifications for the real datasets to improve the model in producing realistic samples.
\begin{enumerate}
\item For real datasets, as the latent dimensions used were very large, we use a weaker KL regularization by averaging the KL term across latent dimensions \citep{higgins2017betavae}. This mitigates over-regularization and posterior collapse, thereby allowing more flexible latent structure \citep{bowman2016generating}. Hence, the KL regularization term is 
\begin{align*}
\frac{1}{K}\sum_{j=1}^{K}D_{KL}\left(q_{\phi}(\bm{z}_j \mid \bm{x})\,\|\,p(\bm{z}_j)\right),
\end{align*}
where $K$ denotes the latent dimensions and $D_{KL}(.\|.)$ denotes the Kullback-Leibler Divergence between the two distributions $q_{\phi}(\bm{z}_j \mid \bm{x})$ and $p(\bm{z}_j)$.

\item We employ a tempered aggregated posterior for the $i^{th}$ sample by sampling
\begin{align*}
\bm{z}_i = \mu_{\bm{z}_i} + \tau \sigma_{\bm{z}_i} \odot \epsilon_i,
\qquad \epsilon_i \sim \mathcal{N}\left(0,I\right), 
\qquad 0 < \tau < 1
\end{align*}
where $\odot$ denotes elementwise multiplication and $\tau$ scales the posterior variance. Reduced temperature ($\tau$) controls excessive latent variance in high-dimensional latent spaces and reduce overly diffuse latent samples \citep{kingma2018glow, vahdat2020nvae}. For our model, we use $\tau = 0.5$. Consequently, the aggregated posterior is constructed using these tempered latent draws.

\item For both real and synthetic data experiments, to better approximate the aggregated latent distribution when fitting a latent density, we draw multiple posterior samples per data point instead of drawing a single latent sample \citep{mnih2016variational}. For $n$ data points with $M$ posterior samples, the latent sampling is 
\begin{align*}
    \bm{z}_{i}^{(m)} &\sim q_{\phi}(\bm{z} \mid \bm{x}_i), \qquad i=1,\ldots,n,\quad m=1,\ldots,M \\
    \bm{z}_{i}^{(m)} & = \mu_{\bm{z}_i} + \sigma_{\bm{z}_i} \odot \epsilon_i^{(m)}, \qquad \epsilon_i^{(m)} \sim \mathcal{N}\left(0,I\right).
\end{align*}
The latent training pool for latent density estimation is then
\begin{align*}
    \bm{z}^{\mathrm{train}}
= \left\{\bm{z}_i^{(m)} : i=1,\dots,N,\; m=1,\dots,M\right\}.
\end{align*}
This provides a richer approximation of the aggregated posterior distribution $q_{\mathrm{agg}}$ by incorporating posterior uncertainty, rather than relying only on the posterior mean. As a result, the model for density estimation (e.g. latent diffusion) is trained on samples that reflect the full posterior distribution $q_\phi(\bm{z}\mid\bm{x})$ and hence improves coverage of the latent space.

\end{enumerate}

\subsection{Parameters for VAE modeling}

We report the hyperparameters used to run the experiments from \Cref{sec:Experiments}.

The following table has the parameters of the VAE models for each data. We ran experiments with different values of epochs.  
\begin{table}[H]
\centering
\begin{tabularx}{\linewidth}{l X X X X X}
    \toprule
    \textbf{Hyperparameter} & 
     \textbf{Bivariate $t_2$} & 
     \textbf{Bivariate clustered $t_2$} &
     \textbf{Bivariate Gaussian mixture} &
     \textbf{Financial}  &
     \textbf{MNIST} \\
     \midrule
     Input Dimension   
     & 2 & 2 & 2 & 2 & 784 \\
     Latent Dimension  
     & 2 & 2 & 2 & 64 & 64 \\
     Hidden Dimension  
     & 128 & 128 & 128 & 128 & 400 \\
     \textbf{Epochs}
     & 700 & 700 & 700 & 1000 & 350 \\
     Batch Size        
     & 256 & 256 & 256 & 64 & 256 \\
     Learning Rate     
     & 0.001 & 0.001 & 0.001 & 0.0001 & 0.0001 \\
     \bottomrule
\end{tabularx}
\caption{Hyperparameter settings of VAE for the datasets}
\end{table}
The next table has parameters used for latent diffusion modeling (LDM) when estimating the aggregated posterior distribution $\widehat{q}_{\agg}(\bm{z}).$
 \begin{table}[H]
\centering
\begin{tabularx}{\linewidth}{l X X X X X}
    \toprule
    \textbf{Hyperparameter} & 
     \textbf{Bivariate $t_2$} & 
     \textbf{Bivariate clustered $t_2$} &
     \textbf{Bivariate Gaussian mixture} &
     \textbf{Financial}  &
     \textbf{MNIST} \\
     \midrule
     T & 100 & 100 & 100 & 100 & 100 \\
     Hidden Dimension  
     & 128 & 128 & 128 & 128 & 256 \\
     \textbf{Epochs}
     & 700 & 500 & 700 & 300 & 200 \\
     Batch Size        
     & 256 & 256 & 256 & 128 & 256 \\
     Learning Rate     
     & 0.001 & 0.001 & 0.001 & 0.0001 & 0.0001 \\
     \bottomrule
\end{tabularx}
\caption{Hyperparameter settings of LDM for the datasets}
\end{table}

 The hyperparameter values remain the same when using alternative VAE variants - stVAE \citep{24babff9-6288-42ab-a099-4272559768c3}, t3VAE \citep{kim2024t}, clustered t as the prior for VAE, and Vamp VAE \citep{tomczak2018vamp}. We report the additional parameter values that are model specific. A `x' indicates that the corresponding setting is not applicable for the given model or dataset.

 \begin{table}[H]
\centering
\begin{tabularx}{\linewidth}{l X X X X X}
    \toprule
    \textbf{Hyperparameter} & 
     \textbf{stVAE} & 
     \textbf{t3VAE} &
     \textbf{Clustered tVAE} &
     \textbf{VAE with GMM prior} & 
     \textbf{VampVAE} \\
     \midrule
     Degrees of freedom for $t$ & 3 & 3 & x & x & x \\
     MC KL samples & 5 & x & x & x & x \\
     Number of components & x & x & 5 & 5 & x \\
     Number of Pseduo Inputs & x & x & x & x & 64 \\
     \bottomrule
\end{tabularx}
\caption{Hyperparameter settings of models.}
\end{table}

\Cref{tab:p-val-settings} displays the experimental settings and model hyperparameters used to obtain the $p$-values for each data:
\begin{table}[H]
\centering
\begin{tabularx}{\linewidth}{l X X X X X}
    \toprule
    \textbf{Experimental Setting} & 
     \textbf{Bivariate $t_2$} & 
     \textbf{Bivariate clustered $t_2$} &
     \textbf{Bivariate Gaussian mixture} &
     \textbf{Financial}  &
     \textbf{MNIST} \\
     \midrule
     Training samples ($N_\train$)  
     & 10000  & 10000 & 10000 & 4000 & 10000 \\
     Aggregated samples ($N_\agg$)  
     & 10000 & 10000 & 10000 & 4000 & 10000 \\
     Replicate samples ($N_{\new}$) 
     & 2000 & 2000 & 2000 & 2000 & 2000 \\
     Number of trials ($N_{\mathrm{trial}}$)
     & 200  & 200 & 200 & 200 & 200  \\
     Number of repetitions ($B$)        
     & 100 & 100 & 100 & 100 & 100  \\[2mm]
     MC LDM samples    
     & 20 & 20 & 20 & 20 & 20  \\
     Temperature ($\tau$)    
     & x & x & x & 0.5 & 0.5 \\
     Diagnostic tail threshold ($t$)   
     & x & x & x & 1.5 & x \\
     \bottomrule
\end{tabularx}
\caption{Experimental settings for the datasets}
\label{tab:p-val-settings}
\end{table}

%% file: bib/references.bib
@inproceedings{24babff9-6288-42ab-a099-4272559768c3,
	title        = {{Variational auto-encoders with Student’s t-prior}},
	author       = {{Abiri, Najmeh and Ohlsson, Mattias}},
	year         = {{2019}},
	booktitle    = {{European Symposium on Artificial Neural Networks}},
	isbn         = {{978-287-587-065-0}},
	url          = {{https://www.elen.ucl.ac.be/Proceedings/esann/esannpdf/es2019-42.pdf}},
	language     = {{eng}}
}

@article{bai2003inferential,
	title        = {{Inferential theory for factor models of large dimensions}},
	author       = {Bai, Jushan},
	year         = 2003,
	journal      = {{Econometrica}},
	publisher    = {Wiley Online Library},
	volume       = 71,
	number       = 1,
	pages        = {135--171}
}

@article{bayarri2000p,
	title        = {{P values for composite null models}},
	author       = {Bayarri, MJ and Berger, James O},
	year         = 2000,
	journal      = {{Journal of the American Statistical Association}},
	publisher    = {Taylor \& Francis},
	volume       = 95,
	number       = 452,
	pages        = {1127--1142}
}

@inproceedings{bowman2016generating,
	title        = {{Generating sentences from a continuous space}},
	author       = {Bowman, Samuel and Vilnis, Luke and Vinyals, Oriol and Dai, Andrew and Jozefowicz, Rafal and Bengio, Samy},
	year         = 2016,
	booktitle    = {{Proceedings of the 20th SIGNLL conference on Computational Natural Language Learning}}
}

@inproceedings{dai2019diagnosing,
	title        = {{Diagnosing and Enhancing {VAE} Models}},
	author       = {Bin Dai and David Wipf},
	year         = 2019,
	booktitle    = {{International Conference on Learning Representations}},
	url          = {https://openreview.net/forum?id=B1e0X3C9tQ}
}

@article{efron2019bayes,
	title        = {{Bayes, oracle Bayes and empirical Bayes}},
	author       = {Efron, Bradley},
	year         = 2019,
	journal      = {{Statistical Science}},
	volume       = 34,
	number       = 2,
	pages        = {177--201}
}

@book{embrechts2013modelling,
	title        = {{Modelling extremal events: for insurance and finance}},
	author       = {Embrechts, Paul and Kl{\"u}ppelberg, Claudia and Mikosch, Thomas},
	year         = 2013,
	publisher    = {Springer Science \& Business Media},
	volume       = 33
}

@article{falck2021multi,
	title        = {{Multi-facet clustering variational autoencoders}},
	author       = {Falck, Fabian and Zhang, Haoting and Willetts, Matthew and Nicholson, George and Yau, Christopher and Holmes, Chris C},
	year         = 2021,
	journal      = {{Neural Information Processing Systems}}
}

@article{lecun2010mnist,
	title        = {{MNIST handwritten digit database}},
	author       = {LeCun, Yann and Cortes, Corinna and Burges, CJ},
	year         = 2010,
	journal      = {{ATT Labs [Online]. Available: http://yann.lecun.com/exdb/mnist}}
}

@article{shen2026simulation,
	title        = {{Simulation-Based Empirical Bayes}},
	author       = {Shen, Xinwei and Cai, Diana and Zhang, Cheng and Blei, David M},
	year         = 2026,
	journal      = {{arXiv preprint arXiv:2607.21843}}
}

@article{fan2013large,
	title        = {{Large covariance estimation by thresholding principal orthogonal complements}},
	author       = {Fan, Jianqing and Liao, Yuan and Mincheva, Martina},
	year         = 2013,
	journal      = {{Journal of the Royal Statistical Society Series B: Statistical Methodology}},
	publisher    = {Oxford University Press},
	volume       = 75,
	number       = 4,
	pages        = {603--680}
}

@article{yu2015useful,
	title        = {{A Useful Variant of the {Davis--Kahan} Theorem for Statisticians}},
	author       = {Yu, Yi and Wang, Tengyao and Samworth, Richard J.},
	year         = 2015,
	journal      = {{Biometrika}},
	volume       = 102,
	number       = 2,
	pages        = {315--323},
	doi          = {10.1093/biomet/asv008}
}

@article{fuest2026diffusion,
	title        = {{Diffusion models and representation learning: A survey}},
	author       = {Fuest, Michael and Ma, Pingchuan and Gui, Ming and Schusterbauer, Johannes and Hu, Vincent Tao and Ommer, Bj{\"o}rn},
	year         = 2026,
	journal      = {{IEEE Transactions on Pattern Analysis and Machine Intelligence}},
	publisher    = {IEEE}
}

@article{gelman1996posterior,
	title        = {{Posterior predictive assessment of model fitness via realized discrepancies}},
	author       = {Gelman, Andrew and Meng, Xiao-Li and Stern, Hal},
	year         = 1996,
	journal      = {{Statistica Sinica}},
	pages        = {733--760}
}

@article{heusel2017gans,
	title        = {{GANs trained by a two time-scale update rule converge to a local Nash equilibrium}},
	author       = {Heusel, Martin and Ramsauer, Hubert and Unterthiner, Thomas and Nessler, Bernhard and Hochreiter, Sepp},
	year         = 2017,
	journal      = {{Neural Information Processing Systems}}
}

@inproceedings{higgins2017betavae,
	title        = {{beta-{VAE}: Learning Basic Visual Concepts with a Constrained Variational Framework}},
	author       = {Irina Higgins and Loic Matthey and Arka Pal and Christopher Burgess and Xavier Glorot and Matthew Botvinick and Shakir Mohamed and Alexander Lerchner},
	year         = 2017,
	booktitle    = {{International Conference on Learning Representations}},
	url          = {https://openreview.net/forum?id=Sy2fzU9gl}
}

@article{ho2020denoising,
	title        = {{Denoising diffusion probabilistic models}},
	author       = {Ho, Jonathan and Jain, Ajay and Abbeel, Pieter},
	year         = 2020,
	journal      = {{Neural Information Processing Systems}}
}

@inproceedings{kim2024t,
	title        = {{$t3$-Variational Autoencoder: Learning Heavy-tailed Data with Student's t and Power Divergence}},
	author       = {Kim, Juno and Kwon, Jaehyuk and Cho, Mincheol and Lee, Hyunjong and Won, Joong-Ho},
	year         = 2024,
	booktitle    = {{International Conference on Learning Representations}}
}

@inproceedings{kingma2013auto,
	title        = {{Auto-encoding variational {B}ayes}},
	author       = {Kingma, Diederik P and Welling, Max},
	year         = 2014,
	booktitle    = {{International Conference on Learning Representations}}
}

@article{davis1970rotation,
	title        = {{The Rotation of Eigenvectors by a Perturbation. III}},
	author       = {Davis, Chandler and Kahan, W. M.},
	year         = 1970,
	journal      = {{SIAM Journal on Numerical Analysis}},
	volume       = 7,
	number       = 1,
	pages        = {1--46},
	doi          = {10.1137/0707001}
}

@article{kingma2018glow,
	title        = {{Glow: Generative flow with invertible 1x1 convolutions}},
	author       = {Kingma, Durk P and Dhariwal, Prafulla},
	year         = 2018,
	journal      = {{Neural Information Processing Systems}}
}

@inproceedings{li2021low,
	title        = {{On the low-density latent regions of VAE-based language models}},
	author       = {Li, Ruizhe and Peng, Xutan and Lin, Chenghua and Rong, Wenge and Chen, Zhigang},
	year         = 2021,
	booktitle    = {{NeurIPS 2020 Workshop on Pre-registration in Machine Learning}}
}

@article{li2026calibrated,
	title        = {{Calibrated model criticism using split predictive checks}},
	author       = {Li, Jiawei and Huggins, Jonathan H},
	year         = 2026,
	journal      = {{Journal of the American Statistical Association}},
	publisher    = {Taylor \& Francis},
	number       = {just-accepted},
	pages        = {1--21}
}

@inproceedings{locatello2020weakly,
	title        = {{Weakly-supervised disentanglement without compromises}},
	author       = {Locatello, Francesco and Poole, Ben and R{\"a}tsch, Gunnar and Sch{\"o}lkopf, Bernhard and Bachem, Olivier and Tschannen, Michael},
	year         = 2020,
	booktitle    = {{International Conference on Machine Learning}}
}

@inproceedings{mittal2023diffusion,
	title        = {{Diffusion based representation learning}},
	author       = {Mittal, Sarthak and Abstreiter, Korbinian and Bauer, Stefan and Sch{\"o}lkopf, Bernhard and Mehrjou, Arash},
	year         = 2023,
	booktitle    = {{International Conference on Machine Learning}}
}

@inproceedings{mnih2016variational,
	title        = {{Variational inference for Monte Carlo objectives}},
	author       = {Mnih, Andriy and Rezende, Danilo},
	year         = 2016,
	booktitle    = {{International Conference on Machine Learning}}
}

@article{moran2024a,
	title        = {{Holdout predictive checks for Bayesian model criticism}},
	author       = {Moran, Gemma E and Blei, David M and Ranganath, Rajesh},
	year         = 2024,
	journal      = {{Journal of the Royal Statistical Society Series B: Statistical Methodology}},
	volume       = 86,
	pages        = {194--214}
}

@inproceedings{ramesh2021zero,
	title        = {{Zero-shot text-to-image generation}},
	author       = {Ramesh, Aditya and Pavlov, Mikhail and Goh, Gabriel and Gray, Scott and Voss, Chelsea and Radford, Alec and Chen, Mark and Sutskever, Ilya},
	year         = 2021,
	booktitle    = {{International Conference on Machine Learning}}
}

@book{resnick2007heavy,
	title        = {{Heavy-tail phenomena: probabilistic and statistical modeling}},
	author       = {Resnick, Sidney I},
	year         = 2007,
	publisher    = {Springer}
}

@inproceedings{rezende2014stochastic,
	title        = {{Stochastic backpropagation and approximate inference in deep generative models}},
	author       = {Rezende, Danilo Jimenez and Mohamed, Shakir and Wierstra, Daan},
	year         = 2014,
	booktitle    = {{International Conference on Machine Learning}}
}

@inproceedings{rezende2015variational,
	title        = {{Variational inference with normalizing flows}},
	author       = {Rezende, Danilo and Mohamed, Shakir},
	year         = 2015,
	booktitle    = {{International Conference on Machine Learning}}
}

@article{rezende2018taming,
	title        = {{Taming VAEs}},
	author       = {Rezende, Danilo Jimenez and Viola, Fabio},
	year         = 2018,
	journal      = {{arXiv preprint arXiv:1810.00597}}
}

@article{robins2000asymptotic,
	title        = {{Asymptotic distribution of p values in composite null models}},
	author       = {Robins, James M and van der Vaart, Aad and Ventura, Val{\'e}rie},
	year         = 2000,
	journal      = {{Journal of the American Statistical Association}},
	publisher    = {Taylor \& Francis},
	volume       = 95,
	number       = 452,
	pages        = {1143--1156}
}

@inproceedings{rombach2022high,
	title        = {{High-resolution image synthesis with latent diffusion models}},
	author       = {Rombach, Robin and Blattmann, Andreas and Lorenz, Dominik and Esser, Patrick and Ommer, Bj{\"o}rn},
	year         = 2022,
	booktitle    = {{Computer Vision and Pattern Recognition}}
}

@article{Rubin:1984,
	title        = {{Bayesianly Justifiable and Relevant Frequency Calculations for the Applied Statistician}},
	author       = {Rubin, D.},
	year         = 1984,
	journal      = {{The Annals of Statistics}},
	volume       = 12,
	number       = 4,
	pages        = {1151--1172}
}

@article{salimans2016improved,
	title        = {{Improved techniques for training GANs}},
	author       = {Salimans, Tim and Goodfellow, Ian and Zaremba, Wojciech and Cheung, Vicki and Radford, Alec and Chen, Xi},
	year         = 2016,
	journal      = {{Neural Information Processing Systems}}
}

@article{seth2019model,
	title        = {{Model Criticism in Latent Space}},
	author       = {Seth, Sohan and Murray, Iain and Williams, Christopher KI},
	year         = 2019,
	journal      = {{Bayesian Analysis}},
	volume       = 14,
	number       = 3,
	pages        = {703--725}
}

@article{snell2017prototypical,
	title        = {{Prototypical networks for few-shot learning}},
	author       = {Snell, Jake and Swersky, Kevin and Zemel, Richard},
	year         = 2017,
	journal      = {{Neural Information Processing Systems}}
}

@inproceedings{sohl2015deep,
	title        = {{Deep unsupervised learning using nonequilibrium thermodynamics}},
	author       = {Sohl-Dickstein, Jascha and Weiss, Eric and Maheswaranathan, Niru and Ganguli, Surya},
	year         = 2015,
	booktitle    = {{International Conference on Machine Learning}}
}

@article{tam2025statistical,
	title        = {{On the statistical capacity of deep generative models}},
	author       = {Tam, Edric and Dunson, David B},
	year         = 2025,
	journal      = {{arXiv preprint arXiv:2501.07763}}
}

@inproceedings{Theis2016a,
	title        = {{A note on the evaluation of generative models}},
	author       = {L. Theis and A. van den Oord and M. Bethge},
	year         = 2016,
	month        = {Apr},
	booktitle    = {{International Conference on Learning Representations}},
	url          = {{http://arxiv.org/abs/1511.01844}}
}

@article{tipping1999probabilistic,
	title        = {{Probabilistic principal component analysis}},
	author       = {Tipping, Michael E and Bishop, Christopher M},
	year         = 1999,
	journal      = {{Journal of the Royal Statistical Society Series B: Statistical Methodology}},
	publisher    = {Oxford University Press},
	volume       = 61,
	number       = 3,
	pages        = {611--622}
}

@inproceedings{tomczak2018vamp,
	title        = {{VAE with a VampPrior}},
	author       = {Tomczak, Jakub and Welling, Max},
	year         = 2018,
	booktitle    = {{Artificial Intelligence and Statistics}},
	url          = {https://proceedings.mlr.press/v84/tomczak18a.html}
}

@article{vahdat2020nvae,
	title        = {{NVAE: A deep hierarchical variational autoencoder}},
	author       = {Vahdat, Arash and Kautz, Jan},
	year         = 2020,
	journal      = {{Neural Information Processing Systems}}
}

@article{van2017neural,
	title        = {{Neural Discrete Representation Learning}},
	author       = {van den Oord, Aaron and Vinyals, Oriol and Kavukcuoglu, Koray},
	year         = 2017,
	journal      = {{Neural Information Processing Systems}}
}
